\theoremstyle {plain}
\newtheorem {Thrm}{Theorem}[section]
\newtheorem {Lem}{Lemma}[section]
\newtheorem {Corol}{Corollary}[section]
\newtheorem {Prop}{Proposition}[section]
\theoremstyle {definition}
\newtheorem{Rem}{Remark}
\newtheorem {Def}{Definition}[section]
\numberwithin {equation}{section}
\newcommand{\bft}{\textbf}
\newcommand{\bsl}{\mathbf}
\newcommand{\ep}{\varepsilon}
\newcommand {\del}{\nabla}
\newcommand{\curl}{\textnormal {\bft {curl}}}
\newcommand{\dive}{\textnormal{div}}
\newcommand{\torus}{\mathbb {T}}
\newcommand{\uep}{\bsl {u}^\ep}
\newcommand{\dx}{\mathrm {d}\bsl {x}}
\newcommand{\dy}{\mathrm {d}\bsl {y}}
\newcommand{\nbb}{\mathbb {N}}
\newcommand{\xy}{(\bsl {x},\bsl {y})}
\newcommand{\bphi}{ \boldsymbol {\varphi}}
\newcommand{\avalph}{\langle \alpha \rangle}
\newcommand{\avalphin}{\langle \alpha^{-1}\rangle ^{-1}}
\newcommand{\scalemath}[2]{\scalebox{#1}{\mbox{\ensuremath{\displaystyle #2}}}}
\def\XXint#1#2#3{{\setbox0=\hbox{$#1{#2#3}{\int}$ }
\vcenter{\hbox{$#2#3$ }}\kern-.6\wd0}}
\def\blfootnote{\xdef\@thefnmark{}\@footnotetext}
\begin{document}

\title{\sc  Full two-scale asymptotic expansion and higher-order constitutive laws in the homogenisation of the system of Maxwell equations}
\author[1]{Kirill D. Cherednichenko}
\author[2]{James A. Evans}
\affil[1]{Department of Mathematical Sciences, University of Bath, Claverton Down, Bath, BA2 7AY, UK}
\affil[2]{Cardiff School of Mathematics, Cardiff University, Senghennydd Road, CF24 4AG, UK}

\maketitle

\begin{abstract}
For the system of Maxwell equations of electromagnetism in an $l$-periodic composite medium of overall size $L$ ($0<l<L<\infty$), in the low-frequency quasistatic approximation, we develop an electromagnetic version of strain-gradient theories, where the magnetic field is not a function of the magnetic induction alone but also of its spatial gradients, and the electric field depends not only on the displacement but also on displacement gradients. 
Following the work (Smyshlyaev, V.P., Cherednichenko, K.D., 2000. On rigorous derivation of strain gradient effects in the overall behaviour of periodic heterogeneous media, {\it J. Mech. Phys. Solids} {\bf 48}, 1325--1357), we develop a combination of variational and asymptotic approaches to the multiscale analysis of the Maxwell system. We provide rigorous convergence estimates of higher order of smallness with respect to the inverse of the ``scale separation parameter" $L/l.$ Using a special ``ensemble averaging'' procedure for a family of periodic problems, we derive an infinite-order version of the standard
homogenised operator of second order.
  
\end{abstract}

\blfootnote{\textbf{Keywords:}\ Homogenisation, Maxwell equations, constitutive laws, asymptotic analysis.}

\blfootnote{\textsc{AMS subject classification:}\ 	 35Q61, 78A48, 74Q15, 41A60}


	\section {Introduction}
	Heterogeneous media with microstructure (``composites''), which in the simplest case involve two length-scales: ``macroscopic'' $L$ and ``microscopic'' $l,$ so that $L\gg l,$ have been a focus of attention across engineering and physical sciences 
	since the advent of quantum mechanics and the subsequent development of the ideas of ``upscaling'' and ``course-graining'' in the quantitative description of material properties.  It is widely known that under the assumptions of periodicity  (where both $L,$ $l$ are periods in the variation of material properties) and strict separation of scales, {\it i.e.} in the regime when $l/L\to0$ is a suitable approximation, the behaviour of composites is fully determined by the solution of the so-called ``cell problems'' on the period of the composite.  
	
It has also been noticed in the context of elastic solids (see {\it e.g.} \cite{FH}) that when the ratio  $l/L$ is ``small but not too small'', additional terms need to be taken into account in order to obtain an accurate picture of the overall response of the composite. Mathematically speaking, the solution to the cell problem does not suffice when the length-scales involved are not well separated from each other, and a more sophisticated ``higher-order'' averaging framework has to be invoked, as was shown in \cite{bib13}, \cite{CherSm_2004}.  

In the present work, we develop the higher-order approach for the analysis of the classical system of Maxwell equations. 
Our motivation for doing so stems from the increasing interest to ``metamaterials'' (see {\it e.g.} \cite{RamakrishnaGrzegorczyk} for an overview) and from the indication (see \cite{Zhikov2000}, \cite{CherCooper}) that the quantitative  description of the effect of length-scale interactions on the overall behaviour of composites is key to developing new ways of manufacturing metamaterials. 

The centrepiece of the homogenisation theory for second-order elliptic equations (see {\it e.g.}  
\cite {bib39}, 
\cite {bib25}, 
\cite {bib5})
is the derivation of the homogenised equation that captures the ``effective'' behaviour of the original problem by ``averaging out"  small-scale oscillations. 
By contrast, the main products of our approach are the ``infinite-order homogenised equation'', see (\ref{hom1}), (\ref{EL4}), involving  
a sum of higher-order gradients of the classical homogenised solution, with coefficients given by increasing powers of the scale parameter 
$\varepsilon:=l/L,$
from which the classical homogenised equation is obtained as the  truncation of lowest order, and the ``homogenised equation of higher order" obtained as the Euler-Lagrange equation for the ``higher-order homogenised variational problem" (\ref{UK1}).

The use of asymptotic expansions in understanding 
size effects in periodic media 
was proposed in 
\cite {bib114}, 
\cite {bib112}, 
where 
the solution to linearised elasticity equations in a periodic composite medium was sought in the form of a two-scale asymptotic series expansion whose terms depend on the macroscopic $\bft{x}$ and microscopic $\bft{x}/\varepsilon$ variable.
Using a special averaging procedure, 
a set of 
``higher-order stress-strain relations'' for such media  is derived in
\cite {bib114} 
and
\cite {bib112},  where the stress is a function of not only strain but also of gradients of strain. 
The advantage of the asymptotic approach in deriving such constitutive laws is the availability of rigorous error estimates, for small values of  $\ep,$ for the difference between the actual strain and stress and their higher-order versions.  
However, this approach also has two drawbacks from the perspective 
of numerical implementation
the higher-order expansions 
are only expected to be accurate if $\ep$ is sufficiently small 
and 
may  lead non-elliptic differential equations, as shown in  
\cite {bib111}.

Following 
\cite {bib13}, 
who study the equations of elasticity, we develop a higher-order homogenisation framework for the system of Maxwell equations,
via a combination of asymptotic and variational calculi. In the first part of our analysis, Section \ref{sec112},
we provide a higher-order extension of the classical two-scale asymptotic approach, including the higher-order version of the standard convergence estimates.
In the second part (Section \ref{sec12}), we use a version of ``variational asymptotics"  to derive higher-order homogenised equations that are elliptic by construction. This is achieved by using a set of trial fields suggested by the asymptotic analysis in a family of variational formulations associated with the problem for the original composite medium. 
Furthermore, we show that the corresponding higher-order variational solution is ``close",  in a certain variational sense (Proposition \ref{prop3}), to the solution of the original problem. 
The main idea, which we adopt from  \cite {bib13}, is to cancel the effect of the rapid oscillations in higher-order terms via the ``ensemble averaging'' of a family of problems obtained
by shifting the fast variable of the original problem. 

By analogy with \cite {bib13}, the asymptotic and variational approaches result in equivalent ``infinite-order'' homogenised equation (Sections \ref{sec113} and \ref{sec125}), however the proof of this fact in the Maxwell case requires a special tensor symmetrisation procedure, which we carry out in Section \ref{sec126}.

Finally, in Section \ref{sec13}, we derive the infinite-order effective constitutive relations between the magnetic field and induction and between electric field and displacement. In analogy with the elasticity case, the magnetic field depends not only on the magnetic induction but also on its spatial derivatives, with coefficients that become more significant for larger values of $\varepsilon.$ This leads to the generalisation of the notion of macroscopic magnetic permeability, in the form of an ``effective permeability operator'', whose inverse is a differential operator of infinite order, see Section \ref{sec133}. 


  
In the main body of our work (Sections \ref {sec11}--\ref{sec12}) we focus on the pair of equations governing the behaviour of the electric field when the electric permittivity is constant. Similar analysis, however, applies to the pair of equations for the magnetic field, where electric permittivity rapidly oscillates and magnetic permeability is constant, subject to additional considerations due to the presence of an oscillatory current density term, as 
 discussed in Section \ref {sec134}. By analogy with the case of the magnetic field and induction, we derive 
 an infinite-order ``effective permittivity operator'', which links the overall electric field and displacement.

	\section {Formulation of the problem}\label {sec11}


			In what follows we study the vector equation
				\begin {equation}\label {firsteq}
					\textbf {curl}\Big\{A\Big (\frac {\bsl {x}}{\varepsilon}\Big )\textbf {curl}\ \bsl {u}^{\varepsilon} ( \bsl {x} ) \Big\}=\bsl {f}(\bsl {x}),\ \ \ \ \ \ \bsl {x}\in \mathbb {T}:=[0,T]^3,\ \ 
					\varepsilon, T>0,\ \ T/\varepsilon\in{\mathbb N},
				\end {equation}
			where the coefficient matrix $A$ is assumed to be measurable, $Q$-periodic, symmetric:
			\[
			A_{ij}(\bsl {y})=A_{ji}(\bsl {y})\ \ \ \ \forall\,{\bsl y}\in Q:=[0,1)^3,\ \ i,j=1,2,3,
			\]
			bounded and uniformly elliptic: 
	
	\[
	\exists\,\nu>0:\ \ \nu | \boldsymbol {\xi} |^2\leq A_{ij}(\bsl {y})\xi_i\xi_j \leq \nu^{-1}|\boldsymbol {\xi} |^2\ \ \ \ \forall\,\boldsymbol {\xi} \in \mathbb {R}^3,\,{\bsl y}\in Q.
	\]
	Throughout the paper, we use the notation $\hat{A}^\varepsilon(\cdot):=A(\cdot/\varepsilon)$ for matrix functions $A$ with the above properties.
	We take right-hand sides ${\bsl f}$
	that are infinitely smooth in ${\mathbb R}^3,$ ${\mathbb T}$-periodic, divergence-free and have zero mean. 
The equation (\ref {firsteq}) describes the behaviour of the electric component of an electromagnetic field in the quasistatic approximation (see 
		Section \ref{sec133}). Here the quantities $A,$ $\uep,$ and $\bsl {f}$ represent the inverse of the magnetic permeability $\hat {\mu}$, the electric field  $\bsl {E}_1^\ep$ and the current density $-\bsl {J}_0,$ respectively,  at each point $\bsl{x}\in{\mathbb R}^3 ({\rm mod}\,{\mathbb T}).$ In terms of the ``size parameters'' $l, L$ mentioned in the previous section, we assume that $\varepsilon/T=l/L.$ 

			Define the space $H^1_{\curl}(\torus)$ to be the closure of the set $\bigl[C^\infty_{\rm per}(\torus)\bigr]^3$ of infinitely differentiable ${\mathbb T}$-periodic functions with respect to the norm
				\begin {equation}
				\label {norm1}
					\| \bsl {u}\|_{H^1_{\curl}(\torus)}=\| \bsl {u}\|_{[L^2(\torus)]^3}+\| \curl\,\bsl {u}\|_{[L^2(\torus)]^3}.
				\end {equation}
				We shall study weak solutions $\bsl {u}^{\ep}$ to (\ref{firsteq})
			in the space
				\begin{equation}
				{\mathcal X}(\torus ):=\big \{ \bsl {u}\in \bigl[L^2(\mathbb {T})\bigr]^3\,\big|\ 
				\dive\,\bsl {u}=0,\ \langle \bsl {u}\rangle_\torus =\bsl {0}\big \}\cap H^1_{\curl}(\mathbb {T}),
				\label{spaceX}
				\end{equation}
			 so that the identity
			 \begin {equation*}
		\int_{\mathbb {T}}\hat{A}^\varepsilon
		\textbf {curl}\,\bsl {u}^{\varepsilon}\cdot \textbf {curl}\,\boldsymbol {\varphi}
		=\int_{\mathbb {T}}\bsl {f}
		\cdot \boldsymbol {\varphi}
		\ \ \ \ \ \forall\, \boldsymbol {\varphi} \in\bigl[C^\infty_{\rm per}(\mathbb{T})\bigr]^3
	          	\end {equation*}
		holds. The divergence ${\rm div}$ in (\ref{spaceX}) is understood in the sense of distributions.
		Henceforth, we use the notation $\langle\ \cdot\ \rangle_{\mathbb T}$, $\langle\ \cdot\ \rangle$ for 
		averages (mean values) over ${\mathbb T}$ and $Q,$ respectively: 
		\[
		\langle f\rangle_{\mathbb T}:=\frac {1}{|{\mathbb T}|}\int_{\mathbb T}f(\bsl {x})\,\dx,\ \ \ \langle g\rangle:=\frac {1}{|Q|}\int_Qg(\bsl {y})\,\dy=\int_Qg(\bsl {y})\,\dy,
		\]
and the averages of vector quantities are taken component-wise.
 Note that ${\mathcal X}(\torus)$ equipped with the norm (\ref{norm1}) is a Sobolev space. For all $\ep >0$, the equation (\ref {firsteq}) is well posed in $X(\torus ),$ as shown next. 
				\begin {Thrm}\label {unique1}
					For all ${\bsl f}\in\bigl[C^\infty_{\rm per}(\mathbb {T})\bigr]^3\cap {\mathcal X}(\torus)$, there exists a unique solution 
					${\bsl u}^{\varepsilon}\in {\mathcal X}(\mathbb {T})$  to (\ref{firsteq}).
				\end {Thrm}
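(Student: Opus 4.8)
The plan is to recast (\ref{firsteq}) as a coercive variational identity on the Hilbert space $\mathcal{X}(\torus)$ and to apply the Lax--Milgram lemma. First I would record that $\mathcal{X}(\torus)$, equipped with the inner product $(\bsl{u},\bsl{v})\mapsto\int_\torus\bsl{u}\cdot\bsl{v}\,\dx+\int_\torus\curl\,\bsl{u}\cdot\curl\,\bsl{v}\,\dx$, is a Hilbert space whose norm is equivalent to (\ref{norm1}): the constraints $\dive\,\bsl{u}=0$ (distributionally) and $\langle\bsl{u}\rangle_\torus=\bsl{0}$ are preserved under $[L^2(\torus)]^3$-limits, so $\mathcal{X}(\torus)$ is a closed subspace of $H^1_{\curl}(\torus)$. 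On it, define $B(\bsl{u},\bphi):=\int_\torus\hat{A}^\ep\,\curl\,\bsl{u}\cdot\curl\,\bphi\,\dx$ and $\ell(\bphi):=\int_\torus\bsl{f}\cdot\bphi\,\dx$. Boundedness of $B$ follows at once from the upper bound $A\le\nu^{-1}$ and Cauchy--Schwarz, and $\ell$ is bounded since $\bsl{f}\in[L^2(\torus)]^3$.

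The crux is coercivity of $B$ on $\mathcal{X}(\torus)$, which reduces to a Poincaré-type inequality for the curl: there is $C=C(T)$ with $\|\bsl{u}\|_{[L^2(\torus)]^3}\le C\,\|\curl\,\bsl{u}\|_{[L^2(\torus)]^3}$ for all $\bsl{u}\in\mathcal{X}(\torus)$. I would prove this by Fourier series: writing $\bsl{u}(\bsl{x})=\sum_{\bsl{k}\in\zbb^3}\hat{\bsl{u}}_{\bsl{k}}\,\rme^{2\pi\mathrm{i}\,\bsl{k}\cdot\bsl{x}/T}$, the zero-mean condition gives $\hat{\bsl{u}}_{\bsl{0}}=\bsl{0}$, while $\dive\,\bsl{u}=0$ gives $\bsl{k}\cdot\hat{\bsl{u}}_{\bsl{k}}=0$ for every $\bsl{k}$; hence $|\bsl{k}\times\hat{\bsl{u}}_{\bsl{k}}|=|\bsl{k}|\,|\hat{\bsl{u}}_{\bsl{k}}|$ and, since $|\bsl{k}|\ge 1$ for $\bsl{k}\ne\bsl{0}$,
\[
\|\curl\,\bsl{u}\|^2_{[L^2(\torus)]^3}=\Big(\tfrac{2\pi}{T}\Big)^2|\torus|\sum_{\bsl{k}\ne\bsl{0}}|\bsl{k}|^2|\hat{\bsl{u}}_{\bsl{k}}|^2\ \ge\ \Big(\tfrac{2\pi}{T}\Big)^2|\torus|\sum_{\bsl{k}\ne\bsl{0}}|\hat{\bsl{u}}_{\bsl{k}}|^2=\Big(\tfrac{2\pi}{T}\Big)^2\|\bsl{u}\|^2_{[L^2(\torus)]^3}.
\]
Together with the ellipticity lower bound $A\ge\nu$ this yields $B(\bsl{u},\bsl{u})\ge\nu\|\curl\,\bsl{u}\|^2_{[L^2(\torus)]^3}\ge c\,\|\bsl{u}\|^2_{\mathcal{X}(\torus)}$ for some $c>0$ depending only on $\nu$ and $T$. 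The Lax--Milgram lemma then produces a unique $\uep\in\mathcal{X}(\torus)$ with $B(\uep,\bphi)=\ell(\bphi)$ for all $\bphi\in\mathcal{X}(\torus)$.

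It remains to upgrade this identity from test fields in $\mathcal{X}(\torus)$ to arbitrary $\bphi\in\bigl[C^\infty_{\rm per}(\torus)\bigr]^3$, as the definition of weak solution requires. For this I would use the Weyl--Helmholtz decomposition on the torus: each smooth periodic $\bphi$ splits as $\bphi=\langle\bphi\rangle_\torus+\nabla p+\bpsi$ with $p$ periodic (solving $\Delta p=\dive\,\bphi$, whose right-hand side has zero mean) and $\bpsi$ smooth, periodic, divergence-free and of zero mean, so $\bpsi\in\mathcal{X}(\torus)$. Since $\curl$ annihilates constants and gradients, $B(\uep,\bphi)=B(\uep,\bpsi)$; on the other side, $\int_\torus\bsl{f}\cdot\langle\bphi\rangle_\torus\,\dx=\langle\bphi\rangle_\torus\cdot\int_\torus\bsl{f}\,\dx=0$ because $\bsl{f}$ has zero mean, and $\int_\torus\bsl{f}\cdot\nabla p\,\dx=-\int_\torus(\dive\,\bsl{f})\,p\,\dx=0$ because $\bsl{f}$ is divergence-free (periodicity kills the boundary term), so $\ell(\bphi)=\ell(\bpsi)$. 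Thus the Lax--Milgram identity applied with $\bpsi$ gives $B(\uep,\bphi)=\ell(\bphi)$ for all smooth periodic $\bphi$, and uniqueness in $\mathcal{X}(\torus)$ is inherited from Lax--Milgram (or, directly, from coercivity applied to a difference of two solutions). The step I expect to demand the most care is the curl-Poincaré inequality behind coercivity — specifically the use of the divergence-free constraint to obtain the orthogonality $\bsl{k}\perp\hat{\bsl{u}}_{\bsl{k}}$ that lets $\curl$ control the full $[L^2(\torus)]^3$ norm — together with the clean bookkeeping of the Helmholtz decomposition needed to reach general test functions.
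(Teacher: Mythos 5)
Your proof is correct and follows essentially the same route as the paper: Lax--Milgram on $\mathcal{X}(\torus)$, with coercivity supplied by exactly the Fourier-series curl--Poincar\'e inequality that the paper isolates as Lemma \ref{lemma_Maxwell} in Appendix A (combined with the divergence-free constraint) and continuity from boundedness of $A$. Your additional Helmholtz-decomposition step upgrading the identity to arbitrary smooth periodic test fields is a careful touch the paper leaves implicit, but it does not change the substance of the argument.
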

				\begin {proof}
					Define a bilinear form ${\mathfrak b}$
					on 
					${\mathcal X}(\torus)$
					by the formula
						$${\mathfrak b}(\bsl {u}, \boldsymbol {\varphi})=\int_{\mathbb {T}}
						\hat{A}^\varepsilon
						\textbf {curl}\ \bsl {u}\cdot \textbf {curl}\ \boldsymbol {\varphi},
						\ \ \ \ \bsl {u}, \boldsymbol {\varphi}\in {\mathcal X}(\torus).
						$$
					A unique solution to the problem 
						$${\mathfrak b}(\bsl {u}^{\ep}, \boldsymbol {\varphi})=\int_{\mathbb T}\bsl {f}\cdot \boldsymbol {\varphi}
						\ \ \ \ \forall 
						 \boldsymbol {\varphi}\in {\mathcal X}(\torus)$$
					exists by the Lax-Milgram Lemma (see {\it e.g.} 
					\cite {bib24}). 
Indeed, coercivity of the form ${\mathfrak b}$ follows from ellipticity of the matrix $A:$ 
						$${\mathfrak b}(\bsl {u},\bsl {u})
						\geq \int_{\mathbb {T}}\nu |\bft {curl}\,\bsl {u}|^2
						\geq \frac {1}{2}\nu \int_{\mathbb {T}} |\bft {curl}\,\bsl {u}|^2
						+\frac {1}{2}\nu \int_{\mathbb {T}}|\bft {curl}\,\bsl {u}|^2
						\geq \frac {1}{2}\nu \int_{\mathbb {T}} |\bft {curl}\,\bsl {u}|^2
						+\frac {1}{2}\nu C_1\int_{\mathbb {T}}|\bsl {u}|^2
						\geq C_2\| \bsl {u}\|^2_{H^1_{\curl}(\mathbb {T})}$$
						for some $C_1, C_2>0,$ where we have used (\ref{Maxwell_inequality}), see Appendix A,
						combined with the fact that ${\bsl u}$ is divergence-free.
					Continuity of the form ${\mathfrak b}$ follows from the Cauchy-Schwarz inequality and the assumption that $A$ is bounded:
						$$
						{\mathfrak b}(\bsl {u}, \boldsymbol {\varphi}) 
						\leq \Big (\int_{\mathbb {T}}|A\,\bft {curl}\,\bsl {u}|^2
						\Big )^{1/2} 
						\Big (\int_{\mathbb {T}}|\bft {curl}\,\boldsymbol {\varphi}|^2
						\Big )^{1/2}\leq\widehat {C}\nu^{-1} \|\bsl {u}\|_{H^1_{\curl}(\mathbb {T})} \| \boldsymbol {\varphi}\|_{H^1_{\curl}(\mathbb {T})}
						$$
						for some $\widehat{C}>0.$
				\end {proof}


		\section {Asymptotic expansion of the solution to  (\ref {firsteq})}\label {sec112}
			 We seek a solution of (\ref{firsteq}) 
			 in the form of a two-scale power series:
				\begin {equation}
					\bsl {u}^{\varepsilon}(\bsl {x})=\sum_{j=0}^{\infty}\varepsilon^j\bsl {u}_j\Big ( \bsl {x}, \frac {\bsl {x}}{\ep}\Big ), \ \ \ \ \ \ \bsl{x}\in{\mathbb T}.
					\label{main_series}
				\end {equation}
			Substituting (\ref{main_series}) into equation (\ref{firsteq})
			yields a more specific form for the coefficients $\bsl {u}_j,$ namely:
				\begin {multline}\label {A2}
					\bsl {u}^{\varepsilon}(\bsl {x})=\bsl {v}(\bsl {x},\ep)+\sum_{j=1}^{\infty}\ep^j\Big \{ \del_\bsl{y}\big ({\mathscr K}^{(j)}(\bsl {y})\del_{\bsl x}^j\bsl {v}(\bsl {x},\ep)\big )+ \del_{\bsl x}\big ({\mathscr K}^{(j-1)}(\bsl {y})\del_{\bsl x}^{j-1}\bsl {v}(\bsl {x},\ep)\big )\\[-0.8em]
					+{\mathscr N}^{(j)}(\bsl {y})\del_{\bsl x}^{j-1}\bft {curl}_{\bsl x}\bsl {v}(\bsl {x},\ep)\Big \}\Bigr\vert_{\bsl{y}={\bsl x}/{\varepsilon}},
				\end {multline}
			where we set ${\mathscr K}^{(0)}=0.$ 
			Henceforth, we denote by $\nabla_{\bsl x}^j,$ $\nabla_\bsl{y}^j,$ $j=0,1,2,...,$ the tensors of derivatives of order $j$ with respect to the variables 
			${\bsl x},$ ${\bsl y},$ 
			often omitting the lower index when it is clear from the context. 
			The coefficients ${\mathscr K}^{(j)}$ and ${\mathscr N}^{(j)}$ are tensors of order $j+1,$ $j=0,1,\dots,$ whose components 
		belong to the spaces
				$H^2_{{\rm per}, 0}(Q)$ and 
				$H^1_{{\rm per}, 0}(Q),$ 
				respectively, where $H^l_{{\rm per}, 0}(Q),$ $l=1,2,$ denotes $H^l(Q)$-closure of the set of elements of $\bigl[C^\infty_{\rm per}(Q)\bigr]^3$ with zero mean. The smooth, $\torus$-periodic, divergence-free vector field ${\bsl v}(\cdot, \varepsilon)$ is 
			sought as a series in powers of 
			$\varepsilon:$
						\begin {equation}\label {asymptoticv}
							\bsl {v}(\bsl {x},\ep)=
							\sum_{k=0}^\infty \ep^k\bsl {v}_k(\bsl {x}),\ \ \ \ \ \ \bsl{x}\in{\mathbb T},
						\end {equation}
					where $\bsl {v}_k,$ $k=0,1,2,\dots,$ are $\ep$-independent. 
					
						\begin{Rem}
						 The tensor products in (\ref {A2}) are evaluated as follows:
								\[
								{\mathscr K}^{(j)}\del_{\bsl x}^j\bsl {v}={\mathscr K}^{(j)}_{i_1i_2\dots i_{j+1}}v_{i_{j+1},i_1\dots i_j},\ \ \ \ \ \
								{\mathscr N}^{(j)}\del_{\bsl x}^{j-1}\bft {curl}_{\bsl x}\bsl {v}={\mathscr N}^{(j)}_{i_1i_2\dots i_{j+1}}(\bft {curl}_{\bsl x}\bsl {v})_{i_{j+1},i_2\dots i_j},
								\]
							for $i_k\in\{1,2,3\},\ k=1,2,\dots ,j+1$, where summation is carried out for repeated indices, and the comma denotes differentiation with respect to the indices following the comma. 
				The curl and divergence of a tensor ${\mathscr N}^{(j)}$ of order $j+1$  are tensors of order $j+1$  and $j,$ respectively, given by
								\[
								\bigl(\curl \,{\mathscr N}^{(j)}\bigr)_{i_1i_2\dots i_{j+1}}=\epsilon_{i_1st}{\mathscr N}^{(j)}_{ti_2\dots i_{j+1},s}\ \ \ \ \ \bigl(\dive\,{\mathscr N}^{(j)}\bigr)_{i_1i_2\dots i_{j}}={\mathscr N}^{(j)}_{si_1i_2\dots i_{j},s}.
								\]
				\end {Rem}
We formally substitute (\ref {A2}) into the equation (\ref {firsteq}), treating the ``slow'' variables $(x_1,x_2,x_3)=:\bsl {x}$ and the ``fast'' variables $(y_1,y_2,y_3)=:\bsl {y}$ independently, so the ``full'' gradient and curl operators are evaluated according to the rules
$\del =\del_{\bsl x}+\ep^{-1}\del_\bsl{y},$ and  $\bft {curl}=\bft {curl}_{\bsl x}+\ep^{-1}\bft {curl}_\bsl{y}.$
			Making use of the identities 
				$\curl_{\bsl x}\del_{\bsl x} (\cdot )=\curl_\bsl{y}\del_\bsl{y} (\cdot )=\bsl {0}$  and $\curl_{\bsl x}\del_\bsl{y}(\cdot)
				=-\curl_\bsl{y}\del_{\bsl x}(\cdot ),$ 
				we formally write
				\vskip-0.8cm
				\begin {multline}\label {mess1}
					\curl_{\bsl x}\{A\,\curl_{\bsl x}\bsl {v}\}+\ep^{-1}\curl_\bsl{y}\{A\,\curl_{\bsl x}\bsl {v}\}+\sum_{j=1}^{\infty}\ep^j\curl_{\bsl x}\bigl\{A\,\curl_{\bsl x}\bigl({\mathscr N}^{(j)}\del_{\bsl x}^{j-1}\curl_{\bsl x}\bsl {v}\bigr)\bigr\}\\[-0.6em] +\sum_{j=1}^{\infty}\ep^{j-1}\Bigl(\curl_\bsl{y}\bigl\{A\,\curl_{\bsl x}\bigl({\mathscr N}^{(j)}\del_{\bsl x}^{j-1}\curl_{\bsl x}\bsl {v}\bigr)\bigr\}+\curl_{\bsl x}\bigl\{A\,\curl_\bsl{y}\bigl({\mathscr N}^{(j)}\del_{\bsl x}^{j-1}\curl_{\bsl x}\bsl {v}\bigr)\bigr\}\Bigr)\\[-0.7em] +\sum_{j=1}^{\infty}\ep^{j-2}\curl_\bsl{y}\bigl\{A\,\curl_\bsl{y}\bigl({\mathscr N}^{(j)}\del_{\bsl x}^{j-1}\curl_{\bsl x}\bsl {v}\bigr)\bigr\}=\bsl {f}.
				\end {multline}
			Equating the terms in (\ref{mess1}) corresponding to individual powers of $\varepsilon,$ we obtain: 
				\begin {equation}
				\label {L2}
					\curl_\bsl{y}\bigl\{A\,\curl_\bsl{y}\bigl({\mathscr N}^{(1)}\curl_{\bsl x}\bsl {v}\bigr)\bigr\}=-\curl_\bsl{y}\{A\,\curl_{\bsl x}\bsl {v}\},
				\end{equation}
\vskip-0.8cm
	\begin {multline}
				\label {L3}
				\curl_\bsl{y}\bigl\{A\,\curl_\bsl{y}({\mathscr N}^{(2)}\del_{\bsl x}\curl_{\bsl x}\bsl {v})\bigr\}=\bsl{f}-\curl_{\bsl x}\{A\,\curl_{\bsl x}\bsl {v}\}\\[0.4em] -\curl_\bsl{y}\bigl\{A\,\curl_{\bsl x}({\mathscr N}^{(1)}\curl_{\bsl x}\bsl {v})\bigr\}
				-\curl_{\bsl x}\bigl\{A\,\curl_\bsl{y}({\mathscr N}^{(1)}\curl_{\bsl x}\bsl {v})\bigr\}.
	\end {multline}
			Requiring that (\ref {L2}) be satisfied for all admissible vector fields $\bsl {v}$ implies
				\begin{equation}
				\curl
				\bigl\{A\,\curl\,
				{\mathscr N}^{(1)}\bigr\}=-\curl\,
				A,
				\label{N1_eq}
				\end{equation}
				which is understood in the weak sense, {\it i.e.}
				\begin{equation}
				\bigl\langle\bigl(A\,\curl\,{\mathscr N}^{(1)}+A\bigr)\curl\,\phi\bigr\rangle=0\ \ \ \ \forall\phi\in\bigl[C^\infty_{\rm per}(Q)\bigr]^3.
				\label{star}
				\end{equation}
			The matrix ${\mathscr N}^{(1)}$ is determined uniquely under the conditions that it is
			 $Q$-periodic and has zero average.  
			 The condition of solvability of (\ref {L3}), viewed as an equation for ${\mathscr N}^{(2)},$ is
			 the usual homogenised equation ({\it cf.} \cite{bib5})
				\begin{equation}
				\curl_{\bsl x}\bigl\{\hat {h}^{(2)}\curl_{\bsl x}\bsl {v}\bigr\}=\bsl {f},
\ \ \ \ \bsl{v}\in{\mathcal X}(\torus),\ \ \ \ \ \ \ \ \ \hat {h}^{(2)}:=\bigl\langle A\bigl( \curl\ {\mathscr N}^{(1)}+I\bigr)\bigr\rangle.
\label{hom_eq}
\end{equation}
				\begin {Rem}
				\label{Sec3_remark}
					In the case of a ``laminate'', $A(\bsl {y})=\alpha(y_2)I,$ the matrix ${\mathscr N}^{(1)}$ is found to have the form
						\begin {equation}\label {mat1}
							{\mathscr N}^{(1)}=\begin {pmatrix}
							0 & 0 & -N \\
							0 & 0 & 0 \\
							N & 0 & 0 \end {pmatrix},
						\end {equation}
					where $N=N(y_2)$ satisfies $-(\alpha N')'=\alpha ',$ subject to the conditions that it is $Q$-periodic and has zero mean.
					Direct calculation shows that 
						\begin {equation}\label {Nexact}
							N(y_2)=\int_0^{y_2}\bigl(\avalphin \alpha^{-1}(t)-1\bigr)\mathrm {d}t-\int_0^1\int_0^{y_2}\bigl(\avalphin \alpha^{-1}(t)-1\bigr)\mathrm{d}t\mathrm{d}y_2.
						\end {equation}
				\end {Rem}

		\subsection {Infinite-order homogenised equation: asymptotic approach}\label {sec113}
			
			Denote by $\bsl{H}_j=\bsl {H}_j(\bsl {x},\bsl {y})$ the coefficient in front of $\varepsilon^j,$ $j=0,1,2,...,$
			in the expansion (\ref {mess1}): 
				\begin {multline}
				\label {HHH}
					\bsl {H}_j:= \curl_{\bsl x}\bigl\{A\,\curl_{\bsl x}\bigl({\mathscr N}^{(j)}\del_{\bsl x}^{j-1}\curl_{\bsl x}\bsl {v}\bigr)\bigr\}\\[0.5em] +\curl_\bsl{y}\bigl\{A\,\curl_{\bsl x}\bigl({\mathscr N}^{(j+1)}\del_{\bsl x}^{j}\curl_{\bsl x}\bsl {v}\bigr)\bigr\}
					+\curl_{\bsl x}A\bigl\{\,\curl_\bsl{y}\bigl({\mathscr N}^{(j+1)}\del_{\bsl x}^{j}\curl_{\bsl x}\bsl {v}\bigr)\bigr\}\\[0.5em] +\curl_\bsl{y}\bigl\{A\,\curl_\bsl{y}\bigl({\mathscr N}^{(j+2)}\del_{\bsl x}^{j+1}\curl_{\bsl x}\bsl {v}\bigr)\bigr\}, \ \ \ j=0,1, 2, \dots .
				\end {multline}
			We aim to write $\bsl {H}_j\xy$ in the form $h^{(j+2)}(\bsl {y})\del_{\bsl x}^{j+1}\curl_{\bsl x}\bsl {v}(\bsl{x}),$ by commuting all $\bsl {x}$-derivatives through to the right-hand sides of the tensors ${\mathscr N}^{(j)}(\bsl {y})$ in expressions 
			(\ref {HHH}). To this end, we introduce tensors ${\mathscr M}^{(j)}(\bsl {y})$ and ${\mathscr L}^{(j)}(\bsl {y})$ of order 
			$j+1,$ such that the following operator identities hold: 
				\begin {equation}
				\label {MMM}
					{\mathscr M}^{(1)}=I,\ \ \ \ {\mathscr M}^{(j)}(\bsl {y})\del_{\bsl x}=\curl_{\bsl x}{\mathscr N}^{(j-1)}(\bsl {y}),\ \ \ j=2,3,...,
				\end {equation}
				\begin {equation}
				\label {LLL}
					{\mathscr L}^{(j)}(\bsl {y})\del_{\bsl x}=\curl_{\bsl x}\Bigl(A(\bsl{y})\big \{ \curl_\bsl{y}{\mathscr N}^{(j-1)}(\bsl {y})+{\mathscr M}^{(j-1)}(\bsl {y})\big \}\Bigr),\ \ \ j=2,3,...\ .
				\end {equation}
			The expression for $\bsl {H}_j$, $j=0,1,2,\dots $ is now rewritten as
				\begin{gather*}
				\bsl {H}_j(\bsl {x},\bsl {y})=h^{(j+2)}(\bsl {y})\del_{\bsl x}^{j+1}\curl_{\bsl x}\bsl {v}(\bsl{x}),\ \ \ \ \ \ \ \ \ \ \ \ \ \ \ \ \ \ \ \ \ \ \ \ \ \ \ \ \ \ \ \ \ \ \ \ \ \ \ \ \ \ \ \ \ \ \ \ \ \ \ \ \ \ \ \ \ \ \ \ \ \ \\[0.4em]
				\ \ \ \ \ \ \ \ \ \ \ \ \ \ \ \ \ \ \ \ h^{(j+2)}(\bsl {y}):=\curl
				\bigl\{A(\bsl{y})\,\curl\,
				{\mathscr N}^{(j+2)}(\bsl {y})\bigr\}+\curl
				\bigl\{A(\bsl{y}){\mathscr M}^{(j+2)}(\bsl {y})\bigr\}+{\mathscr L}^{(j+2)}(\bsl {y}).
				\end{gather*}
			Summarising, the left-hand side of (\ref{firsteq}) takes the form
				\begin {equation*}
					\textbf{curl}\big\{\hat{A}^\varepsilon(\bsl{x})\textbf {curl}\,\bsl {u}^{\varepsilon}(\bsl {x})\big\}=\sum_{j=0}^\infty \ep^jh^{(j+2)}(\bsl {y})\del_{\bsl x}^{j+1}\curl_{\bsl x}\bsl {v}(\bsl {x})\bigr\vert_{\bsl{y}=\bsl{x}/\varepsilon}.
				\end {equation*}
			By analogy with the matrix $\hat{h}^{(2)},$ we require the tensors $h^{(j+2)},$ $j=0,1,2, ...$ to be independent of the fast variable $\bsl {y}.$
			The resulting system of recurrence relations
				\begin {equation}\label {hom-2}
					\curl
					\bigl\{A\,\curl\,
					{\mathscr N}^{(j+2)}\bigr\}=-\curl
					\bigl\{A\,{\mathscr M}^{(j+2)}\bigr\}-{\mathscr L}^{(j+2)}+h^{(j+2)},\ \ \ j=0,1,...,
				\end {equation}
			allows one to determine uniquely the tensors ${\mathscr N}^{(j+2)}, h^{(j+2)},\ j=0,1,\dots ,$ by virtue of the following statement.
				\begin {Lem}\label {lem113}
					Let $F^{(j)}$ be a tensor field 
					whose components are differentiable and $Q$-periodic. Furthermore, assume that $A$ is a positive definite $Q$-periodic matrix field. 
					Then in order that the equation 
						\begin {equation}\label {Nj}
							\curl
							\bigl\{A\,\curl\,
							{\mathscr N}^{(j)}\bigr\}=F^{(j)},
						\end {equation}
have a divergence-free, $Q$-periodic solution ${\mathscr N}^{(j)}$ whose elements have zero mean over $Q,$							
it is necessary and sufficient that						
						\begin{equation}
						\dive\,F^{(j)}=0,\ \ \ \bigl\langle F^{(j)}\bigr\rangle = 0.
						\label{RHS_conditions}
						\end{equation}
				\end {Lem}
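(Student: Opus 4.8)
The plan is to reduce (\ref{Nj}) to a family of vector-field equations on the torus $Q$ and solve each by a Helmholtz-decomposition argument combined with the Lax--Milgram lemma, mirroring the proof of Theorem \ref{unique1}. Since $\curl$ and $\dive$ in (\ref{Nj}) act only on the first tensor slot while the remaining indices are passive, I would freeze a multi-index $(i_2,\dots,i_{j+1})$, put $\bsl{g}:=\bigl(F^{(j)}_{i_1 i_2\dots i_{j+1}}\bigr)_{i_1=1}^{3}$, and seek a $Q$-periodic, divergence-free vector field $\bsl{n}$ with $\langle\bsl{n}\rangle=\bsl{0}$ solving $\curl\{A\,\curl\,\bsl{n}\}=\bsl{g}$; reassembling the $\bsl{n}$'s over all frozen multi-indices then yields ${\mathscr N}^{(j)}$. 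Under this reduction, (\ref{RHS_conditions}) is exactly the pair of conditions $\dive\,\bsl{g}=0$ and $\langle\bsl{g}\rangle=\bsl{0}$ for every choice of passive indices.

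Necessity is immediate: if such ${\mathscr N}^{(j)}$ exists then $\dive\,F^{(j)}=\dive\,\curl\{\,\cdot\,\}\equiv 0$, and $\langle F^{(j)}\rangle=\langle\curl\{\,\cdot\,\}\rangle=0$ since the mean over a period of the curl of a $Q$-periodic field vanishes (it is a perfect $\bsl{y}$-derivative). For sufficiency, let $V$ denote the $Q$-periodic analogue of ${\mathcal X}(\torus)$, i.e. the divergence-free, zero-mean fields in $H^1_{\curl}(Q)$, and let $\mathfrak{a}(\bsl{n},\bphi):=\bigl\langle A\,\curl\,\bsl{n}\cdot\curl\,\bphi\bigr\rangle$. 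As in Theorem \ref{unique1}, $\mathfrak{a}$ is bounded by boundedness of $A$ and coercive on $V$ by positive definiteness of $A$ together with the Maxwell-type inequality (\ref{Maxwell_inequality}) of Appendix A (applied on $Q$) and the divergence-free constraint. Since $\bphi\mapsto\langle\bsl{g}\cdot\bphi\rangle$ is bounded on $V$, Lax--Milgram produces a unique $\bsl{n}\in V$ with $\mathfrak{a}(\bsl{n},\bphi)=\langle\bsl{g}\cdot\bphi\rangle$ for all $\bphi\in V$.

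The substantive step is to upgrade this variational identity to the distributional equation, i.e. to show $\mathfrak{a}(\bsl{n},\bphi)=\langle\bsl{g}\cdot\bphi\rangle$ for \emph{every} $\bphi\in\bigl[C^\infty_{\rm per}(Q)\bigr]^3$. Here one invokes the Helmholtz--Weyl decomposition on the torus, $\bphi=\bphi_{\sol}+\del_{\bsl y}p+\langle\bphi\rangle$ with $p$ $Q$-periodic and $\bphi_{\sol}\in V$. The left-hand side is unchanged on passing from $\bphi$ to $\bphi_{\sol}$ because $\curl\,\del_{\bsl y}p=\bsl{0}$ and the curl of a constant vanishes; the right-hand side is unchanged because $\langle\bsl{g}\cdot\del_{\bsl y}p\rangle=-\langle(\dive\,\bsl{g})\,p\rangle=0$ and $\langle\bsl{g}\cdot\langle\bphi\rangle\rangle=\langle\bsl{g}\rangle\cdot\langle\bphi\rangle=0$ --- precisely the two hypotheses in (\ref{RHS_conditions}). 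Thus the equation holds in the distributional sense. Differentiability of the components of $F^{(j)}$ upgrades $\bsl{n}$ to the required smoothness via elliptic regularity for the operator $\curl\,A\,\curl$ restricted to divergence-free fields (which behaves like a second-order elliptic system), and uniqueness follows by testing the homogeneous equation against $\bsl{n}$ itself: positive definiteness of $A$ forces $\curl\,\bsl{n}=\bsl{0}$, so $\bsl{n}=\del\psi$ with $\Delta\psi=0$, whence $\bsl{n}\equiv\bsl{0}$. Reassembling over the passive indices gives the asserted ${\mathscr N}^{(j)}$.

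I expect the only non-routine point to be this last upgrade --- equivalently, establishing the Helmholtz decomposition on $Q$ and identifying the range of $\curl$ with the divergence-free, zero-mean fields --- since that is exactly the step that consumes both conditions in (\ref{RHS_conditions}). Coercivity (via the Appendix A inequality), the Lax--Milgram application, and the tensorial reassembly are all standard.
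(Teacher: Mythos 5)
Your proof is correct and follows essentially the same route as the paper's: the paper phrases sufficiency as orthogonality of $F^{(j)}$ to the kernel of the self-adjoint operator $\curl\{A\,\curl\,\cdot\}$ (gradients and constants), which is exactly what your Helmholtz-decomposition upgrade of the Lax--Milgram test-function class establishes, with the two conditions in (\ref{RHS_conditions}) consumed in the same way. The only superfluous step is the appeal to elliptic regularity at the end --- $A$ is merely measurable and bounded, so no such regularity is available, but none is needed since the lemma only requires the weak ($H^1$-type) solution that Lax--Milgram already provides.
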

				\begin {proof}
					The necessity follows by taking the divergence and average of both sides of equation (\ref {Nj}) and noting that both $A$ and ${\mathscr N}^{(j)}$ are 
					$Q$-periodic. 
					Conversely, due to the fact that the differential expression in the left-hand side of (\ref{Nj}) defines a self-adjoint operator in the space of divergence-free tensor fields with  zero mean, it suffices to note that the right-hand side of (\ref {Nj}) is $L^2(Q)$-orthogonal 
					to all elements of the 
					kernel of the left-hand side, namely\footnote{
					If $F^{(j)}$ and $G^{(j)}$ are two tensors of order $j$, then 
						$F^{(j)}\cdot G^{(j)}:=(F^{(j)})_{i_1i_2\dots i_j}(G^{(j)})_{i_1i_2\dots i_j}.$}
						\begin{equation}
						\bigl\langle F^{(j)}\cdot \del\varphi\bigr\rangle=0,\ \ \ \ \ \bigl\langle F^{(j)}\cdot{\mathscr C}\bigr\rangle=0,
						\label{orthog}
						\end{equation}
for all smooth order $j-1$ tensor fields $\varphi$ and order $j$ constant tensors ${\mathscr C}.$ The proof is concluded by noting that (\ref{orthog}) follows immediately from (\ref{RHS_conditions}).
				\end {proof}
			By Lemma \ref {lem113}, the equation (\ref{hom-2}) is solvable if and only if 
				$\bigl\langle -\curl
				(A {\mathscr M}^{(j+2)})-{\mathscr L}^{(j+2)}+h^{(j+2)}\bigr\rangle =0,$ 
				or equivalently $h^{(j+2)}= \bigl\langle {\mathscr L}^{(j+2)}\bigr\rangle.$
			It follows that
				$$ h^{(j+2)}\del
				^{j+1}\curl\,
				\bsl {v}(\bsl {x},\ep) =\bigl\langle {\mathscr L}^{(j+2)}\bigr\rangle\del
				^{j+1}\curl\,
				\bsl {v}(\bsl {x},\ep)
				=\curl
				\Bigl(\bigl\langle A
				\bigl\{\curl\,
				{\mathscr N}^{(j+1)}
				+{\mathscr M}^{(j+1)}
				\bigr\}\bigr\rangle\del
				^{j}\curl\,
				\bsl {v}(\bsl {x},\ep)\Bigr).$$
			The ``infinite-order homogenised equation" takes the form
				\begin {equation}
				\label {hom1}
					\curl \bigl\{\hat {h}^{(2)}\curl\,\bsl {v}(\bsl {x},\ep)\bigr\}+ \sum_{j=1}^{\infty}\ep^j\curl
					\bigl\{\hat {h}^{(j+2)}\del
					^j\curl\,
					\bsl {v}(\bsl {x},\ep)\bigr\}=\bsl {f}(\bsl {x}),\ \ \ \ {\bsl x}\in{\mathbb T},
				\end {equation}
			where 
				\begin {equation}\label {homco1}
					\hat {h}^{(j+2)}:=\Bigl\langle A\bigl\{\curl\,{\mathscr N}^{(j+1)}+{\mathscr M}^{(j+1)}\bigr\}\Bigr\rangle,\ \ \  j=0,1,2,...,
				\end {equation}
			are the tensors of ``higher-order homogenised coefficients''. They 
			are related to the tensors $h^{(j+2)}$ via the operator identity
					$\curl_{\bsl x}\hat {h}^{(j+2)}=h^{(j+2)}\del_{\bsl x}.$
				In the index notation, the formula (\ref{homco1}) reads 
				\[
				\hat{h}^{(j+2)}_{i_1...i_{j+2}}=\Bigl\langle A_{i_1s}\bigl\{\curl\,{\mathscr N}^{(j+1)}+{\mathscr M}^{(j+1)}\bigr\}_{si_2...i_{j+2}}\Bigr\rangle.
				\]
				\begin {Rem}
					For the example $A(\bsl {y})=\alpha(y_2)I$, we have
					\[
					\hat {h}^{(2)}=\Bigl \langle A\bigl\{ \curl\,{\mathscr N}^{(1)}+I\bigr \} \Bigr \rangle=\begin {pmatrix}
						\langle \alpha^{-1}\rangle^{-1} & 0 & 0 \\
						0 & \langle \alpha \rangle & 0 \\
						0 & 0 & \langle \alpha^{-1}\rangle^{-1}\end {pmatrix}.
					\]
				\end {Rem}
				\begin {Rem}
					Using index notation, the relations (\ref {MMM})--(\ref {LLL}) read
						\begin {equation}\label {Mij}
							\bigl({\mathscr M}^{(j+1)}\bigr)_{i_1\dots i_{j+2}}=\epsilon_{i_1i_2s}({\mathscr N}^{(j)})_{si_3i_4\dots i_{j+2}},\ \ \ j=1,2,\dots ,
						\end {equation}
						\begin {equation}\label {Lij}
							\bigl({\mathscr L}^{(j+1)}\bigr)_{i_1\dots i_{j+2}}=\epsilon_{i_1i_2s}A_{st}\big \{ \curl\,{\mathscr N}^{(j)}+{\mathscr M}^{(j)}\big \}_{ti_3i_4\dots i_{j+2}} ,\ \ \ j=1,2,\dots ,
						\end {equation}
					where $\epsilon_{ijk}$ is 
					equal to $1$ if $ijk\in\{ 123,231,312\}$, to $-1$ if $ijk\in\{ 132,213,321\},$ and to zero otherwise.
				\end {Rem}
			
			Substituting (\ref{asymptoticv})	into the homogenised equation (\ref{hom1}),
					we find that the coefficient functions $\bsl{v}_k$
					satisfy the following sequence of recurrence relations:
						\begin {equation}\label {v1}
							 \curl
							 \bigl\{\hat {h}^{(2)}\curl
							 \bsl {v}_0\bigr\}=\bsl {f},\ \ \ \ \ \ \ \ \ 
							\sum_{\substack {j+k=l,\ 
							j, k\in \mathbb {N}\cup\{0\}}}\curl
							\bigl\{\hat {h}^{(j+2)}\del
							^{j}\curl\,
							\bsl {v}_k\bigr\}=\bft {0},\ \ \ l=1,2,...,
						\end {equation}
						\vskip-0.7cm
						\begin {equation}\label {conv5}
							\dive\,\bsl {v}_l=0,\ \ \ \ \ l=0,1,\dots.\ \ \ \ \ \ \ \ \ \ \ \ \ \ \ \ \ \ \ \ \ \ \ \ \ \ \ \ \ \ \ \ \ \ \ \ \ \ \ \ \ \ \ \ \ \ \ \ \ \ \ \ \ \ \ \ \ \ \ \ \ \ \ \ \ \ \  \  \ \ \ \ \ \ \ \ \ \ \ 
						\end {equation}
						\begin {Prop}
							The matrix $\hat {h}^{(2)}$ is symmetric and positive definite. Further, for any given right-hand side $\bsl {f}\in\bigl[C^\infty_{\rm per}(\torus)\bigr]^3\cap {\mathcal X}(\torus)$, there exists a unique solution sequence $\bsl{v}_l,$ $l=0,1,2,...,$ for  (\ref {v1})--(\ref {conv5}) such that all elements of the sequence have zero mean over $\torus.$
					
							\end {Prop}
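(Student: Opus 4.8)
The plan is to treat the two assertions separately. For the symmetry and positive-definiteness of $\hat{h}^{(2)}=\langle A(\curl\,{\mathscr N}^{(1)}+I)\rangle$, I would, for arbitrary constant vectors $\bsl{a},\bsl{b}\in\rbb^3$, pair the weak cell identity (\ref{star}) — which extends by density to periodic $H^1$ test fields, since the components of ${\mathscr N}^{(1)}$ lie in $H^2_{\per,0}(Q)$ — with the column combination ${\mathscr N}^{(1)}\bsl{a}$ of ${\mathscr N}^{(1)}$. This gives $\langle A(\curl({\mathscr N}^{(1)}\bsl{b})+\bsl{b})\cdot\curl({\mathscr N}^{(1)}\bsl{a})\rangle=0$ and hence
\[
\hat{h}^{(2)}\bsl{b}\cdot\bsl{a}=\bigl\langle A\bigl(\curl({\mathscr N}^{(1)}\bsl{b})+\bsl{b}\bigr)\cdot\bigl(\curl({\mathscr N}^{(1)}\bsl{a})+\bsl{a}\bigr)\bigr\rangle.
\]
The right-hand side is symmetric under $\bsl{a}\leftrightarrow\bsl{b}$ because $A$ is symmetric, and, taking $\bsl{a}=\bsl{b}$ and using uniform ellipticity of $A$, is bounded below by $\nu\,\langle|\curl({\mathscr N}^{(1)}\bsl{a})+\bsl{a}|^2\rangle$. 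Since ${\mathscr N}^{(1)}$ is $Q$-periodic, $\langle\curl({\mathscr N}^{(1)}\bsl{a})\rangle=\bsl{0}$, so the mean of $\curl({\mathscr N}^{(1)}\bsl{a})+\bsl{a}$ is $\bsl{a}$; the integrand therefore cannot vanish identically unless $\bsl{a}=\bsl{0}$, which yields strict positivity.

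For the sequence $\{\bsl{v}_l\}$, I would argue by induction on $l$, carrying the strengthened hypothesis that each $\bsl{v}_k$ is $\torus$-periodic, infinitely smooth, divergence-free, and of zero mean. Because $\hat{h}^{(2)}$ is a constant symmetric positive-definite matrix, it satisfies the boundedness and ellipticity assumptions imposed on the coefficient of (\ref{firsteq}); hence the Lax--Milgram argument of Theorem \ref{unique1}, applied to the bilinear form $(\bsl{u},\bphi)\mapsto\int_\torus\hat{h}^{(2)}\curl\,\bsl{u}\cdot\curl\,\bphi$ on ${\mathcal X}(\torus)$, furnishes, for every datum in $[C^\infty_{\rm per}(\torus)]^3\cap{\mathcal X}(\torus)$, a unique solution in ${\mathcal X}(\torus)$ — automatically $\torus$-periodic, divergence-free and of zero mean — whose smoothness follows from periodic elliptic regularity, the operator $\bsl{u}\mapsto\curl\{\hat{h}^{(2)}\curl\,\bsl{u}\}$ being elliptic on divergence-free fields. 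The case $l=0$ is precisely the first identity in (\ref{v1}) with datum $\bsl{f}$, which lies in the required class by hypothesis.

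For the inductive step, assume $\bsl{v}_0,\dots,\bsl{v}_{l-1}$ have been found with the stated properties. Isolating the $j=0,k=l$ term in the $l$-th relation of (\ref{v1}) recasts it as $\curl\{\hat{h}^{(2)}\curl\,\bsl{v}_l\}=\bsl{f}_l$, where $\bsl{f}_l:=-\sum_{j=1}^{l}\curl\{\hat{h}^{(j+2)}\del^j\curl\,\bsl{v}_{l-j}\}$ and the tensors $\hat{h}^{(j+2)}$ are well defined by (\ref{homco1}) in view of Lemma \ref{lem113}. The field $\bsl{f}_l$ involves only the already-constructed $\bsl{v}_0,\dots,\bsl{v}_{l-1}$; it is smooth, divergence-free as a sum of curls, and of zero mean since the integral over $\torus$ of a curl of a periodic field vanishes, so $\bsl{f}_l\in[C^\infty_{\rm per}(\torus)]^3\cap{\mathcal X}(\torus)$. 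The $l=0$ result then produces a unique $\bsl{v}_l\in{\mathcal X}(\torus)$, smooth by elliptic regularity and of zero mean by construction, closing the induction. The one point requiring care is this propagation of data along the recursion: one must check at each level that $\bsl{f}_l$ is genuinely divergence-free, mean-zero and smooth — the smoothness, in particular, resting on the ellipticity of the curl--curl operator restricted to divergence-free fields rather than on Theorem \ref{unique1} by itself.
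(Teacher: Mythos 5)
Your proposal is correct and follows essentially the same route as the paper: your bilinear identity $\hat h^{(2)}\bsl{b}\cdot\bsl{a}=\bigl\langle A\bigl(\curl({\mathscr N}^{(1)}\bsl{b})+\bsl{b}\bigr)\cdot\bigl(\curl({\mathscr N}^{(1)}\bsl{a})+\bsl{a}\bigr)\bigr\rangle$ is exactly the paper's manipulation (\ref{h2_manip}) written in directional form, and the recursion is handled, as in the paper, by Lax--Milgram applied level by level to the constant-coefficient operator $\curl\{\hat h^{(2)}\curl\,\cdot\}$ on ${\mathcal X}(\torus)$. You simply supply details the paper leaves implicit (the zero-mean argument for strict positivity, and the inductive check that each $\bsl{f}_l$ is smooth, divergence-free and mean-zero), which are welcome but do not change the method.
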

						\begin {proof}
						Using (\ref{star})
						and the formula (\ref{hom_eq}) for the homogenised matrix $\hat {h}^{(2)},$  
							we write
								\begin{equation}
								\hat {h}^{(2)}=\Bigl\langle A\bigl(\curl\ {\mathscr N}^{(1)}+I\bigr)+\bigl( A\,\curl\ {\mathscr N}^{(1)}+A\bigr) \curl\ {\mathscr N}^{(1)}\Bigr\rangle
								=\Bigl\langle A \bigl( \curl\ {\mathscr N}^{(1)}+I\bigr)\bigl( \curl\ {\mathscr N}^{(1)}+I\bigr)\Bigr\rangle .
								\label{h2_manip}
								\end{equation}
							In view of the fact that the matrix $A$ is symmetric and positive definite, the expression in the right-hand side of (\ref{h2_manip})  
							is also symmetric and positive definite.
							Applying the Lax-Milgram lemma to each equation in (\ref {v1}), in conjunction with the corresponding equation in 
							(\ref {conv5}), yields the second claim of the lemma.
						\end {proof}

			In Section \ref {sec115} we provide a justification of the above formal argument, by showing that the remainder of the truncation of the double series
			 (\ref{A2})--(\ref{asymptoticv}) is close to the solution of the original problem in a certain sense.
			Before doing so, we briefly discuss the recurrence relations that determine the tensors ${\mathscr K}^{(j)}.$ These relations arise by considering the divergence of 
			(\ref {A2}).

		\subsection{Recurrence relations for ${\mathscr K}^{(j)}, j=2,3,...$} 
		\label {sec114}
			Taking the formal divergence of the asymptotic expansion (\ref {A2}) and using the condition ${\rm div}\,{\bsl u}^\varepsilon=0$ yields
				\begin {multline*}
					\sum_{j=1}^\infty \ep^j\Big \{ 
					\bigl({\mathscr K}^{(j)}(\bsl {y})\del^j_{\bsl x}\bsl {v}(\bsl {x},\ep)\bigr)_{,x_iy_i}+
					\bigl({\mathscr K}^{(j-1)}(\bsl {y})\del_{\bsl x}^{j-1}\bsl {v}(\bsl {x},\ep)\bigr)_{,x_ix_i}+\dive_{\bsl x}\bigl({\mathscr N}^{(j)}(\bsl {y})\del_{\bsl x}^{j-1}\curl_{\bsl x}\bsl {v}(\bsl {x},\ep)\bigr)\Big \}\\[-0.7em]
					+\sum_{j=1}^\infty \ep^{j-1}\Big \{ 
					\bigl({\mathscr K}^{(j)}(\bsl {y})\del^j_{\bsl x}\bsl {v}(\bsl {x},\ep)\bigr)_{,y_iy_i}+
					\bigl({\mathscr K}^{(j-1)}(\bsl {y})\del_{\bsl x}^{j-1}\bsl {v}(\bsl {x},\ep)\bigr)_{,x_iy_i}\Big \} =0.
				\end {multline*}
			Comparing the terms with equal powers of $\varepsilon$ yields the system of recurrence relations
				\begin {equation}
						\bigl({\mathscr K}^{(1)}\del_{\bsl x}\bsl {v}\bigr)_{,y_iy_i}
						=0,\label{K1_eq}
						\end{equation}
						\begin{equation} 
					  \bigl({\mathscr K}^{(l+1)}\del^{l+1}_{\bsl x}\bsl {v}\bigr)_{,y_iy_i}+
					  2\bigl({\mathscr K}^{(l)}\del^l_{\bsl x}\bsl {v}\bigr)_{,x_iy_i}+
					  \bigl({\mathscr K}^{(l-1)}\del_{\bsl x}^{l-1}\bsl {v}\bigr)_{,x_ix_i} 
						+\dive_{\bsl x}\bigl({\mathscr N}^{(l)}\del_{\bsl x}^{l-1}\curl_{\bsl x}\bsl {v}\bigr)=0,\  l=1,2,...\ .
						\label {divergence30}
				\end{equation}
			Note that the $Q$-periodic solution ${\mathscr K}^{(1)}$ to (\ref{K1_eq})
				with zero average over $Q$ is identically zero.
			Hence, the first non-trivial tensor in the sequence is the third-order tensor ${\mathscr K}^{(2)},$ which satisfies 
				$$
				\bigl({\mathscr K}^{(2)}\del^2_{\bsl x}\bsl {v}\bigr)_{,y_iy_i}=-\dive_{\bsl x}\bigl({\mathscr N}^{(1)}\curl_{\bsl x}\bsl {v}\bigr),$$
			for an arbitrary vector $\bsl {v}$. 
			Substituting the expansion (\ref{asymptoticv}) 
			into the system (\ref {divergence30}), we obtain
				\begin {multline}\label {asym2}
					\sum_{\substack {j+k=l \\ j\in \mathbb {N},\ k\in \mathbb {N}\cup\{0\}}}\Big \{ 
					\bigl({\mathscr K}^{(j+1)}\del^{j+1}_{\bsl x}\bsl {v}_k\bigr)_{,y_iy_i}+
					2\bigl({\mathscr K}^{(j)}\del^j_{\bsl x}\bsl {v}_k\bigr)_{,x_iy_i}+
					\bigl({\mathscr K}^{(j-1)}\del_{\bsl x}^{j-1}\bsl {v}_k\bigr)_{,x_ix_i}\\[-1.3em]
					+\dive_{\bsl x}\bigl({\mathscr N}^{(j)}\del_{\bsl x}^{j-1}\curl_{\bsl x}\bsl{v}_k\bigr)\Big \} =0,\ \ \ l=1,2,...\ .
				\end {multline}
			We use the set of equations (\ref {asym2}) to establish a bound on the divergence of the solution $\uep $ and, subsequently, 
			on the 
			remainder of the asymptotic series (\ref{main_series}), see Section \ref{sec115}.

		\subsection {Remainder estimates}\label {sec115}
			The justification of the above formal procedure is given by the following theorem.
				\begin {Thrm}\label {thrm1}
					For all 
					$K\in{\mathbb N}$ 
					consider the remainder 
						${\bsl R}^{(K)}({\bsl x},\ep):={\bsl u}^{\ep}(\bsl {x})-{\bsl u}^{(K)}(\bsl {
						x},\ep),$ 
					where 
						\begin{multline}
							\bsl {u}^{(K)}(\bsl {x},\ep):=\bsl {v}^{(K)}(\bsl {x},\ep)+\sum_{j=1}^K\ep^j\Big \{ \del_\bsl{y}\bigl({\mathscr K}^{(j)}(\bsl{y})\del_{\bsl x}^j\bsl {v}^{(K)}(\bsl {x},\ep)\bigr)+ \del_{\bsl x}\bigl({\mathscr K}^{(j-1)}(\bsl{y})\del_{\bsl x}^{j-1}\bsl {v}^{(K)}(\bsl {x},\ep)\bigr)\\ +{\mathscr N}^{(j)}(\bsl{y})\del_{\bsl x}^{j-1}\bigl(\curl_{\bsl x}\bsl {v}^{(K)}(\bsl {x},\ep)\bigr)\Big\}\Bigr\vert_{\bsl{y}=\bsl{x}/\varepsilon},\ \ \ \ \ \ \ \ \ \ \ \ \ \ \ \ \  \ \ \ \ \ \ \quad\quad\quad\quad\quad\label{trunc1} 
\end{multline}						
							\vskip-1.0cm
							\begin{equation}
							{\bsl v}^{(K)}({\bsl x},\ep):=
							\sum^K_{k=0}\ep^k{\bsl v}_k({\bsl x}).
							\quad\quad\quad\quad\quad\quad\quad\quad\quad\quad\quad\quad\quad\quad\quad\quad\quad\quad\quad\quad\quad\quad\quad\quad\quad\quad\quad\quad\quad\quad\label{VK}
						\end {equation}
					Then the estimates
							\begin{align}
							\ & {\rm (i)}\  \,\bigl\| \curl \,{\bsl R}^{(K)}\bigr\|_{L^2(\mathbb {T})}\leq C^{(K)}_1\ep^{K-1},\ \ \ \quad\quad\quad\quad\quad\quad\quad\quad\quad\quad\quad\quad\nonumber\\[0.4em]
							\ & {\rm (ii)}\ \bigl\| \dive\,{\bsl R}^{(K)}\bigr\|_{H^{-1}(\mathbb {T})}\leq C^{(K)}_2\ep^{K},\ \ \ \nonumber\\[0.4em]
							\ & {\rm (iii)}\ \ \forall\,M\ \ \bigl|\bigl\langle {\bsl R}^{(K)}\bigr\rangle_{\mathbb {T}}\bigr|\leq \widetilde {C}^{(K)}_{M}\ep^M,\nonumber
				\end{align}
			hold, where the constants $C^{(K)}_1, C^{(K)}_2, \widetilde {C}^{(K)}_{M}$ are independent of $\varepsilon$ but may depend on the function ${\bsl f}$.
				\end {Thrm}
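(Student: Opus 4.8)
The plan is to show that the remainder $\bsl R^{(K)}$ solves a curl--curl equation of the same type as (\ref{firsteq}) whose right-hand side is the $\bsl x$-curl of an $[L^2(\torus)]^3$-field of size $O(\ep^{K-1})$, and then to extract (i)--(iii) from an energy estimate and from the recurrences already used to build the expansion. Since $\bsl u^\varepsilon\in\mathcal X(\torus)$ and, as one checks directly, $\bsl u^{(K)}\in H^1_{\curl}(\torus)$ (the ${\mathscr N}^{(j)}(\bsl x/\ep)$ have $L^2$-curl because ${\mathscr N}^{(j)}\in H^1_{{\rm per},0}(Q)$, the ${\mathscr K}^{(j)}$-terms likewise because ${\mathscr K}^{(j)}\in H^2_{{\rm per},0}(Q)$, and the $\bsl v_k$ are smooth), the weak formulation of (\ref{firsteq}) (extended by density to all $\bphi\in H^1_{\curl}(\torus)$) together with integration by parts gives
\[
\int_{\torus}\hat A^\varepsilon\,\curl\bsl R^{(K)}\cdot\curl\bphi=\int_{\torus}\bsl f\cdot\bphi-\int_{\torus}\hat A^\varepsilon\,\curl\bsl u^{(K)}\cdot\curl\bphi\qquad\forall\,\bphi\in H^1_{\curl}(\torus).
\]

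The heart of the proof is the evaluation of $\bsl g^{(K)}:=\bsl f-\curl\{\hat A^\varepsilon\curl\bsl u^{(K)}\}$. I substitute (\ref{trunc1})--(\ref{VK}) into $\curl\{\hat A^\varepsilon\curl\,\cdot\,\}$, using $\curl=\curl_{\bsl x}+\ep^{-1}\curl_\bsl{y}$, and collect powers of $\ep$. The ${\mathscr K}$-part contributes only at order $\ep^K$: its $\del_\bsl{y}$- and $\del_{\bsl x}$-pieces telescope by the identity $\curl_{\bsl x}\del_\bsl{y}(\cdot)=-\curl_\bsl{y}\del_{\bsl x}(\cdot)$, using ${\mathscr K}^{(0)}={\mathscr K}^{(1)}=0$ and that each ${\mathscr K}^{(j)}\del^j\bsl v$ is a scalar. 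In the remaining terms the orders $\ep^{-2},\ep^{-1}$ vanish ($\bsl v^{(K)}$ is $\bsl y$-independent, and by (\ref{L2})/(\ref{N1_eq})), the order $\ep^0$ equals $\curl\{\hat h^{(2)}\curl\bsl v_0\}=\bsl f$ by (\ref{hom_eq}) and (\ref{v1}), and the orders $\ep^1,\dots,\ep^{K-2}$ cancel by the recurrences (\ref{L2}), (\ref{hom-2}) for ${\mathscr N}^{(j)}$ and $h^{(j+2)}$ --- with the $\bsl x$-derivatives commuted through ${\mathscr N}^{(j)}(\bsl y)$ via the tensors ${\mathscr M}^{(j)},{\mathscr L}^{(j)}$ of (\ref{MMM})--(\ref{LLL}) --- together with (\ref{v1}). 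What survives is $\bsl g^{(K)}=-\sum_{m\ge K-1}\ep^m\,\curl_{\bsl x}\bsl\Theta_m$, a finite sum in which each $\bsl\Theta_m$ is built from $\hat A^\varepsilon$, ${\mathscr N}^{(j)}(\bsl x/\ep)$, $(\curl_\bsl{y}{\mathscr N}^{(j)})(\bsl x/\ep)$ and $(\del_\bsl{y}{\mathscr K}^{(j)})(\bsl x/\ep)$ ($1\le j\le K$) and from fixed $\bsl x$-derivatives of the $\bsl v_k$; since ${\mathscr N}^{(j)},\curl_\bsl{y}{\mathscr N}^{(j)},\del_\bsl{y}{\mathscr K}^{(j)}\in L^2(Q)$ and the $\bsl v_k\in[C^\infty_{\rm per}(\torus)]^3$ (the equations (\ref{v1})--(\ref{conv5}) have constant coefficients and smooth data because $\hat h^{(2)}$ is a constant symmetric positive-definite matrix), each $\bsl\Theta_m$ is bounded in $[L^2(\torus)]^3$ uniformly in $\ep$. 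Putting $\bsl\Phi^{(K)}:=-\sum_{m\ge K-1}\ep^m\bsl\Theta_m$, we obtain $\bsl g^{(K)}=\curl_{\bsl x}\bsl\Phi^{(K)}$ with $\|\bsl\Phi^{(K)}\|_{[L^2(\torus)]^3}\le C\,\ep^{K-1}$.

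Estimate (i) now follows: integrating by parts in the display above gives $\int_\torus\hat A^\varepsilon\curl\bsl R^{(K)}\cdot\curl\bphi=\int_\torus\bsl\Phi^{(K)}\cdot\curl\bphi$ for all $\bphi\in H^1_{\curl}(\torus)$, and the choice $\bphi=\bsl R^{(K)}$ together with the ellipticity of $A$ yields $\nu\|\curl\bsl R^{(K)}\|^2_{[L^2(\torus)]^3}\le\|\bsl\Phi^{(K)}\|_{[L^2(\torus)]^3}\|\curl\bsl R^{(K)}\|_{[L^2(\torus)]^3}$, i.e. (i) with $C^{(K)}_1=\nu^{-1}C$. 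For (ii), $\dive\bsl R^{(K)}=-\dive\bsl u^{(K)}$ because $\dive\bsl u^\varepsilon=0$; taking the divergence of (\ref{trunc1}) and collecting powers of $\ep$, the negative powers drop out (as ${\mathscr K}^{(1)}=0$ and every ${\mathscr N}^{(j)}$ is divergence-free) and the recurrences (\ref{K1_eq}), (\ref{divergence30}), (\ref{asym2}) annihilate the orders $\ep^0,\dots,\ep^{K-1}$; rewriting each surviving $\del_\bsl{y}$-contribution via $(\dive_\bsl{y}G)(\bsl x/\ep)=\ep\,\dive_{\bsl x}(G(\bsl x/\ep))$, the remainder is a finite sum of terms $\ep^m\dive_{\bsl x}\bsl\Xi_m$ with $m\ge K$ and $\bsl\Xi_m$ bounded in $[L^2(\torus)]^3$ (here ${\mathscr K}^{(j)}\in H^2_{{\rm per},0}(Q)$ is used), whence $\|\dive\bsl R^{(K)}\|_{H^{-1}(\torus)}\le C^{(K)}_2\ep^K$. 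For (iii), $\langle\bsl R^{(K)}\rangle_\torus=-\langle\bsl u^{(K)}\rangle_\torus$ since $\langle\bsl u^\varepsilon\rangle_\torus=\bsl 0$; the mean of $\bsl v^{(K)}$ vanishes and so does that of every $\ep^j\del_{\bsl x}(\cdots)$-term in (\ref{trunc1}) by $\torus$-periodicity, so $\langle\bsl u^{(K)}\rangle_\torus$ reduces to a finite sum of terms $\ep^j\bigl\langle g_j(\bsl x/\ep)\,\phi_j(\bsl x)\bigr\rangle_\torus$, $1\le j\le K$, with $g_j$ a component of $\del_\bsl{y}{\mathscr K}^{(j)}$ or of ${\mathscr N}^{(j)}$, which is $Q$-periodic of zero mean, and $\phi_j\in C^\infty_{\rm per}(\torus)$ a component of a fixed $\bsl x$-derivative of some $\bsl v_k$; writing $g_j=\dive_\bsl{y}\bigl(\del_\bsl{y}\Delta_\bsl{y}^{-1}g_j\bigr)$ and integrating by parts in $\bsl x$, and iterating (each step produces a factor $\ep$ and replaces $g_j$ by a zero-mean $Q$-periodic $L^2$-field and $\phi_j$ by a first derivative of itself), bounds each such term by $C_M\|\bsl v_k\|_{C^{M+K}(\torus)}\ep^M$ for every $M$, which is (iii) since the $\bsl v_k$ are smooth.

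The substantive work is concentrated in the second paragraph: the term-by-term check that the truncated double series, inserted into the curl--curl operator, leaves only contributions of order $\ge\ep^{K-1}$, and --- crucially, since it is what makes the elementary energy argument for (i) close --- that each of them is the $\bsl x$-curl of an $[L^2(\torus)]^3$-bounded oscillating field. The bookkeeping lies in keeping track of tensor orders, in commuting all $\bsl x$-derivatives through the ${\mathscr N}^{(j)}(\bsl y)$ by means of ${\mathscr M}^{(j)},{\mathscr L}^{(j)}$, and in matching the interlocking hierarchies of recurrences for ${\mathscr N}^{(j)}$, ${\mathscr K}^{(j)}$ and $\bsl v_k$; the uniform $L^2$-bounds on the residual coefficients rest precisely on ${\mathscr N}^{(j)}\in H^1_{{\rm per},0}(Q)$, ${\mathscr K}^{(j)}\in H^2_{{\rm per},0}(Q)$ and on the smoothness of the $\bsl v_k$. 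The divergence estimate (ii) needs the analogous, slightly more delicate, count: the surviving $O(\ep^{K-1})$ term there must be recast as $\ep^K\dive_{\bsl x}(\,\cdot\,)$ in order to land in $H^{-1}(\torus)$ with the claimed power of $\ep$.
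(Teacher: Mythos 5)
Your overall strategy coincides with the paper's: evaluate $\curl\{\hat{A}^\varepsilon\curl\,\cdot\,\}$ on the truncation, use the interlocking recurrences for ${\mathscr N}^{(j)}$, $h^{(j+2)}$, ${\mathscr K}^{(j)}$ and $\bsl {v}_k$ to cancel all contributions up to order $\ep^{K-2}$, close (i) by an energy estimate on the remainder equation, obtain (ii) from the divergence recurrences, and obtain (iii) from a Riemann--Lebesgue-type argument (your iterated $\dive_\bsl{y}\Delta_\bsl{y}^{-1}$ integration by parts is in effect a proof of the paper's Lemma \ref{RL_type}, which the paper simply invokes). Parts (ii) and (iii) are essentially the paper's proofs and are sound.

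The gap is in (i), in the structural claim that $\bsl {g}^{(K)}=\curl_{\bsl x}\bsl{\Phi}^{(K)}$ with $\|\bsl{\Phi}^{(K)}\|_{[L^2(\torus)]^3}\le C\ep^{K-1}$, on which your entire energy argument for (i) rests. The bookkeeping you describe does not deliver this. Among the surviving order-$\ep^{K-1}$ contributions are terms of the form $\bigl(\curl_\bsl{y}\{A\,\curl_{\bsl x}W\}+{\mathscr L}^{(K+1)}\del_{\bsl x}^{K}\curl_{\bsl x}\bsl {v}\bigr)\bigr\vert_{\bsl{y}=\bsl{x}/\ep}$ with $W={\mathscr N}^{(K)}\del_{\bsl x}^{K-1}\curl_{\bsl x}\bsl {v}$; by (\ref{LLL}) the second summand is a \emph{partial} curl $(\curl_{\bsl x}F)\vert_{\bsl{y}=\bsl{x}/\ep}$ of a two-scale field, not the full curl of a function on $\torus$. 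Converting it via $(\curl_{\bsl x}F)\vert_{\bsl{y}=\bsl{x}/\ep}=\curl\bigl(F\vert_{\bsl{y}=\bsl{x}/\ep}\bigr)-\ep^{-1}(\curl_\bsl{y}F)\vert_{\bsl{y}=\bsl{x}/\ep}$ either leaves the $O(1)$ field $F$ under the full curl or produces a term of order $\ep^{K-2}$ that does not cancel, since the cancellation would require the corrector ${\mathscr N}^{(K+1)}$, which is absent from $\bsl {u}^{(K)}$. So part of the residual at order $\ep^{K-1}$ is an $L^2$-bounded field that is not exhibited as an exact curl of something small, and your pairing with $\curl\bphi$ does not close. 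The repair is the paper's own route: keep the residual as $-\ep^{K-1}\bsl{\Theta}_2$ with $\bsl{\Theta}_2$ bounded in $[L^2(\torus)]^3$, test the remainder equation with $\bsl {R}^{(K)}$ itself, apply Cauchy--Schwarz to get $\nu\|\curl\bsl {R}^{(K)}\|^2\le\ep^{K-1}\|\bsl{\Theta}_2\|\,\|\bsl {R}^{(K)}\|$, and then control $\|\bsl {R}^{(K)}\|_{[L^2(\torus)]^3}$ by $\|\curl\bsl {R}^{(K)}\|+\|\dive\bsl {R}^{(K)}\|$ via the Poincar\'e-type inequality (\ref{Maxwell_inequality}), using the bound $\|\dive\bsl {R}^{(K)}\|_{L^2(\torus)}=O(\ep^K)$ from (ii). This is why, in the paper, estimate (i) genuinely depends on (ii), rather than being independent of it as in your write-up.
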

				\begin {proof}

	(i) We evaluate the operator on the left-hand side of (\ref{firsteq}) on the function $u^{(K)}:$  
						\begin{gather}
							\textbf{curl}\big\{\hat{A}^\varepsilon\textbf {curl}\,\bsl {u}^{(K)}\big\}=
						\sum_{j=0}^{K-2}\ep^j\curl_{\bsl x}\bigl\{\hat {h}^{(j+2)}\curl_{\bsl x}\bsl {v}^{(K)}\bigr\}+\ep^{K-1}\bsl{\Theta}_1\bigl(\bsl {v}^{(K)};\ep,K\bigr),\label{uk451}\\
							\bsl{\Theta}_1(\bullet
							;\ep,K):=\ep\,\Bigl(\curl_{\bsl x}\bigl\{A\,\curl_{\bsl x}\bigl\{\del_\bsl{y}\bigl ({\mathscr K}^{(K)}\del_{\bsl x}^K
							 \bullet\bigr)\bigr\}+{\mathscr N}^{(K)}\del_{\bsl x}^{K-1}\curl_{\bsl x} \bullet
							\bigr\}\Bigr)\ \ \ \ \ \ \ \ \ \ \ \ \ \ \ \ \ \ \nonumber
							\\[0.7em]\ \ \ \ \ \ \ \ \ \ \ \ \ \ \ +\curl_\bsl{y}\bigl\{A\,\curl_{\bsl x}\del_\bsl{y}\bigl( {\mathscr K}^{(K)}\del_{\bsl x}^K
							 \bullet\bigr)\bigr\}+
							\bigl\{\curl_\bsl{y}\bigl(A{\mathscr M}^{(K+1)}\bigr)+{\mathscr L}^{(K+1)}\bigr\}\del_{\bsl x}^K\curl_{\bsl x}\bullet\Bigr\vert_{\bsl{y}=\bsl{x}/\varepsilon}.
							\nonumber
						\end {gather}
					Substituting the expression for $\bsl{v}^{(K)}$ from (\ref {VK}) into (\ref {uk451}), we obtain
						\begin {multline*}
							\textbf{curl}\big\{\hat{A}^\varepsilon\textbf {curl}\,\bsl {u}^{(K)}\big\}=
							\sum_{j=0}^{K-2}\sum_{k=0}^K\ep^{j+k}\curl
							\bigl\{\hat {h}^{(j+2)}\del
							^j\curl\,
							\bsl {v}_k\bigr\}
							\\[-0.1em]
							+\ep^{K-1}\sum_{k=0}^K\ep^k\bsl{\Theta}_1(\bsl {v}_k;\ep,K)=
						\bsl {f}+\ep^{K-1}\bsl{\Theta}_2(\bsl {x},\ep,K),\ \ \ \ \ \ \ \ \ \ \ \ \ \ \ \ \ \ \ \ 
						\end {multline*}
						$$\bsl{\Theta}_2(\bsl {x},\ep,K):=
						\sum_{j+k=K-1}^{2K-2}\ep^{j+k-K+1}\curl
						\bigl\{\hat {h}^{(j+2)}\del
						^j\curl\,
						\bsl {v}_k\bigr\}+\sum_{k=0}^K\ep^k\bsl{\Theta}_1(\bsl {v}_k;\ep,K),\ \ \ \ \ \ \ \ \ \ \ \ \ \ \ \ \ \ \ \ \ \ \ \ \ \ \ \ \ \ 
						$$
					where 
					we use (\ref {v1}).
					It can be shown that 
					\[
					\bigl|\bsl{\Theta}_2(\bsl {x},\ep,K)\bigr|\leq\bar{C}^{(K)},\ \ \ \ \ \ \bsl{x}\in\torus,
					\] 
					with an $\varepsilon$-independent constant $\bar{C}^{(K)}>0.$ 
				Recalling (\ref{firsteq}), we obtain
						\begin {equation}\label {remain1}
							\textbf{curl}\big\{\hat{A}^\varepsilon(\bsl{x})\textbf {curl}\,\bsl {R}^{(K)}(\bsl{x},\varepsilon)\big\}=-\ep^{K-1}\bsl{\Theta}_2(\bsl {x},\ep,K).
						\end {equation}								
					Further, taking the scalar product of both sides of equation (\ref {remain1}) with $\bsl {R}^{(K)},$ integrating over $\mathbb {T},$ and 
					using the Cauchy-Schwarz inequality on the right-hand side yields
						$$\int_{\mathbb {T}}\,\hat{A}^\varepsilon
						\textbf {curl}\,\bsl {R}^{(K)}\cdot \textbf {curl}\,\bsl {R}^{(K)}
						=
					-\ep^{K-1}\int_{\mathbb {T}}\bsl{\Theta}_2\cdot \bsl {R}^{(K)}
						\leq \ep^{K-1}\| \bsl{\Theta}_2\|_{[L^2(\mathbb {T})]^3}\bigl\| \bsl {R}^{(K)}\bigr\|_{[L^2(\mathbb {T})]^3}.$$
					Since the matrix $\hat{A}^\varepsilon$ is positive definite, it follows that
						$$\nu\bigl\|\curl\,\bsl {R}^{(K)}\bigr\|_{[L^2(\mathbb {T})]^3}^2\leq \ep^{K-1}\| \bsl{\Theta}_2\|_{[L^2(\mathbb {T})]^3}\bigl\| \bsl {R}^{(K)}\bigr\|_{[L^2(\mathbb {T})]^3},$$
					and using the Poincar\'{e}-type inequality from Appendix A yields
						$$\nu\bigl\|\curl\,\bsl {R}^{(K)}\bigr\|_{[L^2(\mathbb {T})]^3}^2\leq \ep^{K-1}\vert\torus\vert\| \bsl{\Theta}_2\|_{[L^2(\mathbb {T})]^3}\Big (\bigl\| \curl\,\bsl {R}^{(K)}\bigr\|_{[L^2(\mathbb {T})]^3}+\bigl\| \dive\,\bsl {R}^{(K)}\bigr\|_{L^2(\mathbb {T})}\Big).$$
					 It will be shown in the proof of (ii) that $\bigl\|\dive\,\bsl {R}^{(K)}\bigr\|_{L^2(\mathbb {T})}=O\bigl(\ep^{K}\bigr)$. Hence 
						$$\bigl\|\curl\,\bsl {R}^{(K)}\bigr\|_{[L^2(\mathbb {T})]^3}^2\leq\vert\torus\vert\| \bsl{\Theta}_2\|_{[L^2(\mathbb {T})]^3}\nu^{-1}\ep^{K-1}\bigl\|\curl\,\bsl {R}^{(K)}\bigr\|_{[L^2(\mathbb {T})]^3}+O\bigl(\ep^{2K-1}\bigr),$$
						which implies the claim.

				(ii)	Note that $\dive\,\bsl {u}^\ep=0$ and hence
						$\dive\,\bsl {R}^{(K)}=-\dive \,\bsl {u}^{(K)}.$
					 Denoting by 
					  $\bsl {U}_{l}$ the coefficient in front of $\varepsilon^l$ in the combined sum (\ref {trunc1})--(\ref{VK}):
						\begin{equation}
						\bsl {U}_{0}:=\bsl {v}_0,\ \ 
						\bsl {U}_{l}:=\sum_{j=1}^K\sum_{k=0}^{\min\{l-j, K\}}\Bigl\{\del_\bsl{y}\bigl({\mathscr K}^{(j)}\del_{\bsl x}^j\bsl {v}_k\bigr)+ \del_{\bsl x}\bigl({\mathscr K}^{(j-1)}\del_{\bsl x}^{j-1}\bsl {v}_k\bigr)+{\mathscr N}^{(j)}\del_{\bsl x}^{j-1}\bft {curl}_{\bsl x}\bsl {v}_k\Bigr\},\ \ \ l=1,2,...,
						\label{ujkeq}
						\end{equation}
					we write
						\begin {equation*}
							-\dive\,\bsl {R}^{(K)}(\bsl{x}, \varepsilon)=\sum_{l=1}^{2K}\bigl\{\ep^{l}\dive_{\bsl x}\bsl {U}_{l}+\ep^{l-1}\dive_\bsl{y}\bsl {U}_{l}\bigr\}\bigr\vert_{\bsl{y}=\bsl{x}/\varepsilon}.
	\end {equation*}
					Taking into account the recurrence relations (\ref{asym2}), we obtain
						$$-\dive\,\bsl {R}^{(K)}(\bsl{x}, \varepsilon)
						=\ep^{K} \Theta_3(\bsl {x},\ep, K),\ \ \ \ \ \ \ \ 
					 \Theta_3(\bsl {x},\ep, K):=
						\biggl\{\,\sum_{l=K
						}^{2K}\ep^{
						l-K}\dive_{\bsl x}
						\bsl{U}_l
						+\sum_{
			          		l=K+1
						}^{2K}\ep^{
						l-K-1}\dive_\bsl{y}\bsl{U}_l	\biggr\}\biggr\vert_{\bsl{y}=\bsl{x}/\varepsilon}								
						.$$
					Note first, see (\ref{ujkeq}), that $\bsl{U}_l$ is a finite sum of 
					terms of the form
						$U(\bsl {y})V(\bsl {x}),$
					for some tensors $U(\bsl {y})$ with elements in $H^1(Q)$ and $V(\bsl {x})$ with elements in $C^\infty_{\rm per}(\torus )$. Further, since
						$$\dive_\bsl{y} \bsl {U}_{l}\bigr\vert_{\bsl{y}=\bsl{x}/\varepsilon}=\ep \dive\,\bigl(\bsl {U}_{l}\bigr\vert_{\bsl{y}=\bsl{x}/\varepsilon}\bigr)-\ep \dive_{\bsl x}\bsl {U}_{l}\bigr\vert_{\bsl{y}=\bsl{x}/\varepsilon},$$
						the two-scale form of $ \Theta_3$ is a finite sum of terms of the form $\widetilde{U}(\bsl {y})\widetilde{V}(\bsl {x}),$ for some tensors $\widetilde{U}(\bsl {y})$ with elements in $L^2(Q)$ and $\widetilde{V}(\bsl {x})$ with elements in $C^\infty (\torus ).$
						Finally, we use the following statement, which is a version of the theorem in \cite[Appendix C]{bib13}.
						\begin {Lem}
				               \label{RL_type}
						Let $M(\bsl {x}/\ep)$ be a periodic tensor of order $j$ whose components have zero average on $Q$ and let $g(\bsl {x})$ be a smooth, periodic tensor of order $j-1$. Then there exist positive constants $C_r,$ $r\in{\mathbb N},$ such that
							\begin {equation}\label {monster}
								\left | \int_\torus M(\bsl {x}/\ep)g(\bsl {x})\dx \right | \leq C_r\ep^r, \ \ \ \ \ \ \forall r\in \mathbb {N}.
							\end {equation}
					\end {Lem}
						It follows from  the above lemma that the $L^2(\torus)$-norm of $ \Theta_3$ is bounded by an $\ep$-independent 
						constant $C_2^{(K)},$ and hence 
						$$\bigl\| \dive\,\bsl {R}^{(K)}\bigr\|_{L^2(\torus )}\leq \ep^K\|  \Theta_3\|_{L^2(\torus )}\leq C_2^{(K)}\ep^K.$$


				(iii)	
					Integrating (\ref{trunc1}) over $\torus$ and using $Q$-periodicity of ${\mathscr K}^{(j)}$ and $\torus$-periodicity of $\bsl {v}$ yields
						$$\int_\torus \bsl {u}^{(K)}(\bsl {x})\,\dx =\sum_{j=0}^K\ep^j
						\int_\torus{\mathscr N}^{(j)}(\bsl {x}/\ep )\del
						^{j-1}\bft {curl}\,
						\bsl {v}^{(K)}(\bsl {x})\,\dx .$$
					The result follows by applying once again Lemma \ref{RL_type}.
			\end {proof}
			\begin {Corol}
			\label{corol1}
				The estimates
					\begin{equation}
					\bigl\|\bsl {R}^{(K)}\bigr\|_{[L^2(\torus )]^3}\leq C^{(K)}_3\ep^{K-1},\ \ \ \ \ 
					\bigl\| \bsl {R}^{(K)}\bigr\|_{H^1_\curl(\torus )}\leq C^{(K)}_4\ep^{K-1},\ \ \ \ \ \ \ C^{(K)}_3, C^{(K)}_4>0,
					\label{RK_L2}
					\end{equation}
			hold, where  $C^{(K)}_3$ and $C^{(K)}_4$ are $\varepsilon$-independent but may depend on the function $\bsl {f}$.
			\end {Corol}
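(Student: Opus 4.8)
The plan is to read off both bounds in (\ref{RK_L2}) from the three estimates of Theorem~\ref{thrm1}, combined with the Poincar\'e-type (Maxwell) inequality (\ref{Maxwell_inequality}) of Appendix~A. First I would observe that $\bsl{R}^{(K)}=\bsl{u}^\ep-\bsl{u}^{(K)}$ belongs to $H^1_\curl(\torus)$: the field $\bsl{u}^\ep$ lies in ${\mathcal X}(\torus)\subset H^1_\curl(\torus)$, while for each fixed $\ep$ the function $\bsl{u}^{(K)}$ defined in (\ref{trunc1})--(\ref{VK}) is a finite sum of products of $Q$-periodic tensors (of class $H^2$ or $H^1$) evaluated at $\bsl{x}/\ep$ with smooth $\torus$-periodic tensors, so that $\curl\,\bsl{u}^{(K)}\in[L^2(\torus)]^3$; hence the $H^1_\curl(\torus)$-norm of $\bsl{R}^{(K)}$ is finite and it suffices to estimate $\|\bsl{R}^{(K)}\|_{[L^2(\torus)]^3}$ and $\|\curl\,\bsl{R}^{(K)}\|_{[L^2(\torus)]^3}$ separately.

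Since $\bsl{R}^{(K)}$ is in general neither divergence-free nor of zero mean, I would split off its average, writing $\bsl{R}^{(K)}=\bigl(\bsl{R}^{(K)}-\langle\bsl{R}^{(K)}\rangle_\torus\bigr)+\langle\bsl{R}^{(K)}\rangle_\torus$. The first summand has zero mean and the same curl and divergence as $\bsl{R}^{(K)}$, so (\ref{Maxwell_inequality}) yields
\[
\bigl\|\bsl{R}^{(K)}-\langle\bsl{R}^{(K)}\rangle_\torus\bigr\|_{[L^2(\torus)]^3}\leq|\torus|\Bigl(\bigl\|\curl\,\bsl{R}^{(K)}\bigr\|_{[L^2(\torus)]^3}+\bigl\|\dive\,\bsl{R}^{(K)}\bigr\|_{L^2(\torus)}\Bigr).
\]
Here the curl term is $O(\ep^{K-1})$ by Theorem~\ref{thrm1}(i), and the divergence term is $O(\ep^{K})$ by Theorem~\ref{thrm1}(ii) --- more precisely by the $L^2(\torus)$-bound $\|\dive\,\bsl{R}^{(K)}\|_{L^2(\torus)}\leq C_2^{(K)}\ep^K$ established in its proof. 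For the constant part, Theorem~\ref{thrm1}(iii) applied with $M=K-1$ gives $\bigl|\langle\bsl{R}^{(K)}\rangle_\torus\bigr|\leq\widetilde{C}^{(K)}_{K-1}\ep^{K-1}$, hence $\|\langle\bsl{R}^{(K)}\rangle_\torus\|_{[L^2(\torus)]^3}=|\torus|^{1/2}\bigl|\langle\bsl{R}^{(K)}\rangle_\torus\bigr|=O(\ep^{K-1})$. Adding the two contributions gives the first estimate in (\ref{RK_L2}) with an $\ep$-independent constant $C_3^{(K)}$ assembled from $|\torus|$, $C_1^{(K)}$, $C_2^{(K)}$ and $\widetilde{C}^{(K)}_{K-1}$.

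The second estimate is then immediate from the definition (\ref{norm1}) of the norm on $H^1_\curl(\torus)$: one has $\|\bsl{R}^{(K)}\|_{H^1_\curl(\torus)}=\|\bsl{R}^{(K)}\|_{[L^2(\torus)]^3}+\|\curl\,\bsl{R}^{(K)}\|_{[L^2(\torus)]^3}$, and both terms are $O(\ep^{K-1})$ --- the first by the estimate just proved, the second by Theorem~\ref{thrm1}(i) --- so one may take $C_4^{(K)}:=C_3^{(K)}+C_1^{(K)}$. There is no serious obstacle here: the whole statement is a bookkeeping consequence of Theorem~\ref{thrm1}. The one point that requires care --- and the reason Theorem~\ref{thrm1}(iii) was proved at all --- is that $\bsl{R}^{(K)}$ fails to have zero mean, so the Maxwell/Poincar\'e inequality cannot be applied to it directly and its mean must be controlled by the separate estimate (iii); once (i)--(iii) are in hand, nothing further is needed.
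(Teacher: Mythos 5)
Your argument is correct and is exactly the route the paper intends: its one-line proof (``an immediate consequence of the theorem and the inequality (\ref{Maxwell_inequality})'') is precisely your combination of Theorem \ref{thrm1}(i)--(iii) with the Poincar\'e-type inequality, including the necessary splitting off of the mean $\langle\bsl{R}^{(K)}\rangle_\torus$ so that (\ref{Maxwell_inequality}) applies. You have simply filled in the bookkeeping the paper leaves implicit, and correctly identified that part (iii) exists precisely to control the nonzero mean of the remainder.
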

			\begin {proof}
				The proof is an immediate consequence of the theorem and the inequality (\ref{Maxwell_inequality}).
			\end {proof}


		\begin{Rem} The above method of truncating (\ref {hom1}) may lead to non-elliptic higher-order problems.
		 An alternative approach, which is free of this limitation, is discussed in Section \ref {sec12}. 
		\end{Rem}

		\subsection {Example: laminate with two layers per period}\label {sec116}
			Suppose  that $A (\bsl {y})$ is given by
				$$A (\bsl {y} )=\alpha (y_2) I,\ \ \ \ \alpha (y_2) = \left \{ \begin {array}{ll} \alpha_1, & 0\leq y_2\leq l_1 , \\ \alpha_2, & l_1< y_2\leq 1, \end {array} \right.$$
			where $0<l_1<1,$ and $\alpha_1,$ $\alpha_2$ are positive constants. 
			In what follows, we determine the leading order terms in the asymptotic expansion (\ref {hom1}) 
				$$
				\textbf{curl}\big\{\hat{A}^\varepsilon\textbf {curl}\,\bsl {u}^\varepsilon\big\}
				=\curl_{\bsl x}\bigl\{\hat {h}^{(2)}\curl_{\bsl x}\bsl {v}\bigr\}+\ep \curl_{\bsl x}\bigl\{\hat {h}^{(3)}\del_{\bsl x}\curl_{\bsl x}\bsl {v}\bigr\}
				+\ep^2 \curl_{\bsl x}\bigl\{\hat {h}^{(4)}\del_{\bsl x}^2\curl_{\bsl x}\bsl {v}\bigr\}+O(\ep^3),$$
				for the case above.
			For the term of order $O(1)$, the related tensors 
			are given by
				$${\mathscr N}^{(1)}=\begin {pmatrix}
				0 & 0 & -N \\
				0 & 0 & 0 \\
				N & 0 & 0\end {pmatrix},\ \ \ 
				\hat {h}^{(2)}=\begin {pmatrix}
				\langle \alpha^{-1}\rangle^{-1} & 0 & 0 \\
				0 & \langle \alpha \rangle & 0 \\
				0 & 0 & \langle \alpha^{-1}\rangle^{-1}\end {pmatrix},$$
			where $N=N(y_2)$ satisfies the differential equation $-(\alpha N')'=\alpha '$ see Remark \ref{Sec3_remark}, and
				$$\langle \alpha^{-1}\rangle^{-1}=\bigl\{\alpha_1^{-1}l_1+\alpha_2^{-1}(1-l_1)\bigr\}^{-1},\ \ \ \ \langle \alpha\rangle=\bigl(\alpha_1l_1+\alpha_2(1-l_1)\bigr).$$
			For the term of order $O(\ep)$, we have
				$${\mathscr N}^{(2)}_{ijk} = \left \{ \begin {array}{ll} M, &  ijk=\{ 123\} , \\ -M, & ijk=\{ 321\} , \\ L, & ijk=\{ 132\} , \\ -L, & ijk=\{ 312\} , \\ 0, & \text {otherwise}, \end {array} \right.\ \ \ \ \ 
				\hat {h}^{(3)}_{ijk}= \left \{ \begin {array}{ll} a, &  ijk=\{ 112,\ 332\} , \\ -b, & ijk=\{ 211,\ 233\} , \\ 0, & \text {otherwise}, \end {array} \right .$$
			where 
					\begin{equation}
					(\alpha M')'=(\alpha N)',\ \ \ \ \  
					(\alpha L')'=\langle \alpha \rangle -\alpha,\ \ \ \ \ 
					a=-\langle \alpha L'\rangle,\ \ \ \ b=\langle \alpha N\rangle .
					\label{MLab}
					\end{equation}
			Finally, the relevant tensors for the term of order $O(\ep^2)$ are given by
				$${\mathscr N}^{(3)}_{ijkl} = \left \{ \begin {array}{ll} P, &  ijkl=\{ 1232\} , \\ -P, & ijkl=\{3212\} , \\ Q, & ijkl=\{ 1223\} , \\ -Q, & ijkl=\{ 3221\} , \\ R, & ijkl=\{ 1311,1333\} , \\ -R, & ijkl=\{ 3111,3133\} , \\ 0, & \text {otherwise}, \end {array} \right.\ \ \ \ \ 
				\hat {h}^{(4)}_{ijkl}= \left \{ \begin {array}{ll} c, &  ijkl=\{ 1212,\ 3232\} , \\ d, & ijkl=\{2121,\ 2323\} , \\ e, & ijkl=\{2112,\ 2332\} , \\ f, & ijkl=\{1111,1133,3311,3333\} , \\ 0, & \text {otherwise}, \end {array} \right .$$
			where 
					\[
					-(\alpha P')'=a+\alpha L'+(\alpha L)',\ \ \ \ 
					-(\alpha Q')'=(\alpha M)',\ \ \ \  
					-(\alpha R')'=b-\alpha N, 
					\]
					\[
					c=-\langle \alpha P'+\alpha L\rangle ,\ \ \ \ d=\langle \alpha Q'\rangle ,\ \ \ \ \ e=\langle \alpha L \rangle,\ \ \ \ f=-\langle \alpha R'\rangle.
					\]
			It is shown by direct calculation that
				\begin {align*}
					a=b=0,\ \ \ \ \ \ \ \ \ \ c=-d=-12^{-1}l_1^2l_2^2\bigl(&\alpha_1^{-1}l_1+\alpha_2^{-1}l_2\bigr)^{-1}(\beta_1-1)(\beta_2-1),\\[0.6em]
					e=  12^{-1}l_1^2l_2^2\bigl(\alpha_1^{-1}l_1+\alpha_2^{-1}l_2\bigr)(\alpha_1-\alpha_2)^2, & \ \ \ \ \ \ 
					f=  12^{-1}l_1^2l_2^2\bigl(\alpha_1^{-1}l_1+\alpha_2^{-1}l_2\bigr)^{-2}(1-\beta_1)(1-\beta_2)(\alpha_2^{-1}l_1+\alpha_1^{-1}l_2),
				\end {align*}
			where $l_1+l_2=1$, and $\beta_1=\alpha_2^{-1}\alpha_1=\beta_2^{-1}$. 
			 The homogenised equation takes the form
				\begin {multline*}
					\begin {pmatrix}
						\avalphin (v_{2,12}-v_{1,22})-\avalph (v_{1,33}-v_{3,13}) \\[0.3em]
						\avalphin (v_{3,23}-v_{2,33}-v_{2,11}+v_{1,12}) \\[0.3em]
						\avalph (v_{1,13}-v_{3,11})-\avalphin (v_{3,22}-v_{2,23}) 
					\end {pmatrix} \\[0.5em]
					+\ep^2\begin {pmatrix}
					f (v_{3,1223}-v_{1,2233})-e (v_{1,1133}-v_{3,1113}+v_{1,3333}-v_{3,1333}) \\[0.2em]
					0 \\[0.1em]
					-f (v_{3,1122}-v_{1,1223})+e (v_{1,1113}-v_{3,1111}+v_{1,1333}-v_{3,1133})
					\end {pmatrix}+O(\ep^3)={\bsl f}$$
				\end {multline*}
				\begin {Rem}
					It was shown in \cite {bib13} that in the case of a scalar equation, all terms with odd powers of $\ep$ are absent from the corresponding infinite-order homogenised equation. The above two-layered case  provides an example in which there is a non-trivial term of order $O(\ep^3)$ in the homogenisation procedure for the Maxwell system. It is calculated that
					$$\scalemath{0.85}{{\mathscr N}^{(4)}_{ijklm} = \left \{ \begin {array}{ll} N_1, &  ijklm=\{ 12232\} , \\ -N_1, & ijklm=\{32212\} , \\ N_2, & ijklm=\{ 13121,13323\} , \\ -N_2, & ijklm=\{ 31121,31323\} , \\ N_3, & ijklm=\{ 13332,13112\} , \\ -N_3, & ijklm=\{ 31332,31112\} 					, \\ N_4, & ijklm=\{ 12311,12333\} , \\ -N_4, & ijklm=\{ 32111,32133\} , \\ N_5, & ijklm=\{ 12223\} , \\ -N_5, & ijklm=\{ 32221\} , \\ N_6, & ijklm=\{ 23212\} , \\ -N_6, & ijklm=\{ 21232\} , \\ N_7, & ijklm=\{ 23111,23133\} , \\ -N_7, & ijklm=\{ 21311,21333\} ,\\ 						0, & \text {otherwise}, \end {array} \right.\ \ \ \ \ 
					\ \ \hat {h}^{(5)}_{ijklm}= \left \{ \begin {array}{ll} h_1, &  ijklm=\{ 12212,32232\} , \\ h_2, & ijklm=\begin {Bmatrix} 11121,\ 11323 \\ 33121,\ 33323\end {Bmatrix} , \\ h_3, & ijklm=\begin {Bmatrix}11112,\ 11332\\ 33112,\ 33332\end {Bmatrix} , \\ h_4, & 							ijklm=\{21212,23232\} , \\ h_5, & ijklm=\{21221,23223\} , \\ h_6, & ijklm=\begin {Bmatrix} 21111,\ 21133\\ 23311,\ 23333\end {Bmatrix} ,  \\ h_7, & ijklm=\begin {Bmatrix}12111,\ 12133 \\ 32311,\ 32333\end {Bmatrix} , \\ 0, & \text {otherwise}, \end {array} 					\right .}$$
				so that the term in question is given by
					\begin {multline*}
						\begin {pmatrix}
							(h_1-h_2+h_5-h_7) (v_{1,22233}-v_{3,12223})+(h_3-h_4+h_6) (v_{1,11233}-v_{3,11123}+v_{1,23333}-v_{3,12333}) \\[0.2em]
							0 \\[0.1em]
							-(h_1-h_2+h_5-h_7) (v_{1,12223}-v_{3,11222})-(h_3-h_4+h_6) (v_{1,12333}-v_{3,11233}+v_{1,11123}-v_{3,11112}).
						\end {pmatrix}
							%
					\end {multline*}
				The functions $N_i,\ i=1,\dots ,7$, are functions of $y_2$ only, which satisfy 
					$$-(\alpha N_1')'=(\alpha P)'+\alpha P'+\alpha L+c,\ \ \ -(\alpha N_2')'=\alpha M+d,\ \ \ -(\alpha N_3')'=\alpha L-e,$$
					$$-(\alpha N_4')'=(\alpha R)'+\alpha R'+f,\ \ \ -(\alpha N_5')'=(\alpha Q)',\ \ \ -(\alpha N_6')'=\alpha P'+\alpha L+c,\ \ \ -(\alpha N_7')'=\alpha R'+f,$$
				and the constants $h_i,\ i=1,\dots ,7,$ are given by
					$$h_1=-\langle \alpha N_1'+\alpha P\rangle , \ \ \ h_2=-\langle \alpha N_2'\rangle , \ \ \ h_3=-\langle \alpha N_3'\rangle ,$$
					$$h_4=\langle \alpha P\rangle ,\ \ \ h_5=\langle \alpha Q\rangle ,\ \ \ h_6=\langle \alpha R\rangle ,\ \ \ h_7=-\langle \alpha N_4'+\alpha R\rangle.$$
				It is shown directly that $h_1=-h_5$, $h_2=-h_7$ and $h_3=-h_6,$ hence the above order $O(\ep^3)$ term takes the form
					$$\begin {pmatrix}
							-h_4(v_{1,11233}-v_{3,11123}+v_{1,23333}-v_{3,12333}) \\[0.2em]
							0 \\[0.1em]
							h_4(v_{1,12333}-v_{3,11233}+v_{1,11123}-v_{3,11112})
					\end {pmatrix}.$$
				\end {Rem}

	\section {Variational approach}\label {sec12}

				Consider the minimisation problem
				\begin {equation}\label {var1}
					\min_{\bsl{u}
					}\int_{\torus}\Big ( \frac {1}{2}\hat{A}^\varepsilon\,
					\curl\,\bsl {u} \cdot \curl \,\bsl {u} -\bsl {f} \cdot \bsl {u} \Big ) 
					=:I(\ep ,\bsl {f}),
				\end {equation}
			where the minimum is taken over  $\torus$-periodic functions with zero average and zero divergence. 
			Clearly (\ref{firsteq}) is the Euler-Lagrange equation for the functional (\ref {var1}). Therefore, one has 
				\begin {equation}\label {var2}
					I(\ep ,\bsl {f})=-\frac {1}{2}\int_{\torus}\hat{A}^\varepsilon\,
					\curl \,\uep \cdot \curl \,\uep 
					=-\frac {1}{2}\int_\torus \bsl {f} \cdot \uep.
				\end {equation}


		\subsection {Variational asymptotics}\label {sec122}
			Similarly to (\ref {var1}), the solution 
			to the homogenised equation (\ref{hom_eq}) 
	is the solution to the minimisation problem
				\begin {equation}\label {var3}
					\min_{\bsl{v}
					}\int_\torus \Big ( \frac {1}{2}\hat {h}^{(2)}\curl\,
					\bsl {v}
					\cdot \curl\,
					\bsl {v}
					-\bsl {f}
					\cdot \bsl {v}
					\Big)
					=:I_0(\bsl {f}) ,
				\end {equation}
			over all divergence free, periodic functions on $\torus $. 
			It is well-known, see {\it e.g.} \cite{bib6}, that the functional (\ref {var1}) converges to the homogenised functional (\ref {var3}) as $\ep \rightarrow 0$, {\it i.e.} for any function $\bsl {f}$ one has
				$I(\ep ,\bsl {f})\rightarrow I_0(\bsl {f})$ as $\ep \rightarrow 0.$ 
			The following result generalises this fact to all finite orders in $\varepsilon.$ 
				\begin {Prop}\label {prop001}
  				For any function $\bsl {f}\in\bigl[C^\infty_{\rm per}(\torus)\bigr]^3$ and any positive integer K, there exists a constant $\widehat{C}_K$ such that
						\[
						\biggl\vert I(\ep ,\bsl {f})+\frac {1}{2}\int_\torus \bsl {f}(\bsl {x})\cdot \bsl {v}^{(K)}(\bsl {x},\ep )\ \mathrm {d}\bsl {x}\biggr\vert\leq \widehat{C}_K\ep^K.
						\]
				\end {Prop}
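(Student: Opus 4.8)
The plan is to compare the exact energy $I(\ep,\bsl f)$ with the truncated-expansion energy by exploiting the variational characterisation (\ref{var2}) together with the remainder estimates of Theorem \ref{thrm1}. Since $I(\ep,\bsl f)=-\frac12\int_\torus\bsl f\cdot\uep$, it suffices to control $\bigl|\int_\torus\bsl f\cdot(\uep-\bsl v^{(K)})\bigr|$. Writing $\uep-\bsl v^{(K)}=\bsl R^{(K)}+(\bsl u^{(K)}-\bsl v^{(K)})$, we treat the two pieces separately: the genuine remainder $\bsl R^{(K)}$ and the corrector part $\bsl u^{(K)}-\bsl v^{(K)}$, which by (\ref{trunc1}) is an $\ep$-weighted sum of terms of the form $\ep^j M(\bsl x/\ep)V(\bsl x)$ with $M$ of zero $Q$-average.

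First I would handle the corrector part. Each summand in $\bsl u^{(K)}-\bsl v^{(K)}$ carries a factor $\ep^j$, $j\ge1$, and has the two-scale structure required by Lemma \ref{RL_type} (after absorbing the $\del_\bsl{y}$-derivatives, which either raise the power of $\ep$ or act on zero-mean periodic tensors); hence $\bigl|\int_\torus\bsl f\cdot(\bsl u^{(K)}-\bsl v^{(K)})\,\dx\bigr|\le C_K'\ep^K$, in fact with an arbitrarily high power by the oscillation lemma, but $\ep^K$ is all that is needed. Second, for the genuine remainder, naively one only has $\|\bsl R^{(K)}\|_{[L^2(\torus)]^3}=O(\ep^{K-1})$ from Corollary \ref{corol1}, which is one power short. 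To recover the missing power I would use that $\bsl f$ is divergence-free and smooth: test equation (\ref{remain1}) not against $\bsl R^{(K)}$ but observe that $\int_\torus\bsl f\cdot\bsl R^{(K)}=\int_\torus\hat A^\ep\,\curl\bsl u^\ep\cdot\bsl R^{(K)}-\ldots$; more directly, write $\bsl f=\curl\{\hat h^{(2)}\curl\bsl v_0\}$ and integrate by parts, or — cleaner — exploit the weak form: from (\ref{remain1}), $\int_\torus\hat A^\ep\curl\bsl R^{(K)}\cdot\curl\bphi=-\ep^{K-1}\int_\torus\bsl\Theta_2\cdot\bphi$ for all admissible $\bphi$, and the key point is that $\bsl\Theta_2$, while $O(1)$ pointwise, is itself a sum of rapidly oscillating zero-mean tensors times smooth functions (it is built from $h^{(j+2)}-\hat h^{(j+2)}$-type combinations and $\bsl\Theta_1$), so pairing it against the smooth test field $\bsl f$ (or $\bsl v_0$) and invoking Lemma \ref{RL_type} gains back the extra $\ep$. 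Concretely I would take $\bphi$ to be the solution of an auxiliary homogenised-type problem with data $\bsl f$, or simply re-derive that $\int_\torus\bsl f\cdot\bsl R^{(K)}=-\ep^{K-1}\int_\torus\bsl\Theta_2\cdot\bsl w$ where $\bsl w$ solves $\curl\{\hat A^\ep\curl\bsl w\}=\bsl f$ with the same divergence/mean constraints — but $\bsl w=\uep$, so this is circular; hence the correct route is the oscillation-lemma estimate on $\int_\torus\bsl\Theta_2\cdot\bsl v^{(K)}$ combined with the energy identity.

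Assembling: one writes $I(\ep,\bsl f)+\frac12\int_\torus\bsl f\cdot\bsl v^{(K)}=-\frac12\int_\torus\bsl f\cdot(\uep-\bsl v^{(K)})$ and uses the two bounds above. The cleanest bookkeeping is to note, from (\ref{uk451}) and (\ref{v1}), that $\int_\torus\bsl f\cdot\bsl u^{(K)}$ equals $\int_\torus\bsl f\cdot\bsl v^{(K)}$ up to $O(\ep^K)$ terms produced by the tail sums and by $\bsl\Theta_1$ — each of which is, after integration by parts moving the $\curl_{\bsl x}$ onto $\bsl f$-derived smooth fields, an oscillatory integral covered by Lemma \ref{RL_type} — while $\int_\torus\bsl f\cdot\bsl R^{(K)}$ is estimated by testing (\ref{remain1}) and applying the same lemma to $\int_\torus\bsl\Theta_2\cdot(\text{smooth})$ after one integration by parts. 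The main obstacle is precisely this half-power gain: the crude Cauchy--Schwarz estimate $\|\bsl R^{(K)}\|_{L^2}=O(\ep^{K-1})$ is insufficient, and one must instead pair the remainder equation against a \emph{smooth} ($\ep$-independent) field so that Lemma \ref{RL_type} can be brought to bear on the oscillatory factor in $\bsl\Theta_2$; checking that $\bsl\Theta_2$ genuinely has the zero-mean two-scale structure required (using $h^{(j+2)}=\langle\mathscr L^{(j+2)}\rangle$ and the definitions (\ref{MMM})--(\ref{LLL})) is the one place where real verification, rather than routine estimation, is needed. A convenient alternative avoiding even this is to apply Proposition \ref{prop001} with $K$ replaced by $K+1$ for the remainder term, since the extra order in the corrector construction then makes the crude $O(\ep^{(K+1)-1})=O(\ep^K)$ bound from Corollary \ref{corol1} already sufficient — this is the route I would ultimately take, as it reduces the whole proof to Corollary \ref{corol1} plus Lemma \ref{RL_type} applied to $\int_\torus\bsl f\cdot(\bsl u^{(K+1)}-\bsl v^{(K+1)})$ and a trivial comparison of $\bsl v^{(K+1)}$ with $\bsl v^{(K)}$.
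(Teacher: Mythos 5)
Your proposal is correct and its skeleton is the same as the paper's: start from the variational identity (\ref{var2}), write $I(\ep,\bsl{f})+\tfrac12\int_\torus\bsl{f}\cdot\bsl{v}^{(K)}=-\tfrac12\int_\torus\bsl{f}\cdot\bigl(\uep-\bsl{v}^{(K)}\bigr)$, split $\uep-\bsl{v}^{(K)}$ into the oscillatory corrector part of (\ref{trunc1}) and the remainder $\bsl{R}^{(K)}$, and kill the corrector integrals $\ep^{j+k}\int_\torus{\mathscr N}^{(j)}(\bsl{x}/\ep)F_{jk}(\bsl{x})\,\dx$ (and the analogous ${\mathscr K}^{(j)}$-terms) with Lemma \ref{RL_type}. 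Where you genuinely add something is the remainder term: you correctly observe that the crude bound $\bigl|\int_\torus\bsl{f}\cdot\bsl{R}^{(K)}\bigr|\le\|\bsl{f}\|_{[L^2(\torus)]^3}\,\|\bsl{R}^{(K)}\|_{[L^2(\torus)]^3}=O(\ep^{K-1})$ from Corollary \ref{corol1} is one power of $\ep$ short, a point the paper's own proof passes over (it simply records that the remainder contribution is bounded by $\widetilde{C}_K\ep^K$; note that part (iii) of Theorem \ref{thrm1} controls only the mean $\langle\bsl{R}^{(K)}\rangle_\torus$, not the pairing with a non-constant $\bsl{f}$). Your final route — run the whole argument at order $K+1$, use $\|\bsl{R}^{(K+1)}\|_{[L^2(\torus)]^3}=O(\ep^{K})$, dispose of $\int_\torus\bsl{f}\cdot(\bsl{u}^{(K+1)}-\bsl{v}^{(K+1)})$ by Lemma \ref{RL_type}, and absorb $\bsl{v}^{(K+1)}-\bsl{v}^{(K)}=\ep^{K+1}\bsl{v}_{K+1}=O(\ep^{K+1})$ — is clean, uses only results already established, and closes this gap; it is preferable to your earlier duality/oscillation musings about $\bsl{\Theta}_2$, which, as you yourself note, risk circularity and require a structural verification the index-shift trick avoids entirely. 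The exploratory middle portion of your write-up should be pruned, but the argument you ultimately commit to is sound and, on the remainder term, more careful than the published proof.
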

				\begin {proof}
                        Using Theorem \ref{thrm1},  we write  
					$$I(\ep ,\bsl {f})=-\frac {1}{2}\int_\torus \bsl {f} \cdot \bsl {u}^{(K)}
					 +\int_{\torus}\bsl {R}^{(K)}(\bsl {x},
					\ep)\dx$$
						\begin{equation}
						\label{energy_estimate}
						=-\frac {1}{2}\int_\torus \bsl {f}(\bsl {x})\cdot \bsl {v}^{(K)}(\bsl {x},\ep )\mathrm {d}\bsl {x}+\sum_{\substack {1\leq j\leq K \\ 0\leq k\leq K}}\ep^{j+k}\int_\torus {\mathscr N}^{(j)}(\bsl {x}/\ep )F_{jk}(\bsl {x})\mathrm {d}\bsl {x}+\int_{\torus}\bsl {R}^{(K)}(\bsl {x},
						\ep)\dx,
						\end{equation}
						where 
						\[
						\biggl|\int_{\torus}\bsl {R}^{(K)}(\bsl {x},
						\ep)\dx
				\biggr|\leq \widetilde {C}_K\ep^K,\ \ \ \ \ \ \ \ \widetilde {C}_K>0,
                                                   \]
						and $F_{jk}:=\del_{\bsl x}^{j-1}\curl_{\bsl x}\bsl {v}_k*\bsl {f}$ are infinitely smooth and $\torus$-periodic. 
				As the tensors ${\mathscr N}^{(j)}$ are periodic and have zero mean, the integrals                                                   \begin {equation*}
							\ep^{j+k}\int_\torus {\mathscr N}^{(j)}(\bsl {x}/\ep )F_{jk}(\bsl {x})\mathrm {d}\bsl {x}, \ \ \ \ \ \ \ \ \ 1\leq  j\leq K,\ 0\leq  k\leq K,
						\end {equation*}
			go to zero as $\ep \rightarrow 0$ faster than any power of $\ep,$ see Lemma \ref{RL_type}.
						This observation and (\ref{energy_estimate}) immediately imply the claim.
 															\end {proof}
			The above proposition can be interpreted in the sense that 
				\begin {equation}
					I(\ep ,\bsl {f})\,\stackrel{\varepsilon\to0}{\sim}\,
					\sum_{k=1}^\infty \ep^kI_k(\bsl {f}),\ \ \ \ \ \ \ \ \ \ 
				I_k(\bsl {f}):=-\frac {1}{2}\int_\torus \bsl {f}
				\cdot \bsl {v}_k
				\ \ \ \ \ \ \ \ k=0,1,2,\dots.
				\label {I1}
				\end{equation}

		\subsection {Infinite-order homogenised solution}\label {sec123}
			In this section, the effect of the rapid oscillations in the tensors ${\mathscr N}^{(j)}$ in the asymptotic expansion (\ref {A2}) is removed by 
			following the translation averaging strategy of \cite{bib13}.
			More precisely, for all 
			$ \boldsymbol {\zeta}\in Q,$ denote $A_\zeta (\bsl {y}):=A(\bsl {y}+ \boldsymbol {\zeta}),$ $\bsl {y}\in Q,$ and consider the equation
				\begin {equation}\label {zeta1}
					\curl\bigl\{ \hat{A}_\zeta^\varepsilon(\bsl{x})\,
					\curl \,\bsl {u} (\bsl {x})\bigr\}=\bsl {f}(\bsl {x}),\ \ \ \ \ \bsl{x}\in{\mathbb T},
				\end {equation}
		 under the same assumption for the right-hand side $\bsl {f}$ and subject to the same conditions on the solution 
		 $\bsl {u}$ as in the case of (\ref{firsteq}). 
			For all $ \boldsymbol {\zeta}$, the equation (\ref {zeta1}) admits a unique solution $\uep_\zeta$. 
			 Consider the average of $\uep_\zeta$ with respect to $ \boldsymbol {\zeta}$, {\it i.e.}
				\begin {equation}\label {zeta2}
					\bar {\bsl {u}}^\ep(\bsl {x}) :=\bigl\langle \uep_\zeta(\bsl{x})\bigr\rangle_\zeta=\frac {1}{|Q|}\int_Q\uep_\zeta(\bsl {x})\ \mathrm {d} \boldsymbol {\zeta},\ \ \ \ \ \bsl{x}\in{\mathbb T}.
				\end {equation}
			This averaging with respect to $ \boldsymbol {\zeta}$ is analogous to ``ensemble averaging"  for the family (\ref{zeta1}) when the underlying probability 
			measure is uniform over $Q.$ 
			The next proposition shows that translation averaging eliminates the effects of oscillations due to the tensors ${\mathscr N}^{(j)}$ and results in the validity of the asymptotics (\ref {asymptoticv}) for the average $\bar {\bsl {u}}^\ep.$ 
				\begin {Prop}\label {prop2}
					For a given function $\bsl {f}$, the series (\ref{asymptoticv}) provides asymptotics for the function $\bar {\bsl {u}}^\ep$ in the sense that for any positive integer $K$, there exists a positive constant $C_K$ such that
						\begin {equation*}
							\bigl\Vert\bar {\bsl {u}}^\ep-\bsl {v}^{(K)}\bigr\Vert_{[L^2(\torus)]^3}
							\leq C_K\ep^{K}.
						\end {equation*}
				\end {Prop}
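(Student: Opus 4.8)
The plan is to reduce the claim to Theorem~\ref{thrm1} and Corollary~\ref{corol1}, applied to the \emph{shifted} equation~(\ref{zeta1}), and then to exploit an exact cancellation produced by the $\bzeta$-averaging. First I would fix $K\in{\mathbb N}$ and apply Corollary~\ref{corol1} to~(\ref{zeta1}) at order $K+1$. Since $A_\zeta(\bsl y)=A(\bsl y+\bzeta)$ is $Q$-periodic, symmetric, bounded and uniformly elliptic with the \emph{same} constant $\nu$ as $A$, the whole construction of Sections~\ref{sec112}--\ref{sec115} goes through for each $\bzeta\in Q$: it produces cell tensors ${\mathscr N}_\zeta^{(j)},{\mathscr K}_\zeta^{(j)},{\mathscr M}_\zeta^{(j)},{\mathscr L}_\zeta^{(j)}$, homogenised tensors $\hat h_\zeta^{(j+2)}$, macroscopic profiles $\bsl v_{k,\zeta}$, a truncated field $\bsl u_\zeta^{(K+1)}$ defined exactly as in~(\ref{trunc1})--(\ref{VK}) but with the shifted tensors, and a remainder $\bsl R_\zeta^{(K+1)}:=\uep_\zeta-\bsl u_\zeta^{(K+1)}$ obeying $\bigl\|\bsl R_\zeta^{(K+1)}\bigr\|_{[L^2(\torus)]^3}\leq C_3^{(K+1)}\ep^{K}$ for each $\bzeta$.

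The key structural observation, which I would establish next, is the translation covariance of all these objects. The cell problem~(\ref{N1_eq}) and the recurrences~(\ref{MMM})--(\ref{hom-2}) and~(\ref{K1_eq})--(\ref{divergence30}) are invariant under the joint shift $A\mapsto A(\cdot+\bzeta)$, $\,{\mathscr N}^{(j)}\mapsto{\mathscr N}^{(j)}(\cdot+\bzeta)$, $\,{\mathscr K}^{(j)}\mapsto{\mathscr K}^{(j)}(\cdot+\bzeta)$; by uniqueness of their (divergence-free, mean-zero, $Q$-periodic) solutions this forces, inductively,
\[
{\mathscr N}_\zeta^{(j)}(\bsl y)={\mathscr N}^{(j)}(\bsl y+\bzeta),\quad {\mathscr K}_\zeta^{(j)}(\bsl y)={\mathscr K}^{(j)}(\bsl y+\bzeta),\quad \hat h_\zeta^{(j+2)}=\hat h^{(j+2)},\quad \bsl v_{k,\zeta}=\bsl v_k ,
\]
the last two equalities because the averages over $Q$ defining $\hat h^{(j+2)}$ and the coefficients of~(\ref{v1}) are insensitive to a shift of the integrand. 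In particular $\bsl v^{(K+1)}_\zeta=\bsl v^{(K+1)}$ is $\bzeta$-independent, and every $L^p(Q)$ norm of ${\mathscr N}_\zeta^{(j)},{\mathscr K}_\zeta^{(j)}$ (and of their derivatives) entering the bound on $\bsl\Theta_2$ in the proof of Theorem~\ref{thrm1} equals the corresponding norm of ${\mathscr N}^{(j)},{\mathscr K}^{(j)}$; hence the constant $C_3^{(K+1)}$ above may be taken \emph{independent of} $\bzeta$.

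Now I would average the expansion~(\ref{trunc1})--(\ref{VK}) for $\uep_\zeta$ over $\bzeta\in Q$. Using the covariance identities, each correction term, after being evaluated at $\bsl y=\bsl x/\ep$ and integrated in $\bzeta$, is a $Q$-average of a rigid translate $(\del_{\bsl y}^m{\mathscr K}^{(j)})(\bsl x/\ep+\bzeta)$ or ${\mathscr N}^{(j)}(\bsl x/\ep+\bzeta)$ (with $m\in\{0,1\}$) multiplied by a fixed smooth $\torus$-periodic factor in $\bsl x$; since ${\mathscr N}^{(j)}$, ${\mathscr K}^{(j)}$ and all their derivatives are $Q$-periodic with zero mean, the $\bzeta$-integral over $Q$ of each such translate vanishes \emph{identically} in $\bsl x$ and $\ep$. (This is the exact cancellation produced by translation averaging — it is why Lemma~\ref{RL_type}, which only gives asymptotic smallness, is not needed here.) Therefore $\bar{\bsl u}^\ep=\bigl\langle\uep_\zeta\bigr\rangle_\zeta=\bsl v^{(K+1)}+\bigl\langle\bsl R_\zeta^{(K+1)}\bigr\rangle_\zeta$, and the triangle inequality together with $\bsl v^{(K+1)}=\bsl v^{(K)}+\ep^{K+1}\bsl v_{K+1}$ gives
\[
\bigl\|\bar{\bsl u}^\ep-\bsl v^{(K)}\bigr\|_{[L^2(\torus)]^3}\leq \ep^{K+1}\|\bsl v_{K+1}\|_{[L^2(\torus)]^3}+\bigl\langle\bigl\|\bsl R_\zeta^{(K+1)}\bigr\|_{[L^2(\torus)]^3}\bigr\rangle_\zeta\leq C_K\,\ep^{K},
\]
which is the assertion. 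I expect the only genuinely delicate point to be the $\bzeta$-uniformity of the remainder constant in Theorem~\ref{thrm1}: it must be extracted by inspecting that proof and noting that every object there is either literally $\bzeta$-independent ($\bsl v_k$, $\hat h^{(j+2)}$, $\bsl f$) or a rigid translate of a $\bzeta$-independent periodic tensor, so that the relevant norms — being translation invariant on periodic functions — do not vary with $\bzeta$; everything else is the routine averaging computation above.
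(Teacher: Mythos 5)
Your proposal is correct and follows essentially the same route as the paper's own proof: apply the remainder estimate of Theorem \ref{thrm1}/Corollary \ref{corol1} to each shifted problem (\ref{zeta1}), note that the $\boldsymbol{\zeta}$-averages of the oscillatory correction terms vanish exactly because the translated tensors are $Q$-periodic with zero mean, and conclude by exchanging the $\boldsymbol{\zeta}$-average with the $L^2(\torus)$-norm. The only difference is that you truncate at order $K+1$ and peel off the $\ep^{K+1}\bsl{v}_{K+1}$ term, which in fact tidies up a small index looseness in the paper's appeal to (\ref{RK_L2}) (which at order $K$ literally yields only $\ep^{K-1}$), and you make explicit the translation-covariance and $\boldsymbol{\zeta}$-uniformity of constants that the paper leaves implicit.
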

				\begin {proof}
					Note that for all $ \boldsymbol {\zeta}\in Q$,  an expansion similar to (\ref {A2})--(\ref {asymptoticv}) holds 
					for the solution $\uep_\zeta$ of (\ref {zeta1}),
					where ${\mathscr K}^{(j)}$ and ${\mathscr N}^{(j)}$ are replaced by the ``translated'' tensors ${\mathscr K}^{(j)}_\zeta(\cdot):={\mathscr K}^{(j)}(\cdot+ \boldsymbol {\zeta})$ and  ${\mathscr N}^{(j)}_\zeta (\cdot):={\mathscr N}^{(j)}(\cdot+ \boldsymbol {\zeta}),$ respectively. 
					In particular, by analogy with the first estimate in (\ref{RK_L2}), one has  
						$$\bigl\| \uep_\zeta-\bsl {u}^{(K)}_\zeta\bigr\|_{[L^2(\torus)]^3}\leq \bar {C}_K \ep^K,\ \ \ \ \ \ \ \ \ \ \bar {C}_K>0,$$
					where  
					$\bsl {u}^{(K)}_\zeta$ is a truncation of the asymptotic series for $\uep_\zeta$ similar to (\ref {trunc1}), with all $\bsl{y}$-dependent objects replaced by their 
					$\boldsymbol{\zeta}$-translated versions:
						\begin {multline}
						\label{uK_sum}
							\bsl {u}_\zeta^{(K)}(\bsl {x},\ep)=\bsl {v}^{(K)}(\bsl {x},\ep)+\sum_{j=1}^K\ep^j\Big \{ \del_\bsl{y}\bigl({\mathscr K}_\zeta^{(j)}(\bsl {y})\del_{\bsl x}^j\bsl {v}^{(K)}(\bsl {x},\ep)\bigr)\\ 
							+ \del_{\bsl x}\bigl({\mathscr K}_\zeta^{(j-1)}(\bsl {y})\del_{\bsl x}^{j-1}\bsl {v}^{(K)}(\bsl {x},\ep)\bigr) +{\mathscr N}_\zeta^{(j)}(\bsl {y})\del_{\bsl x}^{j-1}\bigl(\bft {curl}_{\bsl x}\bsl {v}^{(K)}(\bsl {x},\ep)\bigr)\Big \} .
						\end {multline}
					Next, we note that
						$$\bar {\bsl {u}}^\ep(\bsl{x})-\bsl {v}^{(K)}(\bsl{x},\varepsilon)=\int_Q \bigl( \uep_\zeta(\bsl {x})-\bsl {u}_\zeta^{(K)}(\bsl {x},\ep)\bigr) \mathrm {d} \boldsymbol {\zeta},$$
					since the integrals of the $\bsl{y}$-dependent tensors under summation in (\ref{uK_sum}) clearly vanish. Therefore, one has
						$$\int_\torus \big | \bar {\bsl {u}}^\ep(\bsl{x})-\bsl {v}^{(K)}(\bsl{x},\varepsilon)\big |^2\mathrm {d}\bsl {x}\leq \int_\torus \biggl( \int_Q \big | \uep_\zeta(\bsl {x})-\bsl {u}_\zeta^{(K)}(\bsl {x},\ep)\big | \mathrm {d} \boldsymbol {\zeta}\biggr)^2\mathrm {d}\bsl {x}\ \ \ \ \ \ \ \ \ \ \ \ \ \ \ \ \ \ \ \ \ \ \ \ \ \ \ \ \ \ \ \ 
						$$
						\[
						\ \ \ \ \ \ \ \ \ \ \ \ \ \ \ \ \ \ \ \ \ \ \ \ \ \ \ \ \ \ \ \ \leq \int_Q \int_\torus \big | \uep_\zeta(\bsl {x})-\bsl {u}_\zeta^{(K)}(\bsl {x},\ep)\big |^2\mathrm {d}\bsl {x} \mathrm {d} \boldsymbol {\zeta}\leq C_K'\ep^{2K},\ \ \ \ \ \ \ \ C_K'>0,
						\]
					as required.
				\end {proof}

		\subsection {Higher-order variational problems} \label {sec124}
			By analogy with (\ref{var1}), we define the ``translated'' energy functional:
				\begin {equation}\label {zeta4}
					I_\zeta (\ep ,\bsl {f}):
					=\min_{\bsl {u}(\bsl {x})}\int_{\torus}\Big ( \frac {1}{2} A_\zeta^\varepsilon\,					
					\curl \ \bsl {u} \cdot \curl \ \bsl {u} -\bsl {f} \cdot \bsl {u} \Big ).
				\end {equation}
			and the ``averaged'' functional 
				\begin {equation*}
					\bar {I}(\ep, \bsl {f}):=\int_Q I_\zeta (\ep ,\bsl {f})\ \mathrm {d} \boldsymbol {\zeta}.
				\end {equation*}
			Considering $ \boldsymbol {\zeta}$-dependent trial fields in (\ref{zeta4}) and changing the order of $ \boldsymbol {\zeta}$-integration and minimisation yields
				\begin {equation}\label {I2}
					\bar {I}(\ep, \bsl {f})
					=\min_{\bsl {u}(\bsl {x}, \boldsymbol {\zeta})}\bar {E}_{\ep }(\bsl {u},\bsl {f}),
				\end {equation}
			where 
				\begin {equation}\label {zeta7}
					\bar {E}_{\ep }(\bsl {u},\bsl {f}):=\int_Q\int_{\mathbb T}\Big ( \frac {1}{2}\hat{A}_\zeta^\varepsilon(\bsl{x})
					\curl \,\bsl {u}(\bsl{x}) \cdot \curl \,\bsl {u} (\bsl{x})-\bsl {f}(\bsl{x})\cdot\bsl {u} (\bsl{x})\Big ) \dx
					\mathrm {d} \boldsymbol {\zeta}.
				\end {equation}
			Clearly, the variational problem (\ref {I2}) has as its minimiser the function $\bsl {u}(\bsl {x}, \boldsymbol {\zeta})=\uep_\zeta(\bsl {x}),$ where $\uep_\zeta(\bsl {x})$ is the solution of (\ref {zeta1}). Further, recall that $\uep_\zeta$ is represented by the series 
				\begin{equation*}
					\bsl {u}^\ep _\zeta(\bsl {x})=\bsl {v}(\bsl {x})+\sum_{j=1}^\infty \ep^j\Big \{ \del_\bsl{y}\bigl({\mathscr K}_\zeta ^{(j)}(\bsl {y})\del_{\bsl x}^j\bsl {v}(\bsl {x})\bigr)+ \del_{\bsl x}\bigl({\mathscr K}_\zeta ^{(j-1)}(\bsl {y})\del_{\bsl x}^{j-1}\bsl {v}(\bsl {x})\bigr) +{\mathscr N}_\zeta ^{(j)}(\bsl {y})\del_{\bsl x}^{j-1}\bft {curl}_{\bsl x}\bsl {v}(\bsl {x})\Big \}\Bigr\vert_{\bsl{y}=\bsl{x}/\varepsilon}.
				\end{equation*}
			For each positive integer $K,$ consider the subset of the set of trial fields in (\ref{I2}) obtained by truncating the above expansion:
				\begin {multline}\label {set1}
					U_K:=\biggl\{ \,\bsl {u}(\bsl {x}, \boldsymbol {\zeta}): \bsl {u}(\bsl {x}, \boldsymbol {\zeta})=\bsl {v}(\bsl {x})+\sum_{j=1}^K\ep^j\Big \{ \del_\bsl{y}\bigl({\mathscr K}_\zeta ^{(j)}(\bsl {y})\del_{\bsl x}^j\bsl {v}(\bsl {x})\bigr)\\ 
					+ \del_{\bsl x}\bigl({\mathscr K}_\zeta ^{(j-1)}(\bsl {y})\del_{\bsl x}^{j-1}\bsl {v}(\bsl {x})\bigr) +{\mathscr N}_\zeta ^{(j)}(\bsl {y})\del_{\bsl x}^{j-1}\bft {curl}_{\bsl x}\bsl {v}(\bsl {x})\Big \}\ {\rm for\ some\ \bsl{v}}\biggr\}.
				\end {multline}
			Here $\bsl {v}$ runs over the set $\bigl[C^\infty_{\rm per}({\mathbb T})\bigr]^3\cap {\mathcal X}(\torus)$ of $\torus$-periodic, divergence-free, smooth vector fields with zero mean. 
			Consider a minimisation problem for the same functional (\ref {I2}) over the set $U_K.$ 
						Substituting test functions from (\ref {set1}) into the equation (\ref {I2}) yields 
				\begin {equation}\label {UK1}
					\bar {E}_{\ep }(\bsl {u},\bsl {f})=\int_\torus \bigg \{ \sum_{j=0}^K\sum_{k=0}^K\frac {1}{2}\ep^{j+k}\tilde {h}^{(j,k)}\del
					^j\curl\,
					\bsl {v}
					\del
					^k\curl\,
					\bsl {v}
					-\bsl {f}
					\cdot \bsl {v}
					\bigg \},
				\end {equation}
			where $\tilde {h}^{(j,k)}$ is a tensor of order $j+k+2$ defined by 
				\begin {equation}\label{hjk}
					\tilde {h}^{(j,k)}:=
					\Big \langle A \big ( \curl \,{\mathscr N}^{(j+1)}+{\mathscr M}^{(j+1)}\big ) \big ( \curl\,{\mathscr N}^{(k+1)}+{\mathscr M}^{(k+1)}\big ) \Big \rangle,
					\end{equation}
					or in the index notation:
					\[
					\tilde{h}^{(j,k)}_{i_1...i_{j+k+2}}=\Big \langle A_{st}\big ( \curl \,{\mathscr N}^{(j+1)}+{\mathscr M}^{(j+1)}\big )_{si_1...i_{j+1}}\big ( \curl\, {\mathscr N}^{(k+1)}+{\mathscr M}^{(k+1)}\big )_{ti_{j+2}...i_{j+k+2}}\Big \rangle.
					\]
			Whenever $\bsl{u}$ and $\bsl{v}$ are related via the expression in the definition of $U_K,$ we set 
			 \[
			 E_K(\bsl {v},\bsl {f},\ep):=\bar {E}_{\ep }(\bsl {u},\bsl {f}),
			 \]
			 so that 
			 \begin{equation}
				\min_{\bsl {v}
				}E_K(\bsl {v},\bsl {f},\ep)=\min_{\bsl {u}(\bsl {x}, \boldsymbol {\zeta})\in U_K}\bar {E}_\ep(\bsl {u},\bsl {f}),
			\label{var6}
			 \end{equation}
			 where $\bsl{v}\in\bigl[C_{\rm per}^\infty({\mathbb T})\bigr]^3\cap {\mathcal X}(\torus).$
%
			Clearly for all $\ep,$ $\bsl {f},$ the functional $E_K(\cdot,\bsl {f},\ep)$ is convex, 
			as a result of 
			$\boldsymbol{\zeta}$-averaging of the convex functional $E_{\ep,\zeta}$. 
			The next proposition, which is similar to \cite[Proposition 3]{bib13}, shows that the 
			minimiser $\bsl {v}_K$ in (\ref{var6}) is the best variational choice of a truncated approximation.
				\begin {Prop}\label {prop3}
					For $K\geq 2$ and all functions $\bsl {v}$ one has
						\begin{equation}
						E_K(\bsl {v},\bsl {f},\ep)\geq E_K(\bsl {v}_K,\bsl {f},\ep)\geq \bar {I}(\ep,\bsl {f}).
					\label{two_sided}
						\end{equation}
					Moreover, the estimate
						\begin {equation}\label {remainder1}
							E_K(\bsl {v}_K,\bsl {f},\ep)-\bar {I}(\ep,\bsl {f})\le c_K \ep^{2K}
						\end {equation}
					holds for some constant $c_K>0.$
				\end {Prop}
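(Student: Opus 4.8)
I would treat the chain $(\ref{two_sided})$ first, which is soft, and then the remainder estimate $(\ref{remainder1})$, which is the substantial point.

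The first inequality in $(\ref{two_sided})$, $E_K(\bsl v,\bsl f,\ep)\ge E_K(\bsl v_K,\bsl f,\ep)$, is simply the minimality of $\bsl v_K$ recorded in $(\ref{var6})$. For $E_K(\bsl v_K,\bsl f,\ep)\ge\bar I(\ep,\bsl f)$ I would use that, because $\bsl f$ is divergence-free, $\bar E_\ep(\,\cdot\,,\bsl f)$ is unchanged when one adds to its argument an arbitrary $\boldsymbol\zeta$-dependent $\bsl x$-gradient ($\curl$ kills it and the linear term changes by $-\int_\torus\bsl f\cdot\nabla\phi=\int_\torus(\dive\bsl f)\phi=0$); hence $\bar I(\ep,\bsl f)$, though defined in $(\ref{I2})$ as a minimum over divergence-free fields, is also the minimum of $\bar E_\ep$ over \emph{all} trial fields, so $\bar E_\ep(\bsl u,\bsl f)\ge\bar I(\ep,\bsl f)$ for every $\bsl u\in U_K$, and $(\ref{var6})$ gives the second inequality. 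The same gradient-invariance, together with the weak form of $(\ref{zeta1})$ (valid against all smooth periodic test fields), lets one expand the quadratic functional about its minimiser $\bsl u_\zeta^\ep$ and obtain, for any trial field $\bsl u=\bsl u(\bsl x,\boldsymbol\zeta)$,
\begin{equation*}
\bar E_\ep(\bsl u,\bsl f)-\bar I(\ep,\bsl f)=\tfrac12\int_Q\int_\torus\hat A_\zeta^\varepsilon(\bsl x)\,\curl\bigl(\bsl u-\bsl u_\zeta^\ep\bigr)\cdot\curl\bigl(\bsl u-\bsl u_\zeta^\ep\bigr)\,\dx\,\dz .
\end{equation*}

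By minimality of $\bsl v_K$ it then suffices, for $(\ref{remainder1})$, to exhibit one $\bsl v$ with $E_K(\bsl v,\bsl f,\ep)-\bar I(\ep,\bsl f)\le c_K\ep^{2K}$; I would take $\bsl v=\bsl v^{(K)}=\sum_{k=0}^K\ep^k\bsl v_k$, whose corresponding member of $U_K$ is precisely the truncation $\bsl u_\zeta^{(K)}$ of $(\ref{uK_sum})$. By the displayed identity the matter reduces to the \emph{uniform-in-$\boldsymbol\zeta$} bound $\|\curl(\bsl u_\zeta^{(K)}-\bsl u_\zeta^\ep)\|_{[L^2(\torus)]^3}\le C\ep^{K}$ (then the right-hand side of the identity is $\le(2\nu)^{-1}\int_QC^2\ep^{2K}\,\dz$). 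This is a sharpening of Theorem $\ref{thrm1}$(i) by one power of $\ep$. I would obtain it as follows: substituting $\bsl u_\zeta^{(K)}$ into the left-hand side of $(\ref{zeta1})$ and rerunning $(\ref{uk451})$ in $\boldsymbol\zeta$-translated form, one gets $\curl\{\hat A_\zeta^\varepsilon\curl\bsl u_\zeta^{(K)}\}=\bsl f+\bsl g_\zeta$, where $\bsl g_\zeta$ is a finite sum of terms $\ep^{m}\,(\bsl y\text{-periodic tensor})(\bsl x/\ep+\boldsymbol\zeta)\cdot(\text{smooth }\torus\text{-periodic tensor})(\bsl x)$ with $m\ge K-1$, the $m=K-1$ tensors being built from ${\mathscr K}^{(K)},{\mathscr N}^{(K)},{\mathscr M}^{(K+1)},{\mathscr L}^{(K+1)}$ and $\curl_\bsl{y}$'s thereof. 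The key algebraic point is that the \emph{non-oscillatory} (zero-$Q$-average-valued) part of $\bsl g_\zeta$ is $O(\ep^K)$, not $O(\ep^{K-1})$: the $\ep^{K-1}$-coefficient produced by truncating the ``clean sum'' of $(\ref{uk451})$ at $j=K-2$ equals $-\curl\{\hat h^{(K+1)}\del^{K-1}\curl\bsl v_0\}$ (via the recurrence $(\ref{v1})$), whereas the $Q$-average of the last group of $\bsl\Theta_1$ in $(\ref{uk451})$ is $h^{(K+1)}\del^K_{\bsl x}\curl_{\bsl x}(\,\cdot\,)=\curl_{\bsl x}\{\hat h^{(K+1)}\del^{K-1}_{\bsl x}\curl_{\bsl x}(\,\cdot\,)\}$ — using the operator identity $\curl_{\bsl x}\hat h^{(K+1)}=h^{(K+1)}\del_{\bsl x}$ and $h^{(K+1)}=\langle{\mathscr L}^{(K+1)}\rangle$ — and contributes the opposite; the remaining non-oscillatory terms $\ep^j\curl\{\hat h^{(j+2)}\del^j\curl\bsl v^{(K)}\}$, $j\ge K$, are genuinely $O(\ep^K)$. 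Consequently, although $\|\bsl g_\zeta\|_{[L^2(\torus)]^3}=O(\ep^{K-1})$, one has $\|\bsl g_\zeta\|_{[H^{-1}(\torus)]^3}=O(\ep^K)$, since a field $T(\bsl x/\ep+\boldsymbol\zeta)\phi(\bsl x)$ with $\langle T\rangle=0$ is $O(\ep)$ in $H^{-1}$. Testing $\curl\{\hat A_\zeta^\varepsilon\curl(\bsl u_\zeta^{(K)}-\bsl u_\zeta^\ep)\}=\bsl g_\zeta$ against $\bsl u_\zeta^{(K)}-\bsl u_\zeta^\ep$, using ellipticity of $A$, the Poincar\'e-type inequality of Appendix A, and $\|\dive\bsl u_\zeta^{(K)}\|_{L^2(\torus)}=O(\ep^K)$ (Theorem $\ref{thrm1}$(ii), $\boldsymbol\zeta$-translated), then gives $\|\curl(\bsl u_\zeta^{(K)}-\bsl u_\zeta^\ep)\|_{[L^2(\torus)]^3}\lesssim\|\bsl g_\zeta\|_{[H^{-1}(\torus)]^3}=O(\ep^K)$, with constants independent of $\boldsymbol\zeta$ because the $L^2(Q)$-norms of the translated correctors and the ellipticity constant of $A(\,\cdot\,+\boldsymbol\zeta)$ do not depend on $\boldsymbol\zeta$.

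The main obstacle is exactly this one-power sharpening: Theorem $\ref{thrm1}$(i) as stated yields only $O(\ep^{K-1})$, hence the too-weak $E_K(\bsl v_K,\bsl f,\ep)-\bar I(\ep,\bsl f)=O(\ep^{2K-2})$, and recovering the two missing powers rests on the algebraic cancellation of the $\ep^{K-1}$ non-oscillatory part of the residual and on the passage to $H^{-1}$ described above. Along the way one must also verify that the ${\mathscr K}^{(j)}$-terms of $U_K$ contribute nothing of order $\ep^{K-1}$ to $\bsl g_\zeta$ — which holds because $\del_\bsl{y}({\mathscr K}^{(j)}\del^j_{\bsl x}(\,\cdot\,))$ is a $\bsl y$-gradient and $\curl_\bsl{y}\del_\bsl{y}(\,\cdot\,)=\bsl 0$ — and that all $O(\,\cdot\,)$-constants are uniform in $\boldsymbol\zeta\in Q$.
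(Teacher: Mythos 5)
Your proposal follows the same strategy as the paper's proof: take $\bsl{v}^{(K)}$ as trial field, note that its associated element of $U_K$ is the $\boldsymbol\zeta$-translated truncation $\bsl{u}^{(K)}_\zeta$, expand $\bar{E}_\ep$ about the exact minimiser $\bsl{u}^\ep_\zeta$ so that the excess energy is the quadratic form $\tfrac12\int_Q\int_\torus \hat{A}^\ep_\zeta\,\curl\bsl{R}^{(K)}_\zeta\cdot\curl\bsl{R}^{(K)}_\zeta$, and integrate over $\boldsymbol\zeta$. Where you go beyond the paper is in confronting the exponent: the paper simply invokes ``bounds analogous to Theorem \ref{thrm1}'', but Theorem \ref{thrm1}(i) as stated gives only $\|\curl\bsl{R}^{(K)}\|_{[L^2(\torus)]^3}=O(\ep^{K-1})$, which would yield $\ep^{2K-2}$ rather than the claimed $\ep^{2K}$. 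You correctly identify this as the crux and supply the missing one-power sharpening to $O(\ep^K)$, by showing that the non-oscillatory $\ep^{K-1}$ part of the residual cancels (the leftover $-\curl\{\hat{h}^{(K+1)}\del^{K-1}\curl\bsl{v}_0\}$ from the truncated recurrence against the $Q$-average $\langle{\mathscr L}^{(K+1)}\rangle\del^K\curl\bsl{v}_0$ of $\bsl\Theta_1$) and by estimating the zero-mean oscillatory part in $H^{-1}$, paired with $\|\bsl{R}^{(K)}\|_{H^1}\lesssim\|\curl\bsl{R}^{(K)}\|+\|\dive\bsl{R}^{(K)}\|$ on the torus. This is correct, and it is a genuine repair of a step the paper leaves implicit. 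Two remarks: (a) a shorter route to the same sharpening is to bootstrap, writing $\|\curl\bsl{R}^{(K)}\|\le\|\curl\bsl{R}^{(K+1)}\|+\|\curl(\bsl{u}^{(K+1)}-\bsl{u}^{(K)})\|$ and observing that both terms are $O(\ep^K)$ by Theorem \ref{thrm1}(i) applied at level $K+1$ and by direct inspection of the extra corrector terms (equivalently, one may take $\bsl{v}^{(K+1)}$ as the trial field in $U_K$); (b) your gradient-invariance argument for $\bar{E}_\ep(\bsl{u},\bsl{f})\ge\bar{I}(\ep,\bsl{f})$ on all of $U_K$ is also a worthwhile addition, since elements of $U_K$ are only divergence-free up to $O(\ep^K)$ and so do not literally lie in the admissible class over which $\bar{I}$ is defined.
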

				\begin {proof}
				The inequalities (\ref{two_sided}) follow from the minimising property of 
					$\bsl {v}_K$ 
					and the fact that $\bar {I}_K(\ep,\bsl {f})\geq \bar {I}(\ep,\bsl {f}).$
					 To obtain the estimate (\ref{remainder1}), we substitute 
					 $\uep_\zeta(\bsl {x})=\bsl {u}^{(K)}_\zeta(\bsl {x},\ep)+\bsl {R}^{(K)}_\zeta(\bsl {x}, \bsl {f},\ep)$ into (\ref {var1}) and integrate by parts.
					 The remainder $\bsl{R}^{(K)}_\zeta$ satisfies bounds analogous to those proved for $\bsl{R}^{(K)}$ in Theorem \ref{thrm1},
					 hence
					        \[
						0\le E_\ep\bigl(\bsl {u}^{(K)}_\zeta,\bsl {f}\bigr)-I_\zeta(\ep,\bsl {f})\le c_K \ep^{2K}.
						\]
						Integrating the last inequality with respect to $\boldsymbol{\zeta}\in Q$ and using the minimising property of $\bsl{v}$ again, yields (\ref{remainder1}).
				\end {proof}
						Note the following alternative formula for $\tilde {h}^{(j,k)}.$ Using integration by parts, for $j=1,2,\dots, $ we get							$$\big \langle A \big ( \curl \ {\mathscr N}^{(j+1)}+{\mathscr M}^{(j+1)}\big ) \curl\ \bsl {\phi} \big \rangle =-\big \langle {\mathscr L}^{(j+1)}\bsl {\phi}\big \rangle,\ \ \ \ \ \ \ \ \forall \bsl {\phi}\in\bigl[C^\infty_{\rm per}(Q)\bigr]^3, 
						\ \ \langle \bsl {\phi}\rangle =0,$$
						hence
							\begin {equation}\label {hjk1}
								\tilde {h}^{(j,k)}=\big \langle A \big ( \curl \ {\mathscr N}^{(j+1)}+{\mathscr M}^{(j+1)}\big ) {\mathscr M}^{(k+1)}
								-{\mathscr L}^{(j+1)}{\mathscr N}^{(k+1)}\big \rangle,
							\end {equation}
						where
							\begin {equation*}
								\big({\mathscr L}^{(j+1)}{\mathscr N}^{(k+1)}\big)_{i_1\dots i_{j+k+2}}={\mathscr L}^{(j+1)}_{si_1\dots i_{j+1}}{\mathscr N}^{(k+1)}_{si_{j+2}\dots i_{j+k+2}}.
							\end {equation*}

		\subsection {Infinite-order variational homogenised equation}\label {sec125}
			The weak form of the Euler-Lagrange equation for the problem (\ref{var6}) reads
				\begin{equation}
				\label {EL2}
					\int_\torus \bigg \{ \frac {1}{2}\sum_{j,k=0}^K\ep^{j+k}\tilde {h}^{(j,k)}\Big ( \del
					^j\curl\,
					\boldsymbol {\varphi}\del
					^k\curl\,
					\bsl {v}_K+\del
					^j\curl\,
					\bsl {v}_K\del
					^k\curl\,
					\boldsymbol {\varphi}\Big )-\bsl {f}\cdot\boldsymbol {\varphi}\bigg \}
					=0\ \ \ \ \forall\,\boldsymbol{\varphi}\in\bigl[C^\infty_{\rm per}(\mathbb T)\bigr]^3.
				\end{equation}
			For $K=0$ this gives the identity
				$$\int_\torus \Bigl( \frac {1}{2}\tilde {h}^{(0,0)}\curl\,
				\bphi\cdot\curl\,
				\bsl {v}_0+\frac{1}{2}\tilde {h}^{(0,0)}\curl\,
				\bsl {v}_0\cdot\curl\,
				\boldsymbol {\varphi}-\bsl {f}\cdot  \boldsymbol {\varphi}\Bigr)
				=0\ \ \ \ \ \forall\,\boldsymbol{\varphi}\in\bigl[C^\infty_{\rm per}(\mathbb T)\bigr]^3,$$ or in the differential form
				\begin {equation*}
					\curl
					\bigl\{\tilde {h}^{(0,0)}\curl\,
					\bsl {v}_0\bigr\}=\bsl {f},
				\end {equation*}
where $\tilde {h}^{(0,0)}=\hat {h}^{(2)},$ as follows from (\ref {hjk}). 
Proceeding in a similar manner, 
			the Euler-Lagrange equation for the case $K=1$
			reads
				\begin{align}
					& \curl
					\bigl\{\tilde {h}^{(0,0)}\curl\,
					\bsl {v}_1\bigr\} +\ep\,\curl
					\bigl\{\tilde{\tilde{h}}^1\del
					\curl\,
					\bsl {v}_1\bigr\}+\ep^2\,\curl
					\bigl\{\tilde{h}^{(1,1)}\del
					\curl\,
					\bsl {v}_1\bigr\}=\bsl {f},\nonumber\\[0.6em]
					& \tilde {\tilde {h}}^{1}_{ijk}:=\frac {1}{2} \Big \{ \tilde {h}^{(0,1)}_{ijk}-\tilde {h}^{(0,1)}_{kji}-\tilde {h}^{(1,0)}_{jik} + \tilde {h}^{(1,0)}_{jki}\Big \}. \label{EL3a}
					\end{align}
			Consider the case when formally $K=\infty$ in 
			(\ref {EL2}) and denote by $\bsl {v}^{(\infty )}$ the corresponding minimiser. Notice that 
			\[
			\min_{\bsl{v}
			}E_\infty(\bsl{v}, \bsl{f}, \varepsilon)=\min_{\bsl{u}(\bsl{x},\zeta)} E_\varepsilon(\bsl{u}, \bsl{f})=\bar{I}(\varepsilon, \bsl{f}),
			\]
			where the minimiser $\bsl{u}(\bsl{x},\boldsymbol{\zeta})=\bsl{u}_\zeta^\varepsilon(\bsl{x}),$ and therefore $\bsl{v}^{(\infty)}
			=\bar{\bsl{u}}^\varepsilon.$
			 After a series of manipulations, the related Euler-Lagrange equation is written as
				\begin {equation}\label {EL4}
					\curl
					\bigl\{\tilde {h}^{(0,0)}\curl\,
					\bsl {v}^{(\infty )}\bigr\}+\sum_{n=1}^\infty \ep^n\curl
					\bigl\{\tilde {\tilde {h}}^{n}\del
					^{n}\curl\,
					\bsl {v}^{(\infty )}\bigr\}=\bsl {f},
				\end {equation}
			where ({\it cf.} (\ref{hjk}))
				\begin{equation}
				\label {EL5}
					\tilde {\tilde {h}}^{n}_{i_1\dots  i_{n+2}}:=\frac {1}{2}\sum_{\substack {j+k=n, \\ j,k\in \mathbb {N}\cup\{0\}}}\Big \{ (-1)^j\tilde {h}^{(j,k)}_{i_{j+1}i_2\dots i_ji_1i_{j+2}\dots i_{j+k+2}}
					+(-1)^k\tilde {h}^{(j,k)}_{i_{j+1}i_2\dots i_ji_{j+k+2}i_{j+2}\dots i_{j+k+1}i_1}\Big \}.
			      \end{equation}
			In Section \ref {sec126} we show that the ``asymptotic'' infinite-order homogenised equation (\ref {hom1}) coincides with the ``variational'' infinite-order homogenised equation (\ref {EL4}).

			The next result shows that the minimiser $\bsl {v}_K$ of $E_K(\bsl{v}, \bsl{f}, \varepsilon)$ is an approximation of order $O\bigl(\varepsilon^{2K}\bigr)$ to
			the infinite-order homogenised solution.
				\begin {Prop}\label {prop004}
					Let $K\geq 2$ be a positive integer. Then for any vector function $\bsl {f}$, there exists a constant $\hat {c}_K>0$ such that
						\begin {equation*}
							\bigl\Vert\bar{\bsl{u}}^\varepsilon
							-\bsl {v}_K\bigr\Vert_{[L^2(\torus)]^3}
							\leq \hat {c}_K\ep^{K}.
						\end {equation*}
				\end {Prop}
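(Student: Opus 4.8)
The plan is to compare the variational minimiser $\bsl v_K$ with the truncated asymptotic field $\bsl v^{(K)}=\sum_{k=0}^K\ep^k\bsl v_k$ and then appeal to Proposition \ref{prop2}: since $\bigl\|\bar{\bsl u}^\ep-\bsl v^{(K)}\bigr\|_{[L^2(\torus)]^3}\le C_K\ep^K$, it suffices to show $\bigl\|\bsl v^{(K)}-\bsl v_K\bigr\|_{[L^2(\torus)]^3}\le C\ep^K$ and conclude by the triangle inequality. The structural fact I would use is that $\boldsymbol\zeta$-averaging annihilates all correction terms in the trial fields of $U_K$: if $\bsl u_{\bsl v}\in U_K$ denotes the field built from $\bsl v$ via (\ref{set1}), then $\langle\bsl u_{\bsl v}\rangle_\zeta=\bsl v$, as each $\mathscr K^{(j)}_\zeta$, $\mathscr N^{(j)}_\zeta$ has zero mean over $Q$. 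In particular $\langle\bsl u^{(K)}_\zeta\rangle_\zeta=\bsl v^{(K)}$ and $\langle\bsl u_K\rangle_\zeta=\bsl v_K$, where $\bsl u^{(K)}_\zeta$ is the field (\ref{uK_sum}) built from $\bsl v^{(K)}$ and $\bsl u_K$ the one built from $\bsl v_K$; by (\ref{var6}) the latter is the minimiser of $\bar E_\ep(\cdot,\bsl f)$ over the linear subspace $U_K$.

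First I would record the energy gap. By minimality $E_K(\bsl v^{(K)},\bsl f,\ep)\ge E_K(\bsl v_K,\bsl f,\ep)$, while (\ref{two_sided}) gives $E_K(\bsl v_K,\bsl f,\ep)\ge\bar I(\ep,\bsl f)$. On the other hand, the per-$\boldsymbol\zeta$ estimate $E_\ep(\bsl u^{(K)}_\zeta,\bsl f)-I_\zeta(\ep,\bsl f)\le c_K\ep^{2K}$ established in the proof of Proposition \ref{prop3} (the analogue for the $\boldsymbol\zeta$-translated problem of the remainder estimates of Theorem \ref{thrm1}), integrated over $\boldsymbol\zeta\in Q$, together with $\bar I(\ep,\bsl f)=\int_Q I_\zeta(\ep,\bsl f)\,\dz$ and $E_K(\bsl v^{(K)},\bsl f,\ep)=\int_Q E_\ep(\bsl u^{(K)}_\zeta,\bsl f)\,\dz$, yields $E_K(\bsl v^{(K)},\bsl f,\ep)-\bar I(\ep,\bsl f)\le c_K\ep^{2K}$. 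Hence
\[
0\le E_K\bigl(\bsl v^{(K)},\bsl f,\ep\bigr)-E_K\bigl(\bsl v_K,\bsl f,\ep\bigr)\le c_K\ep^{2K}.
\]

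Next I would convert this into an estimate on $\curl_{\bsl x}(\bsl v^{(K)}-\bsl v_K)$. Writing $\mathcal B_\ep(\bsl a,\bsl b):=\int_Q\int_\torus\hat A_\zeta^\ep\,\curl_{\bsl x}\bsl a\cdot\curl_{\bsl x}\bsl b\,\dx\,\dz$, the functional $\bar E_\ep(\cdot,\bsl f)$ is quadratic with this principal part, and $\bsl u_K$ minimises it over the linear subspace $U_K$; since $\bsl u^{(K)}_\zeta-\bsl u_K\in U_K$, the usual quadratic identity gives
\[
E_K\bigl(\bsl v^{(K)},\bsl f,\ep\bigr)-E_K\bigl(\bsl v_K,\bsl f,\ep\bigr)=\bar E_\ep\bigl(\bsl u^{(K)}_\zeta,\bsl f\bigr)-\bar E_\ep\bigl(\bsl u_K,\bsl f\bigr)=\tfrac12\mathcal B_\ep\bigl(\bsl u^{(K)}_\zeta-\bsl u_K,\ \bsl u^{(K)}_\zeta-\bsl u_K\bigr),
\]
and by uniform ellipticity of $A$ this implies $\tfrac{\nu}{2}\int_Q\bigl\|\curl_{\bsl x}\bigl(\bsl u^{(K)}_\zeta-\bsl u_K\bigr)\bigr\|^2_{[L^2(\torus)]^3}\,\dz\le c_K\ep^{2K}$. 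Averaging over $\boldsymbol\zeta$, using $\langle\curl_{\bsl x}(\bsl u^{(K)}_\zeta-\bsl u_K)\rangle_\zeta=\curl_{\bsl x}(\bsl v^{(K)}-\bsl v_K)$ and Jensen's inequality, we get $\bigl\|\curl_{\bsl x}(\bsl v^{(K)}-\bsl v_K)\bigr\|_{[L^2(\torus)]^3}\le(2c_K/\nu)^{1/2}\ep^K$. Finally, since $\bsl v^{(K)}-\bsl v_K$ is $\torus$-periodic, divergence-free and of zero mean, the Poincar\'e-type inequality (\ref{Maxwell_inequality}) of Appendix A upgrades this to $\bigl\|\bsl v^{(K)}-\bsl v_K\bigr\|_{[L^2(\torus)]^3}\le C\ep^K$; with Proposition \ref{prop2} this finishes the proof.

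I expect the main obstacle to be the third step — justifying that $\bsl u_K$ is a genuine minimiser of $\bar E_\ep$ over the \emph{linear} subspace $U_K$, so that $\bsl u^{(K)}_\zeta-\bsl u_K$ is an admissible variation and the quadratic identity is legitimate. This relies on the linearity (for fixed $\ep$) of $\bsl v\mapsto\bsl u_{\bsl v}$, on $U_K$ lying inside the trial class for $\bar E_\ep$ (where it helps that $\bar E_\ep$ is unchanged under adding $\bsl x$-gradients, $\bsl f$ being divergence-free, so the divergence constraint is immaterial), and — for an $\ep$-uniform final constant — on the $\ep$-independence of $\nu$ and of the constant in (\ref{Maxwell_inequality}). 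The remaining points, namely the zero-mean cancellation under $\boldsymbol\zeta$-averaging, Fubini/Jensen, and the Maxwell--Poincar\'e inequality, are routine.
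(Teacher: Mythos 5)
Your argument is correct, but it is organised differently from the paper's. The paper never passes through $\bsl{v}^{(K)}$: it takes the trial field $\bsl{u}_K^\zeta\in U_K$ built from the minimiser $\bsl{v}_K$ itself, sets $\bsl{R}_K^\zeta=\bsl{u}^\ep_\zeta-\bsl{u}_K^\zeta$, and uses the fact that $\bsl{u}^\ep_\zeta$ solves (\ref{zeta1}) (i.e.\ the \emph{global} Euler--Lagrange identity) to obtain the exact expansion $E_K(\bsl{v}_K,\bsl{f},\ep)=\bar{I}(\ep,\bsl{f})+\tfrac12\int_Q\int_\torus\hat{A}^\ep_\zeta\,\curl\bsl{R}_K^\zeta\cdot\curl\bsl{R}_K^\zeta$; combined with (\ref{remainder1}), ellipticity, Jensen over $\boldsymbol\zeta$ and (\ref{Maxwell_inequality}), this controls $\bar{\bsl{u}}^\ep-\bsl{v}_K$ in one step. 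You instead expand about the \emph{subspace} minimiser (Galerkin orthogonality in the linear space $U_K$) to compare $\bsl{v}_K$ with $\bsl{v}^{(K)}$, and then need Proposition \ref{prop2} and a triangle inequality to return to $\bar{\bsl{u}}^\ep$. Both routes feed off the same $O(\ep^{2K})$ energy-gap estimate from the proof of Proposition \ref{prop3} and the same coercivity--Jensen--Poincar\'e mechanism; yours costs one extra ingredient (Proposition \ref{prop2}) and requires the Euler--Lagrange identity (\ref{EL2}) in $U_K$, while the paper's needs only the exact equation for $\bsl{u}^\ep_\zeta$. The two subtleties you flag are real but benign and are equally present in the paper's treatment: the map $\bsl{v}\mapsto\bsl{u}_{\bsl v}$ in (\ref{set1}) is linear for fixed $\ep$, so $U_K$ is a linear subspace and the quadratic identity is legitimate, and elements of $U_K$ are only divergence-free up to $O(\ep^K)$, which is harmless since $\bar{E}_\ep$ sees only $\curl\bsl{u}$ and $\int\bsl{f}\cdot\bsl{u}$ with $\bsl{f}$ divergence-free. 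So your proof stands; it is a C\'ea-lemma-style variant rather than a reproduction of the paper's argument.
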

				\begin {proof}
				The proof follows the argument of \cite[Appendix D]{bib13}.
					Denote by $\bsl{u}_K^\zeta\in U_K$ the vector associated with $\bsl{v}_K$
					by (\ref {set1}), and let 
						$\bsl {R}_K^\zeta(\bsl {x}):=\bsl {u}^\ep_\zeta(\bsl {x})-\bsl {u}_K^{\zeta}(\bsl {x},\ep).$
					Notice first that
						$$E_K(\bsl {v}_K,\bsl {f},\ep)=\bar {E}_\ep\bigl(\bsl {u}_K^\zeta,\bsl {f}\bigr)=\int_Q E_{\ep,\zeta}\bigl(\bsl {u}_K^\zeta,\bsl {f}\bigr)\mathrm {d} \boldsymbol {\zeta}=\int_Q\int_\torus\bigg (\frac {1}{2}\hat{A}_\zeta^\varepsilon\,\curl \ \bsl {u}_K^\zeta\!\cdot \curl\,\bsl {u}_K^\zeta-\bsl {f}\cdot \bsl{u}_K^\zeta\bigg )\dx \mathrm {d} \boldsymbol {\zeta}$$
					\vskip-0.7cm
						\begin {multline*}
							=\int_Q\int_\torus\bigg (\frac {1}{2}\hat{A}_\zeta^\varepsilon\,\curl \ \bsl {u}^{\ep}_\zeta\cdot \curl \ \bsl {u}^{\ep}_\zeta-\frac {1}{2}\hat{A}_\zeta^\varepsilon\,\curl\,\bsl {u}^{\ep}_\zeta\cdot \curl\,\bsl {R}_K^\zeta\\ 
							-\frac {1}{2}\hat{A}_\zeta^\varepsilon\,\curl\,\bsl {R}_K^\zeta\cdot \curl \ \bsl {u}^{\ep}_\zeta+\frac {1}{2}\hat{A}_\zeta^\varepsilon\,\curl\,\bsl {R}_K^\zeta\cdot
							\curl\,\bsl {R}_K^\zeta-\bsl {f}\cdot \bsl {u}^\ep_{\zeta}+\bsl {f}\cdot \bsl {R}_K^\zeta\bigg )\dx \mathrm {d} \boldsymbol {\zeta}.
						\end {multline*}
					Since $\bsl {u}^\ep_\zeta$ solves (\ref {zeta1}), integrating by parts in the last expression we obtain
						\begin{equation}
						E_K(\bsl {v}_K,\bsl {f},\ep)=\bar {I}(\ep,\bsl {f})+\int_Q\int_\torus \frac {1}{2}\hat{A}_\zeta^\varepsilon\,\curl\,\bsl {R}_K^\zeta\cdot \curl\,\bsl {R}_K^\zeta\dx \mathrm {d} \boldsymbol {\zeta}.
						\label{onestar}
						\end{equation}
					The fact that the matrix $A_\zeta$ is positive-definite implies
						\begin{equation}
						\int_Q\int_\torus
						\hat{A}_\zeta^\varepsilon\,\curl\,\bsl {R}_K^\zeta\cdot \curl\,\bsl {R}_K^\zeta\dx \mathrm {d} \boldsymbol {\zeta}\geq
						\nu
						\int_Q \int_\torus\bigl|\curl \,\bsl {R}_K^\zeta\bigr|^2\dx \mathrm {d} \boldsymbol {\zeta}
						\ \ \ \ \ \ \ \ \ \ \ \ \ \ \ \ \ \ \ \ \ \ \ \ \ \ \ \ \ \ \ \ \ \ \ \ \ \ \ 
						\end{equation}
						\begin{equation}
						\ \ \ \ \ \ \ \ \ \ \ \ \ \ \ \ \ \ \ \ \ \ \ \ \ \ \ \ \ \ \ \geq 
					\nu
						\int_\torus \int_Q\bigl|\curl\,\bsl {R}_K^\zeta\bigr|^2\mathrm {d} \boldsymbol {\zeta}\dx\geq 
						\nu
						\int_\torus\bigl|\curl\,(\bar {\bsl {u}}^\ep-\bsl {v}_K)\bigr|^2.
						\label{twostar}
						\end{equation}
					Combining (\ref{onestar}), (\ref{twostar}), and (\ref{remainder1}) yields
						$$\int_\torus \bigl |\curl\,(\bar {\bsl {u}}^\ep-\bsl {v}_K)\bigr |^2
						\leq 2c_K\nu^{-1}\ep^{2K}.$$
					Finally, noticing that ${\rm div}\,\bar{\bsl {u}}^\ep={\rm div}\,\bsl{v}_K=0$ and using (\ref{Maxwell_inequality}), we obtain the desired result.
				\end {proof}


		\section {Tensor analysis of the infinite-order homogenised equations}\label {sec126}
		 Here we prove that the infinite-order homogenised equations (\ref {hom1}) and (\ref {EL4}) are equivalent. To this end, we introduce a symmetrisation operation, as follows.
		 \begin {Def}
		 \label{symmetrisation_definition}
					For $n\geq 1$, the  symmetrisation of a tensor $h_{ik_1\dots k_nj}$ of order $n+2$ is a tensor 
					$h_{i(k_1\dots k_n)j}$ of the same order
					defined by 
						\begin {equation*}
							h_{i(k_1\dots k_n)j}:=\frac {1}{n!}\sum_{(k_1,k_2,\dots,k_n)}h_{ik_1\dots k_nj},
						\end {equation*}
					where the summation is over all permutations of the indices $k_1,k_2,\dots,k_n$. 
				\end {Def}
				Notice that in the case of third-order tensors, the above operation leaves the tensor unchanged. In order to account for this special case, 
			we show separately that the third-order tensors $\hat {h}^{(3)}$ and $\tilde {\tilde {h}}^1$ coincide. 
			To this end, 
			notice first that 
				\begin {equation*}
				\tilde {h}^{(1,0)}_{ijk}=\Bigl\langle A_{st}\bigl(\curl\,
				{\mathscr N}^{(2)}+{\mathscr M}^{(2)}\bigr)_{sij}\bigl(\curl\,
				{\mathscr N}^{(1)}+I\bigr)_{tk}\Bigr\rangle\ \ \ \ \ \ \ \ \ \ \ \ \ \ \ \ \ \ \ \ \ \ \ \ \ \ \ \ \ \ \ \ 
				\end{equation*}
				\begin{equation*}
				\ \ \ \ \ \ \ \ \ \ \ \ \ \ \ \ \ \ \ \ \ \ \ \ \ \ \ \ \ \ \ \ =\Bigl\langle A_{st}\bigl(\curl\,
				{\mathscr N}^{(1)}+I\bigr)_{sk}\bigl(\curl\,
				{\mathscr N}^{(2)}+{\mathscr M}^{(2)}\bigr)_{tij}\Bigr\rangle=\tilde {h}^{(0,1)}_{kij}.
				\end {equation*}
			Further, making use of the definitions (\ref {Mij}) and (\ref {Lij}) and relation (\ref{hjk1}) yields
				\begin {equation*}
					\hat {h}^{(3)}_{ijk}=\Bigl\langle A_{is}\bigl(\curl\,
					{\mathscr N}^{(2)}+{\mathscr M}^{(2)}\bigr)_{sjk}\Bigr\rangle=\tilde {h}^{(0,1)}_{ijk}-\tilde {h}^{(1,0)}_{jik}.
				\end {equation*}
			Hence, by virtue of (\ref{homco1}) and (\ref {EL3a}), one has
				$$\bigl(\hat {h}^{(3)}-\tilde {\tilde {h}}^1\bigr)_{ijk}=\tilde {h}^{(0,1)}_{ijk}-\tilde {h}^{(1,0)}_{jik}-\frac {1}{2} \Big \{ \tilde {h}^{(0,1)}_{ijk}-\tilde {h}^{(0,1)}_{kji}-\tilde {h}^{(1,0)}_{jik}+\tilde {h}^{(1,0)}_{jki}\Big \}\ \ \ \ \ \ \ \ \ \ \ \ \ \ \ \ \ \ \ \ \ \ \ \ \ \ \ \ \ \ \ \ 
				$$
				$$
				\ \ \ \ \ \ \ \ \ \ \ \ \ \ \ \ \ \ \ \ \ \ \ \ \ \ \ \ \ \ \ \ =\tilde {h}^{(0,1)}_{ijk}-\tilde {h}^{(0,1)}_{kji}-\frac {1}{2} \Big \{ \tilde {h}^{(0,1)}_{ijk}-\tilde {h}^{(0,1)}_{kji}-\tilde {h}^{(0,1)}_{kji}+\tilde {h}^{(0,1)}_{ijk}\Big \} =0.$$

				
			\begin{Rem}
			It can be shown that for a third-order tensor $h=(h_{ijk})$ the identity
					$\curl\{h\del\curl\,\bsl {v}\}=\bsl {0}$ holds for all $\bsl{v}$ 
				if and only if	
				$$h_{ijk}= \left \{ \begin {array}{ll} a_1, &  ijk\in\{ 122,133\} , \\ a_2, & ijk\in\{ 221,331\} , \\ a_3, & ijk\in\{ 211,233\} , \\ a_4, & ijk\in\{ 112,332\} , \\ a_5, & ijk\in\{ 311,322\} , \\ a_6, & ijk\in\{ 113,223\} , \\ a_1+a_2, & ijk\in\{ 111\} , \\ a_3+a_4, & ijk\in\{ 						222\} , \\ a_5+a_6, & ijk\in\{ 333\} ,\\ 0, & \text {otherwise}, \end {array} \right .$$
			 for some constants $a_l,$ $l=1,... , 6.$ 
			 It follows from the above that in the case of the tensor 
				$h=\hat {h}^{(3)}-\tilde {\tilde {h}}^1,$ the constants $a_l$ vanish for all $l.$ 

			\end{Rem}
			Returning to the case of an arbitrary order, consider the expression for the tensors appearing in 
			the variational approach ({\it cf.} (\ref{EL5})):
				\begin{equation}
				\label {tensor100}
					\tilde {\tilde {h}}^{n}_{ik_1\dots  k_nj}=\frac {1}{2}\sum_{\substack {p+q=n, \\ p,q\in \mathbb {N}\cup\{0\}}}\Big \{ (-1)^p\tilde {h}^{(p,q)}_{k_pk_1\dots k_{p-1}ik_{p+1}\dots k_{p+q}j}+(-1)^q\tilde {h}^{(p,q)}_{k_pk_1\dots k_{p-1}jk_{p+1}\dots 								k_{p+q}i}\Big \} .
				\end{equation}
			Symmetrising the above expression with respect to $k_1,$ $k_2,$ ..., $k_n$  gives
				\begin{equation}
				\label {tensor101}
					\tilde {\tilde {h}}^{n}_{i(k_1\dots  k_n)j}=\frac {1}{2n!}\sum_{(k_1,\dots,k_n)}\sum_{\substack {p+q=n, \\ p,q\in \mathbb {N}\cup\{0\}}}\Big \{ (-1)^p\tilde {h}^{(p,q)}_{k_pk_1\dots k_{p-1}ik_{p+1}\dots k_{p+q}j}+(-1)^q\tilde {h}^{(p,q)}_{k_pk_1\dots 							k_{p-1}jk_{p+1}\dots k_{p+q}i}\Big \} .
		                \end{equation}
			The symmetry property
						$$
						\tilde {h}^{(p,q)}_{ik_1\dots k_pk_{p+1}\dots  k_{p+q}j}=\tilde {h}^{(q,p)}_{k_{p+1}\dots k_{p+q}jik_1\dots k_p}
						$$
		implies
		\begin{equation*}
							\frac {1}{n!}\sum_{(k_1,\dots ,k_n)}\sum_{\substack {p+q=n, \\ p,q\in \mathbb {N}\cup\{0\}}}(-1)^p\tilde {h}^{(p,q)}_{k_pk_1\dots k_{p-1}ik_{p+1}\dots  k_{p+q}j}=\frac {1}{n!}\sum_{(k_1,\dots ,k_n)}\sum_{\substack {p+q=n, \\ p,q\in \mathbb {N}\cup\{0\}}}(-1)^q\tilde {h}^{(p,q)}_{k_pk_1\dots k_{p-1}jk_{p+1}\dots  k_{p+q}i},
							\end{equation*}
					and hence		
				\begin {equation}
							\tilde {\tilde {h}}^{n}_{i(k_1\dots  k_n)j}=\frac {1}{n!}\sum_{(k_1,\dots,k_n)}\sum_{\substack {p+q=n, \\ p,q\in \mathbb {N}\cup\{0\}}}(-1)^p\tilde {h}^{(p,q)}_{k_pk_1\dots k_{p-1}ik_{p+1}\dots i_{p+q}j}.
							\label{after_symmetrisation}
						\end {equation}
						
			 As is shown above, the third-order tensor $\hat {h}^{(3)}$ is expressed as a combination of the 
			 tensors $\tilde {h}^{(0,1)}$ and $\tilde {h}^{(1,0)}.$  Similarly, the tensor $\hat {h}^{(n+2)}$ is a combination of the tensors $\tilde {h}^{(0,n)},\tilde {h}^{(1,n-1)},\dots,\tilde {h}^{(n,0)}$ 
			 as follows:
				\begin {multline*}
					\hat {h}^{(n+2)}_{ik_1\dots k_nj}=\tilde {h}^{(n,0)}_{k_1\dots k_nji}-\tilde {h}^{(n-1,1)}_{k_2\dots k_njk_1i}+\tilde {h}^{(n-2,2)}_{k_3\dots k_njk_2k_1i}-\tilde {h}^{(n-3,3)}_{k_4\dots k_njk_3k_2k_1i}+\dots\\[0.6em]
					-(-1)^n\tilde {h}^{(1,n-1)}_{k_{n}jk_{n-1}k_{n-2}\dots k_1i}+(-1)^n\tilde {h}^{(0,n)}_{jk_nk_{n-1}\dots k_1 i}.
				\end {multline*}
				The related argument makes use of the
			 equations (\ref {Mij})--(\ref {Lij}) and the relation (\ref {hjk1}). 
			Symmetrising the above expression and using symmetry properties of the tensors $\tilde {h}^{(p,q)}$, we obtain
				\begin {equation}
					\hat {h}^{(n+2)}_{i(k_1\dots k_n)j}=\frac {1}{n!}\sum_{(k_1,\dots,k_n)}\Big \{ \tilde {h}^{(n,0)}_{k_1\dots k_nji}-\tilde {h}^{(n-1,1)}_{k_2\dots k_njk_1i}+\dots +(-1)^n\tilde {h}^{(0,n)}_{jk_nk_{n-1}\dots k_1 i}\Big \}
				\end {equation}
				\begin {equation}\label {tensor105}
					=\frac {1}{n!}\sum_{(k_1,\dots,k_n)}\Big \{ \tilde {h}^{(0,n)}_{ik_1\dots k_nj}-\tilde {h}^{(1,n-1)}_{k_1ik_2\dots k_nj}+\dots +(-1)^n\tilde {h}^{(n,0)}_{k_nk_{n-1}\dots k_1 ij}\Big \}
					\end{equation}
					\begin{equation}
					=\frac {1}{n!}\sum_{(k_1,\dots,k_n)}\sum_{\substack {p+q=n, \\ p,q\in \mathbb {N}\cup\{0\}}}(-1)^p\tilde {h}^{(p,q)}_{k_pk_1\dots k_{p-1}ik_{p+1}\dots i_{p+q}j}
					=\tilde {\tilde {h}}^{n}_{i(k_1\dots  k_n)j}.
				\end {equation}
				Hence, the following result is established ({\it cf.} \cite[Section 3.3.2]{bib13} for the scalar case):
				\begin {Thrm}
					For the tensors  $\hat {h}^{(n+2)}$ and $\tilde {\tilde {h}}^n$ given by (\ref {homco1}) and (\ref {EL5}) respectively, one has
						$$\hat {h}^{(n+2)}_{i(k_1\dots k_n)j}=\tilde {\tilde {h}}^n_{i(k_1\dots k_n)j}.$$
					In particular, since each of the infinite-order homogenised equations (\ref {hom1}) and (\ref {EL4}) 
					is unchanged under this symmetrisation process, these equations coincide.  
				\end {Thrm}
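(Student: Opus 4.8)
The plan is to express both coefficient families entirely through the bilinear tensors $\tilde{h}^{(p,q)}$ of (\ref{hjk}) and then compare after symmetrising over the middle indices $k_1,\dots,k_n$. The variational side is the easy half: starting from (\ref{tensor100}), symmetrising in $k_1,\dots,k_n$ gives (\ref{tensor101}); then the identity $\tilde{h}^{(p,q)}_{ik_1\dots k_pk_{p+1}\dots k_{p+q}j}=\tilde{h}^{(q,p)}_{k_{p+1}\dots k_{p+q}jik_1\dots k_p}$, which is immediate from the symmetry of the integrand in (\ref{hjk}) under interchanging its two bracketed factors, lets one collapse the two sums over $p+q=n$ in (\ref{tensor100}) into a single one, yielding the compact formula (\ref{after_symmetrisation}) for $\tilde{\tilde{h}}^n_{i(k_1\dots k_n)j}$.

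For the asymptotic side, the crucial step is the telescoping identity expressing $\hat{h}^{(n+2)}$ as an alternating sum $\sum_{p+q=n}(-1)^p\tilde{h}^{(p,q)}_{(\cdots)}$ over suitably permuted indices, the identity recorded above, which generalises the explicit third-order computation $\hat{h}^{(3)}_{ijk}=\tilde{h}^{(0,1)}_{ijk}-\tilde{h}^{(1,0)}_{jik}$. I would prove it by induction, repeatedly applying two structural relations: the ladder relation ${\mathscr M}^{(m+1)}\del_{\bsl x}=\curl_{\bsl x}{\mathscr N}^{(m)}$ from (\ref{MMM})/(\ref{Mij}), and the integration-by-parts identity $\langle A(\curl\,{\mathscr N}^{(m+1)}+{\mathscr M}^{(m+1)})\curl\,\bsl{\phi}\rangle=-\langle{\mathscr L}^{(m+1)}\bsl{\phi}\rangle$ together with (\ref{hjk1}). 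Starting from $\hat{h}^{(n+2)}=\langle A(\curl\,{\mathscr N}^{(n+2)}+{\mathscr M}^{(n+2)})\rangle$, each application of this pair strips off one ${\mathscr N}$ from the highest remaining order, records one more $\tilde{h}^{(p,q)}$ term with alternating sign, and feeds back the ${\mathscr L}$ contribution, until the recursion terminates at order $2$; the bookkeeping of index positions is the delicate part.

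With both $\hat{h}^{(n+2)}$ and $\tilde{\tilde{h}}^n$ expressed as alternating sums of the $\tilde{h}^{(p,q)}$, I then symmetrise the telescoping identity over $k_1,\dots,k_n$ and apply the same $\tilde{h}^{(p,q)}\!\leftrightarrow\!\tilde{h}^{(q,p)}$ symmetry plus relabelling of the dummy permutation indices; after these manipulations the symmetrised right-hand side becomes exactly $\frac{1}{n!}\sum_{(k_1,\dots,k_n)}\sum_{p+q=n}(-1)^p\tilde{h}^{(p,q)}_{k_pk_1\dots k_{p-1}ik_{p+1}\dots k_{p+q}j}$, which is (\ref{after_symmetrisation}). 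Hence $\hat{h}^{(n+2)}_{i(k_1\dots k_n)j}=\tilde{\tilde{h}}^n_{i(k_1\dots k_n)j}$ for every $n\ge1$. Because the symmetrisation is trivial on third-order tensors, I treat $n=1$ separately: there the explicit calculation above already shows $\hat{h}^{(3)}=\tilde{\tilde{h}}^1$ with no symmetrisation needed, consistent with the Remark observing that $\hat{h}^{(3)}-\tilde{\tilde{h}}^1$ lies in the kernel of $\bsl{v}\mapsto\curl\{h\,\del\curl\,\bsl{v}\}$ and in fact vanishes.

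For the final assertion, note that in both (\ref{hom1}) and (\ref{EL4}) the coefficient tensor of order $n+2$ enters only through $\curl\{h\,\del^n\curl\,\bsl{v}\}$, and by the contraction convention of the Remark in Section~\ref{sec112} its middle indices $k_1,\dots,k_n$ are contracted against the $n$ differentiation indices of $\del^n\curl\,\bsl{v}$, which are symmetric since partial derivatives commute. Thus each equation depends on its coefficient tensors only through their symmetrisations in $k_1,\dots,k_n$; since those symmetrisations agree by the first part, (\ref{hom1}) and (\ref{EL4}) coincide term by term. The main obstacle throughout is the telescoping identity for $\hat{h}^{(n+2)}$: matching signs and index permutations across the iterated use of (\ref{MMM})--(\ref{LLL}) and (\ref{hjk1}) is where the substance lies, everything else being permutation bookkeeping.
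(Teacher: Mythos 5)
Your proposal follows essentially the same route as the paper: symmetrise the variational expression (\ref{tensor100}) and collapse it to (\ref{after_symmetrisation}) via the $\tilde h^{(p,q)}\leftrightarrow\tilde h^{(q,p)}$ symmetry, establish the alternating telescoping identity for $\hat h^{(n+2)}$ from (\ref{Mij})--(\ref{Lij}) and (\ref{hjk1}), symmetrise, and handle $n=1$ separately; your observation that the middle indices are contracted against commuting partial derivatives is exactly the reason the paper gives for the two equations coinciding. The argument is correct and matches the paper's proof in structure and in every key ingredient.
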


	\section {Homogenised constitutive laws for the system of Maxwell equations}\label {sec13}

		\subsection {Maxwell equations and quasistatic approximation}\label {sec131}
						 Written in time-harmonic form, the system of Maxwell equations for the electric $\bsl{E}^{\omega,\ep}$ and magnetic $\bsl{H}^{\omega,\ep}$ fields and electric displacement $\bsl{D}^{\omega,\ep}$ and magnetic induction $\bsl{B}^{\omega,\ep}$  in an $\varepsilon$-periodic medium is
				\begin {align}
					\curl\,\bsl {E}^{\omega,\ep}  =-\text {i}\omega \bsl {B}^{\omega,\ep},\ \ \ \ \ \ \ & \curl\,\bsl {H}^{\omega,\ep} =\text {i}\omega\bsl {D}^{\omega,\ep}+\bsl {J}^\omega,
					\label {max11} 
					\\[0.2em]
					\dive\,\bsl {D}^{\omega,\ep} =0,
					\ \ \ \ \ \ \ \ & \dive\,\bsl {B}^{\omega,\ep}  =0,
					\label {max13} 
					\\[0.2em]
					\bsl {B}^{\omega,\ep}=\hat {\mu}^\ep\bsl {H}^{\omega,\ep},\ \ \ \ \ \ \ & \bsl {D}^{\omega,\ep} =\hat {\epsilon}^\ep\bsl {E}^{\omega,\ep},
					\label {max12}
				\end {align}
				where $\omega>0$ is the frequency, and we assume that $-\dive\,\bsl {J}^\omega=0.$ The coefficients $\mu,$ $\epsilon$ in (\ref{max12}) are the matrices of magnetic permeability and electric permittivity, which are subject to the same assumptions as the function $A$ at the beginning of Section \ref{sec11}.
			Denoting generically by $\bsl{A}$ the fields entering (\ref{max11})--(\ref{max12}), we substitute
			\begin {equation}
							\bsl {A}^{\omega,\ep}(\bsl {x})=\sum_{j=0}^\infty({\rm i}\omega)^j\bsl {A}^\ep_j(\bsl {x}),
							\ \ \ \ \ \ \ 
							\bsl{A}_j^\varepsilon(\bsl{x})\in{\mathbb R}\ \ \forall\bsl{x}\in{\mathbb T},\ \ j=0,1,2,...,
						\end {equation}
			and compare the terms with equal powers of the frequency $\omega.$
			 This procedure, which we refer to as the ``quasistatic approximation'', results in the leading-order equations
				\begin {equation}\label{quasist_system}
				\begin {cases}
					\curl\bigl\{\left (\hat {\mu}^\ep\right )^{-1} \curl \,\bsl {E}^\ep_1\bigr\}=-\bsl {J}_0,\ \ \ \curl\bigl\{\left (\hat {\epsilon}^\ep\right )^{-1} \curl \,\bsl {H}^\ep_1\bigr\}=\curl\bigl\{\left (\hat {\epsilon}^\ep\right )^{-1}\bsl {J}_1\bigl\}, \\[0.3em]
					\dive (\hat {\epsilon}^\ep\bsl {E}^\ep_1)=-\dive\,\bsl {J}_2,\ \ \ \ \ \ \ \ \ \ \ \ \ \ \ \dive (\hat {\mu}^\ep\bsl {H}^\ep_1)=0.
				\end {cases}
				\end {equation}
			The focus of Sections \ref{sec11} and \ref{sec12}
			is the first equation in the system (\ref{quasist_system}),
			where we set 
			$\bsl {E}_1^\ep=:\bsl{u}^\varepsilon$ with $-\bsl {J}_0=:\bsl{f}.$ Note also that under the assumption that the electric permittivity is given by the identity matrix, one has
				$\dive\,\bsl{u}^\varepsilon
				=0.$ 
				\begin {Rem}
					The form of the 
					equation for the magnetic field 
					is somewhat different 
					to that 
					for the electric field. 
					Indeed, the right-hand side of it depends on the permittivity matrix $\hat {\epsilon}^\ep$ 
					and hence depends on $\ep.$ A modified approach 
					that deals with this feature is described in Section \ref {sec134}.
				\end {Rem}

		\subsection {Energy considerations}\label {sec132}

			 The total energy of the electromagnetic field is the sum of two parts: the electrostatic energy and the magnetic energy. 
			 For the system of equations (\ref {max11})--(\ref {max12}), the corresponding expressions are given by 
				\begin {equation}\label {energy24}
					u^{\omega,\ep}_{\text {elec}}:=\frac {1}{2}\int_\torus\bsl {E}^{\omega,\ep}\cdot\overline{\bsl {D}^{\omega,\ep}},
					\ \ \ \ \ \ \ u^{\omega,\ep}_{\text {mag}}:=\frac {1}{2}\int_\torus\bsl {B}^{\omega,\ep}\cdot\overline{\bsl {H}^{\omega,\ep}},
				\end {equation}
			where the bar stands for complex conjugation. 
			Using equations (\ref{max12})--(\ref{max11}) in (\ref {energy24}), we obtain
	                  \begin {equation}\label {energy25}
					u^{\omega,\ep}_{\text {elec}}:=\frac {1}{2\omega^2}\int_\torus(\hat {\epsilon}^\ep)^{-1}\big (\curl\,\bsl {H}^{\omega,\ep}-\bsl {J}^\omega\big )\cdot \overline{\big ( \curl\ \bsl {H}^{\omega,\ep}-\bsl {J}^\omega\big )},
				\end {equation}
				\begin {equation}\label {energy26}
					u^{\omega,\ep}_{\text {mag}}:=\frac {1}{2\omega^2}\int_\torus \big (\hat {\mu}^\ep)^{-1}\curl\ \bsl {E}^{\omega,\ep}\cdot\overline{\curl\,\bsl {E}^{\omega,\ep}}.
				\end {equation}
			 Note that in the modified expression the electric energy depends on the magnetic field and that the magnetic energy depends on the electric field.
				From here on, considerations will be restricted to real vector fields and hence the complex conjugation notation will be dropped.

			Substituting a formal series in powers of $\omega$ for $\bsl {E}^{\omega,\ep},$ $\bsl {H}^{\omega,\ep},$ $\bsl {D}^{\omega,\ep},$ $\bsl {B}^{\omega,\ep}$ into (\ref{energy24})--(\ref{energy26}) and comparing the coefficients in front of $({\rm i}\omega)^j,$ $j=0,1,2,$ 
			yields:
				\begin{equation}
					\int_\torus (\hat {\mu}^\ep)^{-1}\curl\,\bsl {E}^{\ep}_0\cdot \curl\ \bsl {E}_0^{\ep}
					=0,\ \ \ \ \ \ \ 
					\int_\torus (\hat {\mu}^\ep)^{-1}\curl\,\bsl {E}^{\ep}_1\cdot 
					\curl\,\bsl {E}_1^{\ep}
					=\int_\torus\bsl {B}_0^\ep\cdot\bsl {H}_0^\ep.
				\label{first_two}
				\end{equation}
			Note that due to the fact  that $\curl\,\bsl {E}_0^\ep=\bsl {0}$, the first equation in (\ref{first_two}) holds automatically.

			
			In the next section we use the approach developed in Section \ref {sec11} and Section \ref {sec12} 
			to derive higher-order constitutive relations between the leading-order magnetic  $\bsl {H}_0^\ep$ and induction $\bsl {B}_0^\ep$ fields. 
			The asymptotic and variational approaches lead to two expressions for such a higher-order constitutive law, which 
			are shown to coincide, by a symmetrisation procedure in Section \ref {sec126}.

		\subsection {Infinite-order constitutive relations}\label {sec133}
			 In this section, we derive an expression for the leading-order term $\bsl {H}_0^\ep,$ 
			 given that the expansions for the magnetic field $\bsl {B}_0^\ep$ and the electric field $\bsl {E}_1^\ep$ are known.
		In the quasistatic approximation we obtain 
				\begin {equation*}
						\curl\bigl\{(\hat {\mu}^\ep)^{-1}\curl\,\bsl {E}_1^\ep\bigr\}=\bsl {f},\ \ \ \ \ 
						\bsl {E}_1^\ep\in {\mathcal X}(\torus),
				\end {equation*}
			where $\bsl {f}:=-\bsl {J}_0\in\bigl[C^\infty_{\rm per}({\mathbb T})\bigr]^3\cap {\mathcal X}(\torus).$ These equations 
			were analysed in Section \ref {sec11}, in particular, $\bsl {E}_1^\ep$ is written as a double series (\ref{A2})--(\ref{asymptoticv}) and the 
			infinite-order homogenised equation 
			(\ref{hom1}) is satisfied.
				

			We establish a higher-order constitutive relation between the magnetic field and magnetic induction
			by two approaches. First, we examine the family of problems considered in Section \ref {sec123}. In particular, we introduce a parameter $ \boldsymbol {\zeta}$ acting as a shift in the microscopic variable, which is reflected in the notation $\hat {\mu}^\ep_\zeta,$ 
			$\hat {\epsilon}^\ep_\zeta,$ $\bsl {H}_j^{\ep,\zeta},$ $\bsl {E}_j^{\ep,\zeta}.$
			Notice first that since $\hat {\epsilon}_\zeta^\ep$ is the identity matrix, the asymptotic expansion for the electric field displacement $\bsl {D}_1^{\ep,\zeta}$ is simply the asymptotic expansion for $\bsl {E}_1^{\ep,\zeta}$. Hence, averaging over $Q$ with respect to $ \boldsymbol {\zeta}$ yields
				$\bar {\bsl {D}}_1^\ep=\bar {\bsl {E}}_1^\ep=\bsl {v},$
			where the bar notation represents averaging over $Q$ with respect to $ \boldsymbol {\zeta}$. Further, note that 
				\begin {equation}\label {godknows1}
					-\bigl\langle \bsl {B}_0^{\ep,\zeta}\bigr\rangle_\zeta=\bigl\langle \curl\ \bsl {E}^{\ep,\zeta}_1\bigr\rangle_\zeta = \curl\,
					\bsl {v},
				\end {equation}
				and therefore
				\begin {align}
					\bar{\bsl {H}}_0^\ep:= & \bigl\langle \bsl {H}_0^{\ep,\zeta}\bigr\rangle_\zeta=-\bigl\langle (\hat {\mu}_\zeta^\ep)^{-1}\curl\,\bsl {E}_1^{\ep,\zeta}\bigr\rangle_\zeta 
					=-\sum_{j=0}^\infty\ep^j\hat {h}^{(j+2)}\del
					^j\curl\,
					\bsl {v}
					 \nonumber\\
					\stackrel{{\rm by\ }(\ref{uK_sum})}{=} & -\sum_{j=0}^\infty\ep^j\hat {h}^{(j+2)}\del
					^j\bigl\langle \curl\,\bsl {E}^{\ep,\zeta}_1\bigr\rangle_\zeta
				         \stackrel{{\rm by\ }(\ref{godknows1})}{=}  
					\sum_{j=0}^\infty\ep^j\hat {h}^{(j+2)}\del
					^j\bar {\bsl {B}}_0^\ep.\label{godknows2}
				\end {align}
			The resulting constitutive law is of the form $\bar {\bsl {H}}_0^\ep=(\mu^{\text {eff}})^{-1}\bar {\bsl {B}}_0^\ep$, where $(\mu^{\text {eff}})^{-1}$ is the inverse of the ``effective permeability operator". 

			In 
			\cite {bib13}, a higher-order stress-strain relation is derived via a variational argument by considering the corresponding elastic energy functional. 
			Applying the same approach to the magnetic energy functional in (\ref {first_two}), an alternative expression for the higher-order constitutive law is derived. 
			Results analogous to those of Section \ref {sec12} hold, in particular:
					 $$I(\ep ,\bsl {f}):=\frac {1}{2}\int_{\torus}(\hat {\mu}_\zeta^\varepsilon)^{-1}
					 \curl \,\bsl {E}_1^{\ep,\zeta} \cdot \curl \,\bsl {E}_1^{\ep,\zeta}
					 = -\frac {1}{2}\int_\torus \bsl {f}\cdot \bsl {E}_1^{\ep,\zeta}
					 \,\stackrel{\varepsilon\to0}{\sim}\,
					 \sum\limits_{k=0}^\infty \ep^kI_k(\bsl {f}),$$
					 $$I_k(\bsl {f})=-\frac {1}{2}\int_\torus \bsl {f}
												\cdot \bsl {v}_k,
												\ \ \ \ \ \ \ \ k=0,1,2,\dots,$$
						\begin {equation}\label {magen101}
							\bar{I}(\ep ,\bsl {f})=\frac {1}{2}\int_\torus \sum_{j,k=0}^\infty \ep^{j+k}\tilde {h}^{(j,k)}\del
							^j\curl\,
							\bsl {v}\del
							^k\curl\,
							\bsl {v},
						\end {equation}
					where $\tilde {h}^{(j,k)}$ is the tensor given by (\ref {EL5}).
			Integrating by parts in (\ref {magen101}) and using (\ref{godknows1}) yields, after an appropriate rearrangement of indices:
				\begin {equation}\label {magint1}
				\bar {u}^\ep_{\text {mag}}=\frac {1}{2}\int_\torus \bar {\bsl {\mathfrak {H}}}_0^\ep\cdot \bar {\bsl {B}}_0^\ep,
				 \ \ \ \ \ \ \ \ \bar {\bsl {\mathfrak {H}}}_0^\ep:=\sum_{n=0}^\infty\ep^n\sum_{\substack {j+k=n \\ j,k\in \nbb\cup\{0\}}}(-1)^k\bar {h}^{(j,k)}\del
				 ^n\bar {\bsl {B}}_0^\ep,
				\end {equation}
			where 
			$$\bar {h}^{(j,k)}_{i_1i_2\dots i_{j+k+2}}:=\tilde {h}^{(j,k)}_{i_{j+1}i_2\dots i_ji_{j+k+2}i_{j+2}\dots i_{j+k+1}i_1}.$$
			Using the symmetrisation procedure described in Section \ref {sec126} 
			it is confirmed that the expressions (\ref {godknows2}) and (\ref {magint1})  for the magnetic field  coincide to all orders. 


		\subsection {The magnetic field equation}\label {sec134}

			In this section we assume that the permeability $\hat {\mu}^\ep$ is the identity matrix and that the permittivity $\hat {\epsilon}^\ep$ is $Q$-periodic, symmetric and uniformly elliptic. We consider the quasistatic approximation of Section (\ref{sec132}).
			For a given vector function $\bsl {J}_1\in\bigl[C^\infty_{\rm per}(\torus)\bigr]^3\cap {\mathcal X}(\torus),$ we look for a solution to the problem
				\begin {equation}\label {hommag2}
						\curl\bigl\{(\hat {\epsilon}^\ep)^{-1}\curl\,\bsl {H}_1^\ep\bigr\}=\curl\bigl\{(\hat {\epsilon}^\ep)^{-1}\bsl {J}_1\bigr\},		\ \ \ \ \  \bsl {H}_1^\ep\in {\mathcal X}(\torus).
				\end {equation}
			The theory discussed in Section \ref {sec11} applies, subject to a modification to the asymptotic expansion (\ref {A2}): 
				\begin {multline}\label {mag100}
					\bsl {H}_1^\ep(\bsl {x})=\bsl {w}(\bsl {x},\ep)+\sum_{j=1}^{\infty}\ep^j\Bigl\{ \del_\bsl{y}\bigl({\mathscr S}^{(j)}(\bsl {y})\del_{\bsl x}^j\bsl {w}(\bsl {x},\ep)\bigr)+ \del_{\bsl x}\bigl({\mathscr S}^{(j-1)}(\bsl {y})\del_{\bsl x}^{j-1}\bsl {w}(\bsl {x},\ep)\bigr)\\[-0.7em] +{\mathscr T}^{(j)}(\bsl {y})\del_{\bsl x}^{j-1}\bigl(\bft {curl}_{\bsl x}\bsl {w}(\bsl {x},\ep)-\bsl {J}_1(\bsl {x})\bigr)\Big \}\Bigr\vert_{\bsl{y}=\bsl{x}/\varepsilon},
				\end {multline}
			where $\bsl {w}\in\bigl[C_{\text {per}}^\infty (\torus )\bigr]^3$ is a divergence-free vector field written as a series
				\begin {equation}\label {asymptoticw}
					\bsl {w}(\bsl {x},\ep)=
					\sum_{k=0}^\infty \ep^k\bsl {w}_k(\bsl {x}),\ \ \ \ \ \bsl{x}\in{\mathbb T},
				\end {equation}
			and ${\mathscr S}^{(j)},$ ${\mathscr T}^{(j)}$ are tensor fields of order $j+1,$ $j=0,1,\dots,$ ${\mathscr S}^{(0)}=\bsl {0}.$ 
			The equations of Section \ref {sec11} hold
			with ${\mathscr N}^{(j)}$ replaced by ${\mathscr T}^{(j)}$, ${\mathscr M}^{(j)}$ replaced by ${\mathscr R}^{(j)},$ and 
			$\hat {h}^{(j)}$ replaced by $\hat {k}^{(j)},$  where ${\mathscr R}^{(1)}=I,$ ${\mathscr R}^{(j+1)}\del_{\bsl x}=\curl_{\bsl x}{\mathscr T}^{(j)}.$ 
			In particular, 
				the homogenised equation of infinite-order has the form
				\begin{gather}
					\sum_{j=0}^{\infty}\ep^j\curl
					\bigl\{\hat {k}^{(j+2)}\del
					^j\curl\,
					\bsl {w}(\bsl {x},\ep)\bigr\}=\sum_{j=0}^\infty\ep^j\curl
					\bigl\{\hat {k}^{(j+2)}\del
					^j\bsl {J}_1(\bsl {x})\bigr\},\label {homog38}\\
					\ \ \ \ \ \ \ \ \ \ \ \ \ \ \ \ \ \ \ \ \ \ \ \ \ \ \ \ \ \ \ \ \ \ \ \			
					\hat {k}^{(j+2)}:=\bigl\langle \hat {\epsilon}^{-1} \bigl\{ \curl\,{\mathscr T}^{(j+1)}+{\mathscr R}^{(j+1)}\bigr\} \bigr\rangle.\nonumber
				\end {gather}
			A version of Theorem \ref {thrm1} on justification of the asymptotic procedure is proved, with similar error estimates. When considering the variational approach developed in Section \ref {sec12}, the argument is modified as follows. Consider the expression for the electric energy given by the first formula in (\ref {energy24}).
			Applying the ideas discussed in Section \ref {sec132}, we first apply the quasistatic approximation to the energy functional  and then write it 
			in terms of the magnetic field $\bsl {H}_1^{\omega,\ep},$ which 
			yields 
				\begin{equation}
				\int_\torus \bsl {E}_0^\ep\cdot \bsl {D}_0^\ep
				=
				\int_\torus(\hat {\epsilon}^\ep)^{-1} \big ( \curl\,\bsl {H}_1^\ep-\bsl {J}_1\big )\cdot \big ( \curl\,\bsl {H}_1^\ep-\bsl {J}_1\big ).
				\label{magenergy1}
				\end{equation}
			It is straightforward to see that (\ref {hommag2}) is the Euler-Lagrange equation for the minimisation problem 
				\begin {equation}\label {magfunc1}
					\min_{\bsl {H}}\frac {1}{2}\int_\torus(\hat {\epsilon}^\ep)^{-1} \big ( \curl\,\bsl {H}-\bsl {J}_1\big )\cdot \big ( \curl\,\bsl {H}-\bsl {J}_1\big )
					=:I(\ep,\bsl {J}_1),
				\end {equation}

			Analogous conclusions to those given in Section \ref {sec12} hold, in particular, an infinite-order homogenised equation is obtained by 
			by considering a minimisation problem over a restricted set of trial fields:
				\begin {equation}\label {VK1}
					\min_{\bsl{w}}\int_\torus \Big \{ \sum_{j=0}^K\sum_{l=0}^K\frac {1}{2}\ep^{j+l}\tilde {k}^{(j,l)}\del
					^j\bigl(\curl\,
					\bsl {w}
					-\bsl {J}_1
					\bigr)\del
					^l\bigl(\curl\,
					\bsl {w}
					-\bsl {J}_1
					\bigr)\Big \}
					=:\bar {E}_K(\bsl {w},\bsl {J}_1,\ep),
				\end {equation}
			where $\tilde {k}^{(j,l)}$ is a tensor of order $j+l+2$ given by
				\begin {equation}\label {kjl}
					\tilde {k}^{(j,l)}=\Bigl\langle\hat {\epsilon}^{-1} \big (\curl\,{\mathscr T}^{(j+1)}+{\mathscr R}^{(j+1)}\big ) \big ( \curl\,{\mathscr T}^{(l+1)}+{\mathscr R}^{(l+1)}\big ) \Bigr\rangle.
				\end {equation}
			Formally considering the case $K=\infty$ leads to the infinite-order homogenised equation 
				\begin {equation}\label {magEL4}
					\curl
					\bigl\{\tilde {k}^{(0,0)}\curl\,
					\bsl {w}^{(\infty )}\bigr\}+\sum_{n=1}^\infty \ep^n\curl
					\bigl\{\tilde {\tilde {k}}^{n}\del
					^{n}\curl\,
					\bsl {w}^{(\infty )}\bigr\}=\sum_{n=0}^\infty \ep^n\curl
					\bigl\{\tilde {\tilde {k}}^{n}\del
					^{n}\bsl {J}_1\bigr\},
				\end {equation}
			where
				\begin{equation}
				\label {magEL5}
					\tilde {\tilde {k}}^{n}_{i_1\dots  i_{n+2}}:=\frac {1}{2}\sum_{\substack {j+l=n, \\ j,l\in \mathbb {N}\cup\{0\}}}\Big \{ (-1)^j\tilde {k}^{(j,l)}_{i_{j+1}i_2\dots i_ji_1i_{j+2}\dots i_{j+l+2}}+(-1)^l\tilde {k}^{(j,l)}_{i_{j+1}i_2\dots i_ji_{j+l+2}i_{j+2}\dots i_{j+l+1}i_1}\Big \} .
		                 \end{equation}
			Applying the symmetrisation procedure of Section \ref {sec126}, it is checked that the equations (\ref{homog38}) and (\ref{magEL5}) coincide.

			 By analogy with a discussion in Section \ref {sec133}, we derive two versions of the infinite-order constitutive law between the electric field 
			 and electric displacement, 
			 using the Maxwell equations in combination with the infinite-order homogenised equation (\ref {homog38}) and by the analysis of electric energy:
				\begin {equation}\label {elecint1}
					\bar {\bsl {E}}_0^\ep=\sum_{j=0}^\infty\ep^j\hat {k}^{(j+2)}\del
					^j\bar {\bsl {D}}_0^\ep,\ \ \ \ \ \ \ \ 
					\bar {\bsl {\mathfrak {E}}}_0^\ep:=\sum_{n=0}^\infty\ep^n\sum_{\substack {j+l=n \\ j,l\in \nbb\cup\{0\}}}(-1)^l\bar {k}^{(j,l)}\del
					^n\bar {\bsl {D}}_0^\ep,
				\end {equation}
			where
				$$\bar {k}^{(j,l)}_{i_1i_2\dots i_{j+l+2}}:=\tilde {k}^{(j,l)}_{i_{j+1}i_2\dots i_ji_{j+l+2}i_{j+2}\dots i_{j+l+1}i_1}.$$

			\newpage
			\section*{Appendix A: Poincar\'{e}-type inequality}\label {sa1}
				In the proofs of Theorems \ref{unique1}, \ref{thrm1} 
				we use the following statement.
					\begin {Lem} 
					\label{lemma_Maxwell}
						For all $\bsl {v}\in L^2(\torus )^3$ such that $\curl\,\bsl {v}\in\bigl[L^2(\torus)\bigr]^3,$ $\dive\,\bsl {v}\in L^2(\torus ),$ and $\langle\bsl{v}\rangle_{\mathbb T}=0,$ the inequality
							\begin{equation}
							\| \bsl {v}\|^2_{[L^2(\torus )]^3}\leq 
							\vert\torus\vert\Bigl(\| \curl \,\bsl {v}\|^2_{[L^2(\torus )]^3}+\| \dive\,\bsl {v}\|^2_{L^2(\torus )}\Bigr)
							\label{Maxwell_inequality}
							\end{equation}
							holds.
					\end {Lem}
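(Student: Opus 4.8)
The plan is to diagonalise both sides of (\ref{Maxwell_inequality}) by expanding $\bsl{v}$ in a Fourier series on the torus $\torus=[0,T]^3$, which reduces the estimate to an elementary algebraic identity for vectors in $\mathbb{C}^3$ together with a lower bound on the smallest nonzero frequency.

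First I would write $\bsl{v}(\bsl{x})=\sum_{\bsl{k}\in\Lambda}\hat{\bsl{v}}_{\bsl{k}}\,\rme^{\mathrm{i}\bsl{k}\cdot\bsl{x}}$, where $\Lambda:=(2\pi/T)\zbb^3$ is the dual lattice and $\hat{\bsl{v}}_{\bsl{k}}\in\mathbb{C}^3$. The hypothesis $\langle\bsl{v}\rangle_\torus=\bsl{0}$ is precisely the statement $\hat{\bsl{v}}_{\bsl{0}}=\bsl{0}$, so the sum runs effectively over $\bsl{k}\ne\bsl{0}$. Since $\curl$ and $\dive$ act on the mode $\hat{\bsl{v}}_{\bsl{k}}\rme^{\mathrm{i}\bsl{k}\cdot\bsl{x}}$ by multiplication by $\mathrm{i}\,\bsl{k}\times(\cdot)$ and $\mathrm{i}\,\bsl{k}\cdot(\cdot)$ respectively, the assumptions $\curl\,\bsl{v}\in[L^2(\torus)]^3$ and $\dive\,\bsl{v}\in L^2(\torus)$ guarantee $\sum_{\bsl{k}}|\bsl{k}\times\hat{\bsl{v}}_{\bsl{k}}|^2<\infty$ and $\sum_{\bsl{k}}|\bsl{k}\cdot\hat{\bsl{v}}_{\bsl{k}}|^2<\infty$, so that Parseval's identity is applicable to all three fields.

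Next, Parseval (with the normalisation $\|\rme^{\mathrm{i}\bsl{k}\cdot(\cdot)}\|_{L^2(\torus)}^2=|\torus|$) turns the three terms in (\ref{Maxwell_inequality}) into $|\torus|$ times $\sum_{\bsl{k}\ne\bsl{0}}|\hat{\bsl{v}}_{\bsl{k}}|^2$, $\sum_{\bsl{k}\ne\bsl{0}}|\bsl{k}\times\hat{\bsl{v}}_{\bsl{k}}|^2$ and $\sum_{\bsl{k}\ne\bsl{0}}|\bsl{k}\cdot\hat{\bsl{v}}_{\bsl{k}}|^2$. The key point is the Lagrange-type identity
\[
|\bsl{k}\times\bsl{a}|^2+|\bsl{k}\cdot\bsl{a}|^2=|\bsl{k}|^2\,|\bsl{a}|^2,\qquad\bsl{k}\in\rbb^3,\ \bsl{a}\in\mathbb{C}^3,
\]
which follows by splitting $\bsl{a}$ into its real and imaginary parts and applying the classical real identity to each. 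Summing this over $\bsl{k}\ne\bsl{0}$ gives
\[
\|\curl\,\bsl{v}\|^2_{[L^2(\torus)]^3}+\|\dive\,\bsl{v}\|^2_{L^2(\torus)}=|\torus|\sum_{\bsl{k}\ne\bsl{0}}|\bsl{k}|^2\,|\hat{\bsl{v}}_{\bsl{k}}|^2\ \geq\ \Bigl(\tfrac{2\pi}{T}\Bigr)^{\!2}\,\|\bsl{v}\|^2_{[L^2(\torus)]^3},
\]
using $|\bsl{k}|\geq 2\pi/T$ for every $\bsl{k}\in\Lambda\setminus\{\bsl{0}\}$. This is (\ref{Maxwell_inequality}) with $|\torus|$ replaced by the sharp constant $(T/2\pi)^2$; since $(T/2\pi)^2\leq T^3=|\torus|$ whenever $4\pi^2T\geq1$, in particular for all $T\geq1$, the stated form follows.

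There is no serious obstacle here; the two points deserving attention are (i) the zero-mean hypothesis, which is essential because both $\curl$ and $\dive$ annihilate constant fields, so no such estimate can hold without it, and (ii) the bookkeeping of the constant, where $|\torus|$ is a convenient (non-optimal) replacement for the genuinely sharp $(T/2\pi)^2$. An equally short alternative avoids Fourier analysis by invoking the Helmholtz decomposition $\bsl{v}=\del\varphi+\curl\,\bsl{\psi}$ on $\torus$ (the harmonic part being constant and hence zero by the mean condition) together with the scalar Poincar\'e inequality applied to $\varphi$ and to the components of $\bsl{\psi}$; the Fourier route is, however, the most transparent and delivers the constant directly.
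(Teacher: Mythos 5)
Your proof is correct and follows essentially the same route as the paper's: a Fourier expansion on $\torus$, Parseval's identity, and the mode-by-mode identity $|\bsl{k}\times\bsl{a}|^2+|\bsl{k}\cdot\bsl{a}|^2=|\bsl{k}|^2|\bsl{a}|^2$. You are in fact somewhat more careful than the paper about the dual lattice $(2\pi/T)\zbb^3$ and the resulting constant (the paper tacitly sums over $\zbb^3$ and uses $|\bsl{k}|\geq1$), but this is a refinement of the same argument, not a different one.
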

					\begin {proof}
						By virtue of $\bsl {v}\in\bigl[L^2(\torus )\bigr]^3$, $\langle\bsl{v}\rangle_{\mathbb T}=0,$ we write the Fourier series
							$$\bsl {v}(\bsl {x})=\sum_{\bsl {k}\in \mathbb {Z}^3\setminus\{0\}}\bsl {c}_{\bsl {k}}\rm {e}^{i\bsl {k}\cdot \bsl {x}},\ \ \ \ \ \ \ \ \ \ \ 
							\bsl {c}_{\bsl {k}}\in{\mathbb C}^3,\ \ \  \bsl{k}\in{\mathbb Z}^3.$$
						Noting that for $\bsl {k}\in{\mathbb Z}^3$ one has
							$\curl (\bsl {c}_{\bsl {k}}\rm {e}^{i\bsl {k}\cdot \bsl {x}})=i(\bsl {k}\times \bsl {c}_{\bsl {k}})\rm {e}^{i\bsl {k}\cdot \bsl {x}},$\ 
							$\dive(\bsl {c}_{\bsl {k}}\rm {e}^{i\bsl {k}\cdot \bsl {x}})=i(\bsl {k}\cdot \bsl {c}_{\bsl {k}})\rm {e}^{i\bsl {k}\cdot \bsl {x}},$
						it follows that
							\begin {align*}
						\| \bsl {v}\|^2_{[L^2(\torus)]^3} & =\vert\torus\vert\sum_{\bsl {k}\in \mathbb {Z}^3\setminus\{0\}}|\bsl {c}_{\bsl {k}}|^2\leq \vert\torus\vert\sum_{\bsl {k}\in \mathbb {Z}^3\setminus\{0\}}|\bsl {k}|^2|\bsl {c}_{\bsl {k}}|^2 \\
							&	=\vert\torus\vert\sum_{\bsl {k}\in \mathbb {Z}^3\setminus\{0\}}\bigl(|\bsl {k}\times \bsl {c}_{\bsl {k}}|^2+|\bsl {k}\cdot \bsl {c}_{\bsl {k}}|^2\bigr)=\vert\torus\vert\Bigl(\| \curl \,\bsl {v}\|^2_{[L^2(\torus)]^3}+\| \dive\,\bsl {v}\|^2_{L^2(\torus )}\Bigr),
							\end {align*}
						where we used the Parseval identity 
		and the equality $|\bsl {k}|^2|\bsl {c_k}|^2=|\bsl {k}\times \bsl {c_k}|^2+|\bsl {k}\cdot \bsl {c_k}|^2$.
					\end {proof}

\section*{Acknowledgement} This work was carried out under the financial support of
the Engineering and Physical Sciences Research Council (Grant EP/L018802/1 ``Mathematical foundations of metamaterials: homogenisation, dissipation and operator theory''). 


\end{document}